\definecolor{Gred}{RGB}{219, 50, 54}
\definecolor{Ggreen}{RGB}{60, 186, 84}
\definecolor{Gblue}{RGB}{72, 133, 237}
\definecolor{Gyellow}{RGB}{247, 178, 16}
\definecolor{ToCgreen}{RGB}{0, 128, 0}
\definecolor{myGold}{RGB}{231,141,20}
\definecolor{myBlue}{rgb}{0.19,0.41,.65}
\definecolor{myPurple}{RGB}{175,0,124}
\title{Entanglement is Necessary for Optimal Quantum Property Testing}
\author{
Sebastien Bubeck \\
\texttt{sebubeck@microsoft.com} \\
Microsoft Research
\and Sitan Chen
\thanks{This work was supported in part by a Paul and Daisy Soros Fellowship, NSF CAREER Award CCF-1453261, and NSF Large CCF-1565235. This work was done in part while S.C. was an intern at Microsoft Research AI.}%Fake Zhao Song 
\\
\texttt{sitanc@mit.edu}\\
MIT 
\and Jerry Li %Real Zhao Song
\\
\texttt{jerrl@microsoft.com} \\
Microsoft Research
}
\newcommand{\Proj}{\mathbf{\Pi}}
\newcommand{\Sig}{\mathbf{\Sigma}}
\newcommand{\Lam}{\mathbf{\Lambda}}
\newcommand{\mm}{\rho_{\mathsf{mm}}}
\newcommand{\Mul}{\mathop{\text{Mul}}}
\newcommand{\hatM}{\widehat{M}}
\newcommand{\T}{\vec{T}}
\newcommand{\Sym}{\mathcal{S}}
\newcommand{\Wg}{\mathop{\mathrm{Wg}}}
\newcommand{\diag}{\mathop{\text{diag}}}
\newcommand{\calU}{\mathcal{U}}
\newcommand{\A}{\mathbf{A}}
\newcommand{\B}{\mathbf{B}}
\newcommand{\U}{\mathbf{U}}
\newcommand{\V}{\mathbf{V}}
\newcommand{\W}{\mathbf{W}}
\newcommand{\X}{\mathbf{X}}
\newcommand{\N}{\mathbb{N}}
\newcommand{\vh}{\mathbf{h}}
\newcommand{\nullstate}{\boldsymbol{\rho}^{\le N}_0}
\newcommand{\altstate}{\boldsymbol{\rho}^{\le N}_1}
\newcommand{\nullhypo}[1]{p^{\le #1}_0}
\newcommand{\althypo}[1]{p^{\le #1}_1}
\newcommand{\nullmarg}[1]{p^{#1}_0}
\newcommand{\altmarg}[1]{p^{#1}_1}
\newcommand{\pannull}[1]{p^{\le #1}_0}
\newcommand{\panalt}[1]{p^{\le #1}_1}
\begin{document}

\maketitle

\begin{abstract}
	There has been a surge of progress in recent years in developing algorithms for testing and learning quantum states that achieve optimal copy complexity~\cite{o2015quantum,o2016efficient,haah2017sample,o2017efficient,acharya2019measuring,buadescu2019quantum}. Unfortunately, they require the use of entangled measurements across many copies of the underlying state and thus remain outside the realm of what is currently experimentally feasible. A natural question is whether one can match the copy complexity of such algorithms using only independent---but possibly adaptively chosen---measurements on individual copies.

	We answer this in the negative for arguably the most basic quantum testing problem: deciding whether a given $d$-dimensional quantum state is equal to or $\epsilon$-far in trace distance from the maximally mixed state. While it is known how to achieve optimal $O(d/\epsilon^2)$ copy complexity using entangled measurements, we show that with independent measurements, $\Omega(d^{4/3}/\epsilon^2)$ is necessary, even if the measurements are chosen adaptively. This resolves a question posed in~\cite{wright2016learn}. To obtain this lower bound, we develop several new techniques, including a chain-rule style proof of Paninski's lower bound for classical uniformity testing, which may be of independent interest.
\end{abstract}

%!TEX root = new_quantum_paninski.tex

\section{Introduction}
\label{sec:intro}
This paper considers the problem of \emph{quantum state certification}.
Here, we are given $N$ copies of an unknown mixed state $\rho \in \C^{d \times d}$ and a description of a known mixed state $\sigma$, and our goal is to make measurements on these copies\footnote{Formally, a measurement is specified by a \emph{positive operator-valued measure (POVM)}, which is given by a set of positive-definite Hermitian matrices $\{M_x\}$ summing to the identity, and the probability of observing measurement outcome $x$ is equal to $\Tr(\rho M_x)$. See Definition~\ref{defn:povm} for details.} and use the outcomes of these measurements to distinguish whether $\rho = \sigma$, or if it is $\epsilon$-far from $\sigma$ in trace norm.
An important special case of this is when $\sigma$ is the maximally mixed state, in which case the problem is known as \emph{quantum mixedness testing}.

This problem is motivated by the need to verify the output of quantum computations.
In many applications, a quantum algorithm is designed to prepare some known $d$-dimensional mixed state $\sigma$.
However, due to the possibility of noise or device defects, it is unclear whether or not the output state is truly equal to $\sigma$.
Quantum state certification allows us to verify the correctness of the quantum algorithm.
In addition to this more practical motivation, quantum state certification can be seen as the natural non-commutative analogue of \emph{identity testing} of (classical) probability distributions, a well-studied problem in statistics and theoretical computer science.

Recently,~\cite{o2015quantum} demonstrated that  $\Theta (d / \epsilon^2)$ copies are necessary and sufficient to solve quantum mixedness testing with good confidence.
Subsequently,~\cite{buadescu2019quantum} demonstrated that the same copy complexity suffices for quantum state certification.
Note that these copy complexities are sublinear in the number of parameters in $\rho$, and in particular, are less than the $\Theta (d^2 / \epsilon^2)$ copies necessary to learn $\rho$ to $\epsilon$ error in trace norm~\cite{o2016efficient,haah2017sample}.

To achieve these copy complexities, the algorithms in~\cite{o2015quantum,buadescu2019quantum} heavily rely on entangled measurements.
These powerful measurements allow them to leverage the representation theoretic structure of the underlying problem to dramatically decrease the copy complexity.
However, this power comes with some tradeoffs.
Entangled measurements require that all $N$ copies of $\rho$ are measured simultaneously.
Thus, all $N$ copies of $\rho$ must be kept in quantum memory without any of them de-cohering.
Additionally, the positive-operator valued measure (POVM) elements that formally define the quantum measurement must all be of size $Nd \times Nd$; in particular, the size of the POVM elements scales with $N$.
Both of these issues are problematic for using any of these algorithms in practice~\cite{cotler2020quantum}.
Entangled measurements are also necessary for the only known sample-optimal algorithms for quantum tomography~\cite{o2016efficient,haah2017sample,o2017efficient}.

This leads to the question: can these sample complexities be achieved using weaker forms of measurement?
There are two natural classes of such restricted measurements to consider:
\begin{itemize}
	\item an (unentangled) \emph{nonadaptive measurement} fixes $N$ POVMs ahead of time, measures each copy of $\rho$ using one of these POVMs, then uses the results to make its decision.
	\item an (unentangled) \emph{adaptive measurement} measures each copy of $\rho$ sequentially, and can potentially choose its next POVM based on the results of the outcomes of the previous experiments.
\end{itemize}
It is clear that arbitrarily entangled measurements are strictly more general than adaptive measurements, which are in turn strictly more general than nonadaptive ones.
However, both nonadaptive and adaptive measurements have the advantage that the  quantum memory they require is substantially smaller than what is required for a generic entangled measurement.
In particular, only one copy of $\rho$ need be prepared at any given time, as opposed to the $N$ copies that must simultaneously be created, if we use general entangled measurements.

Separating the power of entangled vs. nonentangled measurements for such quantum learning and testing tasks was posed as an open problem in~\cite{wright2016learn}.
In this paper, we demonstrate the first such separations for quantum state certification, and to our knowledge, the first separation between adaptive measurements and entangled measurements without any additional assumptions on the measurements, for any quantum estimation task.

We first show a sharp characterization of the copy complexity of quantum mixedness testing with nonadaptive measurements:
\begin{theorem}
	If only unentangled, nonadaptive measurements are used, $\Theta(d^{3/2}/\epsilon^2)$ copies are necessary and sufficient to distinguish whether $\rho\in\C^{d\times d}$ is the maximally mixed state, or if $\rho$ has trace distance at least $\epsilon$ from the maximally mixed state, with probability at least $2/3$.
	\label{thm:nonadaptive}
\end{theorem}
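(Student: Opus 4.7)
The $O(d^{3/2}/\epsilon^2)$ upper bound is obtained by reducing to classical $\ell_2$ uniformity testing on $[d]$. The tester draws a single Haar-random orthonormal basis $\{U|i\rangle\}_{i\in[d]}$ and measures every one of the $N$ copies in that basis, producing i.i.d.\ outcomes from the classical distribution $p_U(i) = \langle i|U^*\rho U|i\rangle$; this is uniform on $[d]$ exactly when $\rho = \mm$. Using the $2$-design property of the Haar measure one computes
\[
\E_U \sum_i (p_U(i) - 1/d)^2 \;=\; \frac{d\Tr(\rho^2) - 1}{d(d+1)},
\]
and Cauchy-Schwarz turns the trace-distance gap $\|\rho-\mm\|_1\ge\epsilon$ into $\Tr(\rho^2)-1/d\ge \epsilon^2/d$, so the expected $\ell_2^2$ discrepancy is $\Omega(\epsilon^2/d^2)$. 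A fourth-moment estimate (again via Weingarten calculus) promotes this to a constant-probability guarantee over $U$, and a standard collision-based tester resolves an $\ell_2^2$ gap of size $\sigma^2 = \epsilon^2/d^2$ on $[d]$ with $O(1/(\sigma^2\sqrt{d})) = O(d^{3/2}/\epsilon^2)$ samples.

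\textbf{Plan, lower bound.} I would apply Le Cam's two-point method with the Haar-random hard family $\rho_V = \mm + (2\epsilon/d)\, V D V^*$, where $D = \diag(+1,-1,\ldots,+1,-1)$ and $V$ is Haar-distributed; each such state has trace distance exactly $\epsilon$ from $\mm$. Because the POVMs $\{M_x^{(t)}\}_{t\le N}$ are fixed in the nonadaptive setting, the joint outcome distribution under the Bayesian alternative factorizes over copies, and a direct expansion, combined with the fact that $\Tr(D)=0$ kills the terms linear in $V$ or $W$, gives
\[
1 + \chi^2(\panalt{N}\,\|\,\pannull{N}) \;=\; \E_{V,W}\prod_{t=1}^N\!\Bigl(1 + \tfrac{4\epsilon^2}{d}\,Q_t(V,W)\Bigr), \qquad Q_t(V,W) \;=\; \sum_x \frac{\Tr(M_x^{(t)}VDV^*)\,\Tr(M_x^{(t)}WDW^*)}{\Tr(M_x^{(t)})},
\]
where $V,W$ are independent Haar unitaries. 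Using $\prod_t(1+a_t)\le \exp(\sum_t a_t)$ then reduces the lower bound to controlling $\sum_{t,s}\E_{V,W}[Q_t Q_s]$.

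\textbf{Main obstacle.} The technical crux will be a Weingarten estimate showing that, uniformly over all POVMs and indices $t,s$, $\E_{V,W}[Q_t Q_s] = O(1/d)$, together with analogous bounds on higher-order joint moments. This step is delicate for rank-$1$ POVMs: the normalized elements $M_x^{(t)}/\Tr(M_x^{(t)})$ are then pure states, and the individual Hilbert-Schmidt inner products $\Tr(M_x^{(t)} M_y^{(s)})/(\Tr(M_x^{(t)})\Tr(M_y^{(s)}))$ appearing after Haar integration can be of order one. The only global handle on these quantities is the normalization $\sum_x M_x^{(t)} = I$, which must be leveraged via Cauchy-Schwarz to obtain the aggregate $O(1/d)$ bound. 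Granting this, collecting the $N^2$ pairwise contributions yields $\chi^2 = O(N^2\epsilon^4/d^3)$, which is $o(1)$ precisely when $N = o(d^{3/2}/\epsilon^2)$, matching the upper bound.
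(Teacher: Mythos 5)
Your plan follows essentially the same route as the paper: the upper bound measures all copies in a single Haar-random orthonormal basis and reduces to a classical $\ell_2$ uniformity tester (this is exactly Algorithm~\ref{alg:test_mixed} and Theorem~\ref{thm:tester}), and the lower bound uses Le Cam's two-point method against the Haar-rotated $\pm$-perturbation of $\mm$ (Construction~\ref{constr:quantum}), with the $\chi^2$ divergence factoring over copies by nonadaptivity exactly as in Lemma~\ref{lem:nonadaptive_key_lemma}; your $Q_t(V,W)$ is, up to a $\tfrac{4\epsilon^2}{d}$ normalization, the paper's $\phi^{\U,\U'}_{\calM}$ from Definition~\ref{defn:main_quantities}. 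There are two minor technical variations. First, for the upper bound you propose Paley--Zygmund via a degree-$4$ Weingarten computation to show that $\|p_U - u\|_2^2 = \Omega(\epsilon^2/d^2)$ with constant probability, whereas the paper instead gets two-sided concentration of $\|\diag(\U^\dagger \X' \U)\|_{HS}$ from a log-Sobolev-type Lipschitz concentration inequality on $U(d)$ (Theorem~\ref{thm:deusexmachina}, Lemma~\ref{lem:basicZtail}); both should work, with your route trading a longer moment computation for avoiding the concentration machinery. Second, for the lower bound you bound $\prod_t(1+a_t)\le\exp(\sum_t a_t)$ and then aim to control second and higher joint moments of $\sum_t Q_t$ across $N$ (possibly distinct) POVMs; the paper first applies H\"older's inequality to reduce to a single worst-case POVM ($\chi^2 \le \max_t \E[(1+\phi^{\U,\U'}_t)^N]-1$) and then integrates a sub-Gaussian/sub-exponential tail bound for $\phi^{\U,\U'}_{\calM}$ (Theorem~\ref{thm:rho_tail}) by parts. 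The H\"older reduction is tidier in that it avoids cross-POVM terms entirely. You have correctly identified the genuine technical crux: a POVM-uniform concentration estimate for $\phi^{\U,\U'}_{\calM}$, using only $\sum_x M_x = \Id$ as leverage, which is exactly what the paper's Lemmas~\ref{lem:secondmoment}--\ref{lem:Glipschitz} and Theorem~\ref{thm:rho_tail} supply via the Lipschitz-concentration route; note that a bare second-moment bound plus the trivial bound $|\phi^{\U,\U'}_{\calM}|\le\epsilon^2$ would not suffice on its own, since $\exp(\tfrac{4\epsilon^2}{d}\sum_t Q_t)$ can be as large as $\exp(N\epsilon^2)$, so genuine tail (or all-moment) control is needed, as you anticipate. (Also, a trivial point: with your normalization $\rho_V = \mm + (2\epsilon/d) V D V^*$ the trace distance is $2\epsilon$, not $\epsilon$; harmless.)
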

\noindent
Second, we show that $\omega(d)$ copies are necessary, even with adaptive measurements. 
We view this as our main technical contribution.
Formally:
\begin{theorem}
	If only unentangled, possibly adaptive, measurements are used, $\Omega(d^{4/3}/\epsilon^2)$ copies are necesssary to distinguish whether $\rho\in\C^{d\times d}$ is the maximally mixed state, or has trace distance at least $\epsilon$ from the maximally mixed state, with probability at least $2/3$.
	\label{thm:main}
\end{theorem}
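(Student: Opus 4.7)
The plan is to apply Le Cam's method with a Haar-randomized version of Paninski's classical hard instance. Take the alternative family to consist of states
\[
\rho_{\sigma,U} = U\bigl(\tfrac{1}{d}I + \tfrac{\epsilon}{d}D_\sigma\bigr) U^\dagger,
\]
where $U$ is Haar-distributed on $\mathsf{U}(d)$ and $D_\sigma$ is a traceless diagonal matrix with $\pm 1$ entries indexed by random signs $\sigma$. Writing $p_0$ for the transcript distribution of the adaptive protocol under the maximally mixed state and $p_1$ for the transcript distribution under a random draw from this family, it suffices (by Pinsker) to show $\mathrm{KL}(p_1 \,\|\, p_0) = o(1)$ whenever $N = o(d^{4/3}/\epsilon^2)$. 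Haar-randomizing over $U$ is essential: it makes the alternative rotationally invariant, so one can uniformly control the worst direction via Weingarten calculus, regardless of which directions the protocol probes.

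The first ingredient is a new chain-rule style proof of the classical Paninski $\Omega(\sqrt{d}/\epsilon^2)$ uniformity testing lower bound, which will serve as a template. Instead of the usual second-moment method, use the KL chain rule
\[
\mathrm{KL}(p_1 \,\|\, p_0) = \sum_{t=1}^N \mathbb{E}_{h_t \sim p_1}\bigl[\mathrm{KL}\bigl(p_1(\cdot \mid h_t) \,\big\|\, p_0(\cdot \mid h_t)\bigr)\bigr],
\]
and bound each conditional contribution by analyzing the posterior on $\sigma$ given the transcript prefix $h_t$. The advantage of this reformulation, relative to the standard $\chi^2$ argument, is that it extends naturally to adaptive quantum measurements, where the POVM at step $t$ may depend on $h_t$ in an arbitrary way.

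To port this to the quantum setting, write the per-step likelihood ratio at step $t$ as $1 + \epsilon\,\phi_t^{(\sigma,U)}$, where $\phi_t^{(\sigma,U)} = \Tr(M_{x_t}^{(t)} U D_\sigma U^\dagger)/\Tr(M_{x_t}^{(t)})$ and $M_{x_t}^{(t)}$ is the POVM element selected given history $h_t$. Note that $\mathbb{E}_{p_0}[\phi_t^{(\sigma,U)} \mid h_t] = 0$. Expanding the conditional KL at step $t$ to second order, the contribution reduces to analyzing the cross moment
\[
\mathbb{E}_{(\sigma,U),(\sigma',U') \sim \pi_t}\!\left[\mathbb{E}_{x_t \sim p_0}\!\left[\phi_t^{(\sigma,U)}\phi_t^{(\sigma',U')}\right]\right]
\]
against the posterior $\pi_t$ on $(\sigma, U)$. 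Integrating $U, U'$ against the Haar measure by Weingarten calculus converts this into spectral functionals of the POVM (involving quantities like $\sum_x \Tr(M_x^2)/\Tr(M_x)$), weighted by sign-overlap terms in $\sigma, \sigma'$.

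The main obstacle is that an adaptive protocol can concentrate its POVM elements on directions suggested by the current posterior, a \emph{rank-one focusing} strategy unavailable to nonadaptive protocols; this accounts for the gap between the tight $d^{3/2}/\epsilon^2$ nonadaptive bound and the weaker $d^{4/3}/\epsilon^2$ bound achievable in the adaptive case. The hardest step will be to show that Haar symmetry, together with anti-concentration of the posterior over the relevant directions, still prevents any adaptive strategy from accumulating more than $O(N\epsilon^2/d^{4/3})$ total KL across all $N$ steps; requiring this to be $o(1)$ then yields the claimed $N = \Omega(d^{4/3}/\epsilon^2)$ bound.
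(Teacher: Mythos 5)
Your high-level plan is the same as the paper's: Haar-randomize Paninski's instance, apply Pinsker, and control $\mathrm{KL}$ via the chain rule, expanding each conditional term to second order and reducing to cross moments of per-step likelihood-ratio factors. That much matches Lemma~\ref{lem:main_chain} and Lemma~\ref{lem:rewrite}. However, you stop precisely at the point where the actual content of the proof begins, and you make a substantive error in diagnosing why the exponent is $4/3$ rather than $3/2$.

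First, the technical core is missing. After the chain-rule reduction you reach, in your notation, a sum of terms of the form $\E_{x_{<t}\sim p_0}\bigl[\Delta(x_{<t})^{-1}\E_{\U,\U'}[\phi^{\U,\U'}_{x_{<t}}\,\Psi^{\U,\U'}_{x_{<t}}]\bigr]$; you assert that Haar symmetry plus ``anti-concentration of the posterior'' bounds each term by $O(\epsilon^2/d^{4/3})$, but you never say how. The paper needs three separate, nontrivial ingredients here, and none of them is routine Weingarten bookkeeping: (a) a uniform-over-POVMs sub-Gaussian/sub-exponential tail for $\phi^{\U,\U'}_{\calM}$ (Theorem~\ref{thm:rho_tail}), proved via Lipschitz concentration on $U(d)$ (Theorem~\ref{thm:deusexmachina}) rather than by direct moment computation; (b) a lower bound $\Delta(x_{<t})\ge (1-O(\epsilon^2/d))^{t-1}$ on the change-of-measure factor valid for \emph{every} adaptive transcript (Lemma~\ref{lem:denom}), whose proof exploits the $\T\mapsto -\X$ involution and convexity; and (c) a second-moment bound $\E[(\Psi^{\U,\U'}_{x_{<t}})^2]\le\exp(O(t\epsilon^2/d))$ (Lemma~\ref{lem:psi_main}), which does not follow from the naive bound $(1+O(\epsilon^2))^{t-1}$ that suffices classically and instead requires the uniform moment estimate of Theorem~\ref{thm:moments} together with an iterated-H\"older telescoping. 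These three are then assembled by truncating on $\{\phi^{\U,\U'}_{x_{<t}}>\tau\}$ with $\tau=\epsilon^2/d^{4/3}$ and balancing the good and bad regions. Your proposal neither identifies the Lipschitz/log-Sobolev concentration mechanism nor the truncation scheme, and without these the estimate $O(N\epsilon^2/d^{4/3})$ has no proof behind it.

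Second, your explanation of the $d^{4/3}$ vs.\ $d^{3/2}$ gap is wrong in a way that matters. You claim the weaker exponent is ``achievable in the adaptive case'' because an adaptive protocol can ``rank-one focus'' on directions suggested by the posterior. The paper makes no such claim and, to the contrary, attributes the gap to \emph{looseness in the chain-rule technique}: already in the purely classical setting, the same chain-rule + truncation argument gives only $\Omega(d^{1/3}/\epsilon^2)$ for Paninski's instance unless one exploits the exact product structure of the random-sign prior (Section~\ref{sec:warmup} and Appendix~\ref{app:paninski}). The paper explicitly leaves open whether adaptive measurements suffer a genuine loss relative to nonadaptive ones. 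Asserting an algorithmic advantage here is not supported by anything in the argument, and if you tried to make the ``rank-one focusing'' heuristic rigorous you would be proving an \emph{upper} bound you have no evidence for.
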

\noindent
As quantum state certification is a strict generalization of mixedness testing, Theorems~\ref{thm:nonadaptive} and~\ref{thm:main} also immediately imply separations for that problem as well.
Note that the constant $2/3$ in the above theorem statements is arbitrary and can be replaced with any constant greater than $1/2$. We also remark that our lower bounds make no assumptions on the number of outcomes of the POVMs used, which can be infinite (see Definition~\ref{defn:povm}).

%Additionally, the POVMs in~\cite{o2015quantum,buadescu2019quantum} are constructed in 

%Here, we are given $n$ samples from a probability distribution $p$ over $\{1, \ldots, d\}$, and the description of a known probability distribution $q$, and the goal is to determine whether $p = q$ or whether the total variation distance between $p$ and $q$ is at least $\epsilon$.
%The analog of quantum mixedness testing is the special case of identity testing when $q$ is the uniform distribution over $\{1, \ldots, d\}$, which is known as \emph{uniformity testing}.
%In \cite{paninski2008coincidence}, Paninski showed the following tight sample complexity lower bound for uniformity testing:
%\begin{theorem}[Theorem 4, \cite{paninski2008coincidence}]
%	$\Theta(\sqrt{d}/\epsilon^2)$ samples are necessary and sufficient to test whether a distribution $p$ is $\epsilon$-far from the uniform distribution.
%	\label{thm:paninski}
%\end{theorem}
%\noindent
%That is, the sample complexity of classical uniformity testing grows as the square root of the domain size.
%Subsequently,~\cite{valiant2017automatic}, building off of a number of papers in the TCS community, demonstrated that $\Theta (\sqrt{d} / \epsilon^2)$ is also the sample complexity of classical identity testing.

\subsection{Overview of our techniques}
In this section, we give a high-level description of our techniques.
We start with the lower bounds.

\paragraph{``Lifting'' classical lower bounds to quantum ones}
Our lower bound instance can be thought of as the natural quantum analogue of Paninski's for (classical) uniformity testing:
\begin{theorem}[Theorem 4, \cite{paninski2008coincidence}]
	$\Omega(\sqrt{d}/\epsilon^2)$ samples are necessary  to distinguish whether a distribution $p$ over $\{1, \ldots, d\}$ is $\epsilon$-far from the uniform distribution in total variation distance, with confidence at least $2/3$.
	\label{thm:paninski}
\end{theorem}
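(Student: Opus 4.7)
The plan is to combine Paninski's classical two-point construction with a Bayesian chain-rule decomposition of the KL divergence, in place of the usual direct second-moment computation. Assume $d$ is even, partition $[d]$ into pairs $\{2i-1, 2i\}$, and for each sign vector $z \in \{\pm 1\}^{d/2}$ set $p_z(2i-1) = (1 + 2\epsilon z_i)/d$ and $p_z(2i) = (1 - 2\epsilon z_i)/d$, so each $p_z$ is a valid probability distribution on $[d]$ with $\mathrm{TV}(p_z, p_0) = \epsilon$, where $p_0$ is uniform. Writing $\pannull{N} := p_0^{\otimes N}$ and $\panalt{N} := \mathbb{E}_z[p_z^{\otimes N}]$ with $z$ uniform on $\{\pm 1\}^{d/2}$, Le Cam's two-point method plus Pinsker reduces the task to showing $\mathrm{KL}(\pannull{N} \| \panalt{N}) = o(1)$ whenever $N = o(\sqrt{d}/\epsilon^2)$, since every $p_z$ in the support of $\panalt{N}$ is an $\epsilon$-far alternative.

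Since $\pannull{N}$ is a product, the chain rule for KL gives
\begin{equation*}
	\mathrm{KL}(\pannull{N} \| \panalt{N}) \;=\; \sum_{k=1}^N \mathbb{E}_{x_{<k} \sim \pannull{k-1}}\bigl[\mathrm{KL}\bigl(p_0 \,\|\, q_k(\cdot \mid x_{<k})\bigr)\bigr],
\end{equation*}
where $q_k(\cdot \mid x_{<k}) := \mathbb{E}_{z \mid x_{<k}}[p_z]$ is the posterior predictive. The posterior on $z$ factorizes across pairs, with $z_i \mid x_{<k}$ proportional to $(1 + 2\epsilon z_i)^{f_i^+}(1 - 2\epsilon z_i)^{f_i^-}$, where $f_i^{\pm}$ is the count of $2i\mp 1$ in $x_{<k}$. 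Hence the posterior mean is $\bar z_i = \tanh\bigl((f_i^+ - f_i^-)\,\mathrm{arctanh}(2\epsilon)\bigr)$, and the universal inequality $|\tanh(t)| \le |t|$ yields the deterministic bound $|\bar z_i| \le 2\epsilon\,|f_i^+ - f_i^-|$. Since $q_k(2i \mp 1 \mid x_{<k}) = (1 \pm 2\epsilon \bar z_i)/d$, pairing the two log-terms within each pair gives $\mathrm{KL}(p_0 \| q_k(\cdot \mid x_{<k})) = -\sum_i (1/d)\log(1 - 4\epsilon^2 \bar z_i^2) \lesssim (\epsilon^2/d)\,\|\bar z\|_2^2 \lesssim (\epsilon^4/d)\sum_i (f_i^+ - f_i^-)^2$.

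Under $\pannull{k-1}$, conditioning on the pair totals $g_i := f_i^+ + f_i^-$ makes $f_i^+ \sim \mathrm{Bin}(g_i, 1/2)$, so $\mathbb{E}[(f_i^+ - f_i^-)^2 \mid g_i] = g_i$ and $\mathbb{E}[\sum_i (f_i^+ - f_i^-)^2] = k-1$. Summing the chain-rule terms over $k = 1, \ldots, N$ yields $\mathrm{KL}(\pannull{N} \| \panalt{N}) \lesssim \epsilon^4 N^2 / d$, which is $o(1)$ for $N = o(\sqrt{d}/\epsilon^2)$ as required. The main subtlety in this argument is identifying that the chain-rule increment reduces cleanly to a quadratic form in the posterior mean $\bar z$, rather than depending on higher-order posterior moments: this reduction relies both on the paired structure (which cancels the linear-in-$\bar z_i$ term in the log-expansion via $\log(1+t)+\log(1-t) = \log(1-t^2)$) and on the uniform bound $|\tanh| \le |\cdot|$ (which removes the need to condition on a high-probability event on the counts $f_i^\pm$). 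Once this reduction is in place, the second-moment computation of $\sum_i (f_i^+ - f_i^-)^2$ under the uniform null is routine, and this chain-rule template is the natural starting point for the quantum generalization pursued elsewhere in the paper.
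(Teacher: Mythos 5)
Your proof is correct, and it is a genuine variant of the chain-rule approach the paper uses in Appendix~\ref{app:paninski} — but it differs in two meaningful ways. First, you decompose $\KL{\pannull{N}}{\panalt{N}}$ rather than $\KL{\panalt{N}}{\pannull{N}}$: your chain-rule increments are $\KL{p_0}{q_k(\cdot\mid x_{<k})}$ with the outer expectation directly under the null (no change of measure needed), whereas the paper works with the forward direction $\KL{\panalt{N}}{\pannull{N}}$ as in Lemma~\ref{lem:main_chain}, upper bounds each conditional increment by a $\chi^2$-divergence, and then does a change of measure back to $\pannull{t-1}$. Both directions are legitimate Pinsker inputs. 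Second, you compute the posterior predictive $q_k$ exactly via the per-pair posterior mean $\bar z_i = \tanh\bigl((f_i^+ - f_i^-)\,\mathrm{arctanh}(2\epsilon)\bigr)$, use the pairing to reduce $\KL{p_0}{q_k}$ to $-\tfrac{1}{d}\sum_i\log(1-4\epsilon^2\bar z_i^2)$, and then apply $|\tanh t|\le|t|$; the paper instead manipulates the explicit histogram quantities $A^{h_1,h_2}$, $B^{h_1,h_2}$ and the algebraic bound $(B^{h_1,h_2})^2/A^{h_1,h_2}\le A^{h_1,h_2}-\exp(-(h_1-h_2)^2\epsilon'^2/2)$ followed by multinomial/binomial moment computations. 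Both arrive at the same increment bound $Z_t\lesssim \epsilon^4 t/d$ and the same $\epsilon^4N^2/d$ final estimate. Your Bayesian-posterior route is arguably cleaner and more conceptual for Paninski's classical instance because the prior on $z$ is a product measure, so the posterior factorizes across pairs; the price is that this structure disappears in the quantum problem, where the posterior of a Haar-random $\U$ given measurement outcomes is intractable — which is exactly why the paper adopts the reverse KL direction and the $\chi^2$ surrogate that \emph{does} generalize to Construction~\ref{constr:quantum}. Two small points of hygiene: your bound $|\bar z_i|\le 2\epsilon|f_i^+-f_i^-|$ should really be $|\bar z_i|\le \mathrm{arctanh}(2\epsilon)\,|f_i^+-f_i^-| = O(\epsilon)|f_i^+-f_i^-|$ (since $\mathrm{arctanh}(2\epsilon)\ge 2\epsilon$, not $\le$), and the step $-\log(1-4\epsilon^2\bar z_i^2)\lesssim \epsilon^2\bar z_i^2$ needs $\epsilon$ bounded away from $1/2$ (e.g.\ $\epsilon\le 1/3$), which is a standard and harmless WLOG assumption.
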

\noindent
At a high level, Paninski demonstrates that it is statistically impossible to distinguish between the distribution $p_0^{\leq N}$ of $N$ independent draws from the uniform distribution, and the distribution $p_1^{\leq N}$ of $N$ independent draws from a random perturbation of the uniform distribution, where the marginal probability of each element in $\{1, \ldots, d\}$ has been randomly perturbed by $\pm \epsilon / d$ (see Example~\ref{example:paninski}).

The hard instance we consider can be viewed as the natural quantum analogue of Paninski's construction.
Roughly speaking, rather than simply perturbing the marginal probabilities of every element in $\{1, \ldots, d\}$, which corresponds to randomly perturbing the diagonal elements of the mixed state, we also randomly rotate it (see Construction~\ref{constr:quantum}).
We note that this hard instance is not novel and has been considered before in similar settings~\cite{o2015quantum,wright2016learn,haah2017sample}.
However, our analysis technique is quite different from previous bounds, especially in the adaptive setting.

The technical crux of Paninski's lower bound is to upper bound the total variation distance between $p_0^{\leq N}$ and $p_1^{\leq N}$ in terms of the $\chi^2$-divergence between the two.
This turns out to have a simple, explicit form, and can be calculated exactly.
This works well because, conditioned on the choice of the random perturbation in $p_1^{\leq N}$, both of the distributions $p_0^{\leq N}$ and $p_1^{\leq N}$ have a product structure, as they consist of $N$ independent samples.

This product structure still holds true in the quantum case when we restrict to non-adaptive measurements.
This allows us to do a more involved version of Paninski's calculation in the quantum case and thus obtain the lower bound in Theorem~\ref{thm:nonadaptive}.

However, this product structure breaks down completely in the adaptive setting, as now the POVMs, and hence, the measurement outcomes that we observe, for the $t$-th copy of $\rho$, can depend heavily on the previous outcomes.
As a result, the $\chi^2$-divergence between the analogous quantities to $p_0^{\leq N}$ and $p_1^{\leq N}$ no longer have a nice, closed form, and it is not clear how to proceed using Paninski's style of argument.

Instead, inspired by the literature on bandit lower bounds~\cite{auer2002nonstochastic,bubeck2012regret}, we upper bound the total variation distance between $p_0^{\leq N}$ and $p_1^{\leq N}$ by the KL divergence between these two quantities.
The primary advantage of doing so is that the KL divergence satisfies the chain rule.
This allows us to partially disentangle how much information that the $t$-th copy of $\rho$ gives the algorithm, conditioned on the outcomes of the previous experiments.

At present, this chain-rule formulation of Paninski's lower bound seems to be somewhat lossy.
Even in the classical case, we need additional calculations tailored to Paninski's instance to recover the $\Omega(\sqrt{d}/\epsilon^2)$ bound for uniformity testing (see Appendix~\ref{app:paninski}), without which our approach can only obtain a lower bound of $\Omega (d^{1/3} /\epsilon^2)$ (see Section~\ref{sec:warmup}).
At a high level, this appears to be why we do not obtain a lower bound of $\Omega(d^{3/2} / \epsilon^2)$ for adaptive measurements.
We leave the question of closing this gap as an interesting future direction.

\paragraph{``Projecting'' quantum upper bounds to classical ones}
While the lower bound techniques we employ are motivated by the lower bounds for classical testing, they do not directly use any of those results.
In contrast, to obtain our upper bounds, we demonstrate a direct reduction from non-adaptive mixedness testing to classical uniformity testing.
The reduction is as follows.
First, we choose a random orthogonal measurement basis.
Measuring $\rho$ in this basis induces some distribution over $\{1, \ldots, d\}$.
If $\rho$ is maximally mixed, this distribution is the uniform distribution.
Otherwise, if it is far from maximally mixed, then by similar concentration of measure phenomena as used in the proof of the lower bounds, with high probability this distribution will be quite far from the uniform distribution in $L_2$ distance.
Thus, to distinguish these two cases, we can simply run a classical $L_2$ uniformity tester~\cite{chan2014optimal,diakonikolas2014testing,canonne2018testing}.
See Appendix~\ref{subsec:tester} for more details.

\paragraph{Concentration of measure over the unitary group}
In both our lower bounds and upper bounds, it will crucial to carefully control the deviations of various functions of Haar random unitary matrices.
In fact, specializations of quantities we encounter have been extensively studied in the literature on quantum transport in mesoscopic systems, namely the conductance of a chaotic cavity \cite{brouwer1996diagrammatic,beenakker1997random,blanter2000shot,khoruzhenko2009systematic,al2009statistics}, though the tail bounds we need are not captured by these works (see Section~\ref{sec:intuition} for more details).
Instead, we will rely on more general tail bounds~\cite{meckes2013spectral} that follow from log-Sobolev inequalities on the unitary group $U(d)$.
%Paninski's hard instance is given as follows.
%Assume without loss of generality that $d$ is even.
%Then, for each $i = 1, \ldots, d / 2$, and for each $z \in \{\pm 1\}^{d / 2}$, define the distribution $Q_z$, where $Q_z (x) = \frac{1}{d} + (-1)^x \cdot \frac{\epsilon}{d} \cdot z_{\lceil x / 2 \rceil}$, for each $x \in \{1, \ldots, d\}$.
%That is, for each $i = 1, \ldots, d / 2$, we either put $\epsilon / d$ more weight on the $(2i - 1)$-th element, and $\epsilon / d$ less weight on the $(2i)$-th element, or vice versa.
%Paninski then demonstrates that the following two distributions have vanishing total variation distance, unless $N = \Omega (\sqrt{d} / \epsilon^2)$:
%\begin{itemize}
%	\item the distribution $p_0^{\leq N}$ of $N$ independent draws from the uniform distribution.
%	\item the distribution $p_1^{\leq N}$ of $N$ independent draws from $Q_z$, where $z$ is sampled uniformly at random from $\{\pm\}^{d / 2}$.
%\end{itemize}
%\noindent This immediately implies Theorem~\ref{thm:paninski}.
%
%Our hard instance is a natural non-commutative generalization of Paninski's construction, although as we shall see, the analysis is quite different, especially in the adaptive case.
%Rather than using a Haar random sign vector $z$, we use a Haar random rotation.
%That is, for every unitary matrix $U$, we construct the 

\subsection{Related work}
The literature on quantum (and classical) testing and learning is vast and we cannot hope to do it justice here; for conciseness we only discuss some of the more relevant works below.

Quantum state certification fits into the general framework of quantum state property testing problems.
Here the goal is to infer non-trivial properties of the unknown quantum state, using fewer copies than are necessary to fully learn the state.
See~\cite{montanaro2016survey} for a more complete survey on property testing of quantum states.
Broadly speaking, there are two regimes studied here: the asymptotic regime and the non-asymptotic regime.

In the asymptotic regime, the goal is to precisely characterize the exponential convergence of the error as $n \to \infty$ and $d, \epsilon$ are held fixed and relatively small.
In this setting, quantum state certification is commonly referred to as \emph{quantum state discrimination}.
See e.g.~\cite{chefles2000quantum,audenaert2008asymptotic,barnett2009quantum} and references within.
However, this allows for rates which could depend arbitrarily badly on the dimension.

In contrast, we work in the non-asymptotic regime, where the goal is to precisely characterize the rate of convergence as a function of $d$ and $\epsilon$.
The closest work to ours is arguably~\cite{o2015quantum} and~\cite{buadescu2019quantum}.
The former demonstrated that the copy complexity of quantum mixedness testing is $\Theta (d / \epsilon^2)$, and the latter showed that quantum state certification has the same copy complexity.
However, as described previously, the algorithms which achieve these copy complexities heavily rely on entangled measurements.

Another interesting line of work focuses on the case where the measurements are only allowed to be Pauli matrices~\cite{flammia2011direct,flammia2012quantum,da2011practical,aolita2015reliable}.
Unfortunately, even for pure states, these algorithms require $\Omega (d)$ copies of $\rho$.
We note in particular the paper of~\cite{flammia2012quantum}, which gives a $\Omega (d)$ lower bound for the copy complexity of the problem, even when the Pauli measurements are allowed to be adaptively chosen.
However, their techniques do not appear to generalize easily to arbitrary adaptive measurements.

A related task is that of quantum tomography, where the goal is to recover $\rho$, typically to good fidelity or low trace norm error.
The paper~\cite{haah2017sample} showed that $O(d^2 \log (d / \epsilon) / \epsilon^2)$ copies suffice to obtain $\epsilon$ trace error, and that $\Omega (d^2 / \epsilon^2)$ copies are necessary.
Independently,~\cite{o2016efficient} improved their upper bound to $O(d^2 / \epsilon^2)$.
These papers, in addition to~\cite{o2017efficient}, also discuss the case when $\rho$ is low rank, where $o(d^2)$ copy complexity can be achieved.
Notably, all the upper bounds that achieve the tight bound heavily require entanglement.
In~\cite{haah2017sample}, they demonstrate that $\Omega (d^3 / \epsilon^2)$ copies are necessary, if the measurements are nonadaptive.
It is a very interesting question to understand the power of adaptive measurements for this problem as well.

Quantum state certification and quantum mixedness testing are the natural quantum analogues of classical identity testing and uniformity testing, respectively, which both fit into the general setting of (classical) distribution testing.
%~\cite{goldreich2011testing,batu2001testing,paninski2008coincidence,valiant2017automatic,diakonikolas2014testing,chan2014optimal,acharya2015optimal,diakonikolas2016new,canonne2018testing,daskalakis2018distribution}. 
There is again a vast literature on this topic; see e.g.~\cite{canonne2017survey,goldreich2017introduction} for a more extensive treatment of the topic.
Besides the papers covered previously and in the surveys, we highlight a line of work on testing with \emph{conditional sampling oracles}~\cite{canonne2015testing,chakraborty2016power,canonne2014testing,acharya2014chasm,bhattacharyya2018property,kamath2019anaconda}, a classical model of sampling which also allows for adaptive queries.
It would be interesting to see if the techniques we develop here can also be used to obtain stronger lower bounds in this setting.
Adaptivity also plays a major role in property testing of functions~\cite{belovs2016polynomial,chen2017beyond,khot2016n,baleshzar2017optimal,chen2017boolean,belovs2018adaptive}, although these problems appear to be technically unrelated to the ones we consider here.

\subsection{Miscellaneous Notation}
\label{sec:notation}

We gather here useful notation for the rest of the paper.
Let $[d]$ denote the set $\{1,\ldots,d\}$.
Given a finite set $S$, we will use $x\sim_u S$ to denote $x$ sampled uniformly at random from $S$.
Given two strings $s$ and $t$, let $s\circ t$ denote their concatenation.
Given $t>1$ and a sequence $x_1,...,x_{t-1}$, define $x_{<t}\triangleq (x_1,...,x_{t-1})$. We will also sometimes refer to this as $x_{\le t-1}$. Also, let $x_{<1}\triangleq \emptyset$.

Given distributions $P,Q$, the total variation distance between $P$ and $Q$ is $\tvd(P,Q) \triangleq \frac{1}{2}\norm{P - Q}_1$. If $P$ is absolutely continuous with respect to $Q$, let $\frac{\d P}{\d Q}(\cdot)$ denote the Radon-Nikodym derivative. The KL-divergence between $P$ and $Q$ is $\KL{P}{Q}\triangleq \E[x\sim Q]{\frac{\d P}{\d Q}(x)\log\frac{\d P}{\d Q}(x)}$. The chi-squared divergence between $P$ and $Q$ is $\chisq{P}{Q}\triangleq \E[x\sim Q]{\left(\frac{\d P}{\d Q}(x) - 1\right)^2}$.

Let $\norm{\cdot}_1$, $\norm{\cdot}_2$, and $\norm{\cdot}_{HS}$ denote trace, operator, and Hilbert-Schmidt norms respectively.
Let $\mm\triangleq \frac{1}{d}\Id$ denote the maximally mixed state.
Given a matrix $M$, let $\hatM\triangleq M/\Tr(M)$.
Given $\A\in\C^{d\times d}$ and $\pi\in\Sym_n$ with cycle decomposition $(C_1,...,C_m)$, let $\iprod{A}_{\pi} \triangleq \prod^m_{i=1}\Tr(A^{\abs{C_i}})$.

Finally, throughout this work, we will freely abuse notation and use the same symbols to denote probability distributions, their laws, and their density functions.

\paragraph{Roadmap}
The rest of the paper is organized as follows:
\begin{itemize}
	\setlength\itemsep{0.1em}
	\item Section~\ref{sec:strategies}\---- We describe a generic setup that captures Paninski's and our settings as special cases and provide an overview of the techniques needed to show lower bounds in this setup. 
	\item Section~\ref{sec:quantum_prelims}\---- We formalize the notion of quantum property testing via adaptive measurements, define our lower bound instance, and perform some preliminary calculations.
	\item Section~\ref{sec:nonadaptive}\---- Proof of the lower bound in Theorem~\ref{thm:nonadaptive}.
	\item Section~\ref{sec:warmup}\---- As a warmup to the proof of Theorem~\ref{thm:main}, we prove a weaker version of Paninski's lower bound using our chain rule approach.
	\item Section~\ref{sec:proof}\---- Proof of our main result, Theorem~\ref{thm:main}.
	\item Section~\ref{subsec:tails}\---- Proof of certain tail bounds for Haar-random unitary matrices which are crucial to the proofs of Theorems~\ref{thm:nonadaptive} and \ref{thm:main}.
	\item Appendix~\ref{subsec:tester}\---- Proof of the upper bound in Theorem~\ref{thm:nonadaptive}.
	\item Appendix~\ref{app:paninski}\---- A more ad hoc chain rule proof of Paninski's optimal $\Omega(\sqrt{d}/\epsilon^2)$ lower bound.
	\item Appendix~\ref{app:facts}\---- Various helpful technical facts.
\end{itemize}

%!TEX root=./new_quantum_paninski.tex

\section{Lower Bound Strategies}
\label{sec:strategies}

The lower bounds we show in this work are lower bounds on the number of observations needed to distinguish between a simple null hypothesis and a mixture of alternatives. For instance, in the context of classical uniformity testing, the null hypothesis is that the underlying distribution is the uniform distribution over $\brk{d}$, and the mixture of alternatives considered in \cite{paninski2008coincidence} is that the underlying distribution was drawn from a particular \emph{distribution over distributions $p$} which are $\epsilon$-far in total variation distance from the uniform distribution (see Example~\ref{example:paninski}). In our setting, the null hypothesis is that the underlying state is the maximally mixed state $\mm$, and the mixture of alternatives will be a particular \emph{distribution over quantum states $\rho$} which are $\epsilon$-far in trace distance from $\mm$ (see Construction~\ref{constr:quantum}).

Note that in order to obtain dimension-dependent lower bounds, as in classical uniformity testing, it is essential that the alternative hypothesis be a mixture. If the task were instead to distinguish whether the underlying state was $\mm$ or some \emph{specific} alternative state $\rho$, then if we make independent measurements in the eigenbasis of $\rho$, it takes only $O(1/\epsilon^2)$ such measurements to tell apart the two scenarios.

For this reason we will be interested in the following abstraction which contains as special cases both Paninski's lower bound instance for uniformity testing \cite{paninski2008coincidence} and our lower bound instance for mixedness testing, and which itself is a special case of Le Cam's two-point method~\cite{lecam1973convergence}.
We will do this in a few steps.
First, we give a general formalism for what it means to perform possibly adaptive measurements:
\begin{definition}[Adaptive measurements]
	Given an underlying space $\calS$, a natural number $N \in \N$, and a (possibly infinite) universe $\calU$ of measurement outcomes, a \emph{measurement schedule} $A$ using $N$ measurements is any (potentially random) algorithm which outputs $M_1, \ldots, M_N: \calS \to \calU$, where each $M_i$ is a potentially random function.
	We say that $A$ is \emph{nonadaptive} if the choice of $M_i$ is independent of the choice of $M_j$ for all $j \neq i$, and we say $A$ is \emph{adaptive} if the choice of $M_t$ depends only on the outcomes of $M_1, \ldots, M_{t - 1}$ for all $t \in [N]$.
	\label{defn:measure}
\end{definition}
\noindent To instantiate this for the quantum setting, we let the underlying space $\calS$ be the set of mixed states, and we restrict the measurement functions to be (possibly adaptively chosen) POVMs.
See Definition~\ref{def:povm-schedule} for a formal definition.
\begin{definition}
	A \emph{distribution testing task} is specified by two disjoint sets $\calS_0, \calS_1$ in $\calS$.
	For any $N \in \N$, and any measurement schedule $A$, we say that $A$ solves the problem if there exists a (potentially random) post-processing algorithm $f: \calU^N \to \{0, 1\}$ so that for any $\alpha \in \{0 , 1\}$, if $D \in \calS_\alpha$, then
	\[
	\Pr {f \left( M_1 (D) \circ \cdots \circ M_N(D) \right) = \alpha } \geq 2/3 \; ,\]
	where $M_1, \ldots, M_N$ are generated by $A$.
	\label{def:testing}
\end{definition}
\noindent
For instance, to instantiate the quantum mixedness testing setting, we let $\calS$ be the set of mixed states, we let $\calS_0 = \{\mm\}$ be the set containing only $\mm$, the maximally mixed state, and we let $\calS_1 = \{\rho: \| \rho - \mm \|_1 > \epsilon \}$.
Note that the choice of $2/3$ for the constant is arbitrary and can be replaced (up to constant factors in $N$) with any constant strictly larger than $1/2$.
With this, we can now define our lower bound setup:
\begin{definition}[Lower Bound Setup: Simple Null vs. Mixture of Alternatives]
	In the setting of Definition~\ref{def:testing},
	 a \emph{distinguishing task} is specified by a null object $D_0 \in \calS_0$, a set of alternate objects $\{D_\zeta\} \subseteq \calS_1$ parametrized by $\zeta$, and a distribution $\calD$ over $\zeta$.

	For any measurement schedule $A$ which generates measurement functions $M_1, \ldots, M_N$,
	 let $\nullhypo{N} = \nullhypo{N}(A)$ and $\althypo{N}= \althypo{N} (A)$ be distributions over strings $x_{\le N}\in \calU^N$, which we call \emph{transcripts of length $N$}. 
	The distribution $\nullhypo{N}$ corresponds to the distribution of $M_1 (D_0) \circ \cdots \circ M_N (D_0)$.
	The distribution $\althypo{N}$ corresponds to the distribution of of $M_1 (D_\zeta) \circ \cdots \circ M_N (D_\zeta)$, where $\zeta \sim \calD$.
	 	\label{defn:generic}
\end{definition}
\noindent
The following is a standard result which allows us to relate this back to property testing:
\begin{fact}
Let $\calS_0, \calS_1$ be a property, let $N \in \N$, and let $\calA$ be a class of measurement schedules using $N$ measurements.
Suppose that there exists a distinguishing task so that for every $A \in \calA$, we have that $\tvd(\nullhypo{N}(A), \althypo{N}(A)) \leq 1/3$.
Then the distribution testing task cannot be solved with $N$ samples by any algorithm in $\calA$.
	\label{fact:basic-lowerbound}
\end{fact}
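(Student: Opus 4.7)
The plan is to prove the contrapositive via Le~Cam's two-point method applied to a single $\{0,1\}$-valued test. The key observations are that $\althypo{N}(A)$ is by definition a mixture over $\zeta \sim \calD$, so a pointwise-in-$\zeta$ correctness guarantee against each $D_\zeta$ integrates into a single correctness guarantee against $\althypo{N}(A)$, and that total variation distance is exactly characterized as the largest achievable gap between the success probabilities of any Boolean test on the two distributions.

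I would argue as follows. Suppose some $A \in \calA$, together with a post-processing function $f \colon \calU^N \to \{0,1\}$, solves the distribution testing task. Instantiating the success guarantee in Definition~\ref{def:testing} at $D = D_0 \in \calS_0$ gives $\Pr_{x \sim \nullhypo{N}(A)}[f(x) = 1] \leq 1/3$. Instantiating it at each $D = D_\zeta \in \calS_1$ and then averaging over $\zeta \sim \calD$ — which is legitimate precisely because of the mixture structure of $\althypo{N}(A)$ — gives $\Pr_{x \sim \althypo{N}(A)}[f(x) = 1] \geq 2/3$. Using $f$ as a test, the variational characterization of TV distance yields
\[ \tvd(\nullhypo{N}(A), \althypo{N}(A)) \;\geq\; \Pr_{x \sim \althypo{N}(A)}[f(x)=1] - \Pr_{x \sim \nullhypo{N}(A)}[f(x)=1] \;\geq\; 1/3, \]
contradicting the assumed upper bound on the TV distance. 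To turn this into a strict contradiction with $\tvd \leq 1/3$, I would first replace the constant $2/3$ in Definition~\ref{def:testing} by any fixed constant strictly larger than $2/3$; this is harmless since, as the paper notes after Theorem~\ref{thm:main}, the success probability in the definition of a tester is arbitrary among constants greater than $1/2$ up to constant factors in $N$.

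There is no substantive technical obstacle here, which is why the statement is phrased as a \emph{Fact}: it is really just packaging two elementary observations — mixture-averaging of the pointwise success guarantee and the data-processing / variational inequality for total variation distance — into a reusable form. The only care needed in a formal write-up is that the measurement schedule $A$ and post-processing $f$ may both be randomized, so all probabilities above are understood as being taken jointly over this internal randomness together with the measurement outcomes; none of the manipulations are affected.
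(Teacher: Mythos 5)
Your proof is correct and is exactly the standard Le Cam two-point argument that the paper has in mind; the paper itself states this as a ``standard result'' and provides no explicit proof, so there is nothing to compare against beyond noting that your write-up is the canonical one (pointwise success against each $D_\zeta$ averages to success against the mixture $\althypo{N}$, and the variational characterization of total variation distance converts the resulting testing gap into a lower bound on $\tvd$). Your observation about the boundary case at $\tvd = 1/3$ is a genuine but minor looseness in the Fact as literally stated, and either of the standard fixes you mention (slightly raising the success constant, or reading the $\le 1/3$ as strict) resolves it; in the paper's actual applications $\tvd$ is shown to be $o(1)$, so the issue never bites.
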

\noindent
For the remainder of the paper, we will usually implicitly fix a measurement schedule $A$, and just write $\nullhypo{N}$ and $\althypo{N}$.
The properties that we assume (e.g. adaptive or nonadaptive) of this algorithm should be clear from context, if it is relevant.

We next define some important quantities which repeatedly arise in our calculations:
\begin{definition}
	In the setting of Definition~\ref{defn:generic}, for any $t\in[N]$, define $\nullmarg{t}(\cdot|x_{<t}), \altmarg{t}(\cdot|x_{<t})$ to be the respective conditional laws of the $t$-th entry, given preceding transcript $x_{<t}$.
For any $\zeta$, let $\althypo{N}|\zeta$ be the distribution over transcripts from $N$ independent observations from $D_\zeta$.

	Assume additionally that $\althypo{N}|\zeta$ are absolutely continuous with respect to $\nullhypo{N}$, for every $\zeta \in \mathrm{supp} (\calD)$. 
	Then, there will exist functions 
	$\brc{g^{\zeta}_{x_{<t}}(\cdot)}_{t\in[N], x_{<t}\in\calU^{t-1},\zeta\in\mathrm{supp}(\calD)}$, such that for any $\zeta,t,x_{\le t}$, the Radon-Nikodym derivative satisfies \begin{equation}
		\frac{\d\althypo{t}|\zeta}{\d\nullhypo{t}}(x_{\le t}) = \prod^t_{i=1}\left(1 + g^{\zeta}_{x_{<i}}(x_i)\right).
		\label{eq:mixture}
	\end{equation}
	We refer to the $g^{\zeta}_{x_{<t}}(\cdot)$ functions as \emph{likelihood ratio factors}.
\end{definition}

We emphasize that neither $\nullhypo{N}$ nor any of the alternatives $\althypo{N}|\zeta$ is necessarily a product measure. Indeed, this is one of the crucial difficulties of proving lower bounds in the adaptive setting. In the non-adaptive setting, the picture of Definition~\ref{defn:generic} simplifies substantially:

\begin{definition}[Non-adaptive Testing Lower Bound Setup]
	In this case, in the notation of Definition~\ref{defn:generic}, the measurement schedule $A$ is nonadaptive, so $\nullhypo{N}$ and all $\althypo{N}|\zeta$ are product measures. Consequently, the functions $g^{\zeta}_{x_{<t}}$ will depend only on $t$ and not on the particular transcript $x_{<t}$, so we will denote the functions by $\brc{g^{\zeta}_t(\cdot)}_{t\in[N], \zeta\in\mathrm{supp}(\calD)}$.
	\label{defn:nonadaptive_generic}
\end{definition}

Paninski's lower bound for classical uniformity testing \cite{paninski2008coincidence} is an instance of the non-adaptive setup of Definition~\ref{defn:nonadaptive_generic}:

\begin{example}
	Let us first recall Paninski's construction. 
	Here the set $\calS$ is the set of distributions over $[d]$.
	Uniformity testing is the property $S_0 = \{U\}, S_1 = \{U': \tvd(U, U') \geq \epsilon\}$, where $U$ is the uniform distribution over $[d]$.
	In the classical ``sampling oracle'' model of distribution testing, the measurements $M_i$ simply take a distribution $D\in\calS$ and output an independent sample from $D$. In particular, $\calU = \brk{d}$.
	
	To form Paninski's lower bound instance, take $\calD$ to be the uniform distribution over $\brc{\pm 1}^{d/2}$.
	Let the null hypothesis be $D_0$, and let the set of alternate hypotheses be given by $\{D_z\}_{z \in \brc{\pm 1}^{d/2}}$, where $D_z$ the distribution over $\brk{d}$ whose $x$-th marginal is $D_z(x)= \frac{1}{d} + (-1)^x\cdot \frac{\epsilon}{d}\cdot z_{\lceil x/2\rceil}$ for any $x\in\brk{d}$.
	Clearly $D_z \in \calS_1$ for all $z$.

	There is no obviously no adaptivity in what the tester does after seeing each new sample. So the family of likelihood ratio factors $\brc{g^z_t(\cdot)}$ for which \eqref{eq:mixture} holds is given by 
	\begin{equation}
		g^z_t(x) = g^z(x) \triangleq \epsilon(-1)^x\cdot z_{\ceil{x/2}}.
		\label{eq:paninskig}
	\end{equation}
	\label{example:paninski}
\end{example}

The definition of $\nullhypo{N}, \althypo{N}$ in our proofs will be straightforward (see Construction~\ref{constr:quantum}), and by Fact~\ref{fact:basic-lowerbound}, the key technical difficulty is to upper bound the total variation distance between $\nullhypo{N}, \althypo{N}$ in terms of $N$. After recording some notation in Section~\ref{sec:notation}, in Section~\ref{sec:nonadaptive_sketch}, we overview our approach for doing so in the non-adaptive setting of Definition~\ref{defn:nonadaptive_generic}, and in Section~\ref{sec:adaptive_sketch}, we describe our techniques for extending these bounds to the generic, adaptive setting of Definition~\ref{defn:generic}. 

\subsection{Non-Adaptive Lower Bounds}
\label{sec:nonadaptive_sketch}

It is a standard trick to upper bound total variation distance between two distributions in terms of the $\chi^2$-divergence, which is often more amenable to calculations. These calculations are especially straightforward in the non-adaptive setting  of Definition~\ref{defn:nonadaptive_generic}.

\begin{lemma}
	Let $\nullhypo{N}, \althypo{N}, \calD, \brc{g^{\zeta}_t(\cdot)}_{t\in\N,\zeta\in\mathrm{supp}(\calD)}$ be defined as in Definition~\ref{defn:nonadaptive_generic}. As $\nullhypo{N}$ is therefore a product measure, for every $t\in[N]$ denote its $t$-th marginal by $\nullmarg{t}$. Then \begin{equation}
		\frac{1}{2\ln 2}\tvd\left(\althypo{N},\nullhypo{N}\right)^2 \le \chisq{\althypo{N}}{\nullhypo{N}} \le \max_t \E[\zeta,\zeta']*{\left(1 + \E[x_t\sim\nullmarg{t}]*{g^{\zeta}_t(x_t)g^{\zeta'}_t(x_t)}\right)^N} - 1.
	\end{equation}
	\label{lem:nonadaptive_key_lemma}
\end{lemma}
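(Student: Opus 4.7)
The plan is to handle the two inequalities separately, exploiting in the second the fact that the non-adaptive setting of Definition~\ref{defn:nonadaptive_generic} preserves a product structure in the law of the transcript. The first inequality is standard: Pinsker's inequality gives $\tvd(P,Q)^2 \le O(1)\cdot\KL{P}{Q}$, and the elementary bound $\KL{P}{Q} \le O(1)\cdot\chisq{P}{Q}$ (obtained by applying $\ln u \le u-1$ to $u = \d P/\d Q$) yields $\tvd^2 \le O(1)\cdot\chisq$, with the constants working out to the stated factor $\frac{1}{2\ln 2}$ once the base of the logarithm in the paper's definition of $\KL{\cdot}{\cdot}$ is fixed.

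For the second inequality I would use an Ingster-style calculation. Writing $\althypo{N} = \E[\zeta\sim\calD]{\althypo{N}|\zeta}$ and using the product form \eqref{eq:mixture} together with the fact that $\nullhypo{N}$ is itself a product measure,
\begin{align*}
	1 + \chisq{\althypo{N}}{\nullhypo{N}}
	&= \E[x_{\le N}\sim\nullhypo{N}]{\Bigl(\E[\zeta]{\tfrac{\d\althypo{N}|\zeta}{\d\nullhypo{N}}(x_{\le N})}\Bigr)^2} \\
	&= \E[\zeta,\zeta']{\prod_{t=1}^N \E[x_t\sim\nullmarg{t}]{(1+g^{\zeta}_t(x_t))(1+g^{\zeta'}_t(x_t))}},
\end{align*}
where $\zeta,\zeta'$ are independent copies from $\calD$ and the second equality uses Fubini together with the independence of coordinates under $\nullhypo{N}$. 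Since $1 + g^{\zeta}_t$ is a density ratio, it integrates to $1$ against $\nullmarg{t}$, so the cross terms vanish and each inner expectation reduces to $\alpha_t(\zeta,\zeta') \triangleq 1 + \E[x_t\sim\nullmarg{t}]{g^{\zeta}_t(x_t)g^{\zeta'}_t(x_t)}$, which is nonnegative because it equals the expectation of a product of nonnegative density ratios.

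It remains to reduce the resulting product over $t$ to a single worst index. Generalized H\"older's inequality applied to the outer expectation $\E[\zeta,\zeta']{\cdot}$ with all $N$ exponents equal to $N$ yields
\begin{equation*}
	\E[\zeta,\zeta']{\prod_{t=1}^N \alpha_t(\zeta,\zeta')} \;\le\; \prod_{t=1}^N \bigl(\E[\zeta,\zeta']{\alpha_t(\zeta,\zeta')^N}\bigr)^{1/N} \;\le\; \max_{t\in[N]} \E[\zeta,\zeta']{\alpha_t(\zeta,\zeta')^N},
\end{equation*}
the last inequality being the geometric-mean-to-max bound. Subtracting $1$ from both sides gives the claim. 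The main point requiring care (and the reason this approach does not extend as-is to the adaptive setting) is that both the Fubini splitting and the vanishing of cross terms use the non-adaptivity hypothesis in an essential way: without it, $g^{\zeta}_t$ could depend on the preceding transcript $x_{<t}$, and $\nullhypo{N}$ need not be a product over its coordinates. This is precisely the obstacle that the chain-rule approach of Section~\ref{sec:adaptive_sketch} is designed to circumvent.
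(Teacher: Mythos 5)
Your proof is correct and takes essentially the same approach as the paper: Pinsker plus $\KL{}{}\le\chisq{}{}$ for the first inequality, and for the second, the Ingster-style expansion exploiting the product structure, the vanishing of linear terms from $\E[x_t\sim\nullmarg{t}]{g^\zeta_t(x_t)}=0$, and H\"older to pass from the product over $t$ to the worst index. The only cosmetic difference is that you factor the inner expectation over $t$ before simplifying each factor, whereas the paper expands into sums over subsets $S,S'\subseteq[N]$ and argues the cross terms $S\neq S'$ vanish; these are the same computation in a different order.
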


\begin{proof}
	The first inequality is just Pinsker's and the fact that chi-squared divergence upper bounds KL divergence. For the latter inequality, it will be convenient to define \begin{equation}
		g^{\zeta}_S(x_S) \triangleq \prod_{t\in S}g^{\zeta}_t(x_t).
		\label{eq:gS}
	\end{equation} Then for any $\zeta,\zeta',S$, the product structure implies \begin{equation}
		\E[x_{\le N}\sim\nullhypo{N}]*{g^{\zeta}_S(x_S)g^{\zeta'}_S(x_S)} = \prod_{t\in S}\E[x_t\sim\nullmarg{t}]{g^{\zeta}_t(x_t)g^{\zeta'}_t(x_t)}
		\label{eq:use_product}
	\end{equation} We then get that \begin{align}
		\chisq{\althypo{N}}{\nullhypo{N}} &= \E[x_{\le N}\sim \nullhypo{N}]*{\left(\E[\zeta]*{\prod^N_{t=1}(1 + g^{\zeta}_t(x_t))} - 1\right)^2} = \E[x_{\le N},\zeta,\zeta']*{\sum_{\emptyset\neq S,S'\subseteq[N]}g^{\zeta}_S(x_S)g^{\zeta'}_{S'}(x_{S'})} \\
		&= \E[x_{\le N},\zeta,\zeta']*{\sum_{S\neq \emptyset}g^{\zeta}_S(x_S)g^{\zeta'}_S(x_S)} = \E[\zeta,\zeta']*{\prod^{N}_{t=1}\left(1 + \E[x_t\sim\nullmarg{t}]*{g^{\zeta}_t(x_t)g^{\zeta'}_t(x_t)}\right)} - 1  \\
		&\le \max_t \E[\zeta,\zeta']*{\left(1 + \E[x_t\sim\nullmarg{t}]*{g^{\zeta}_t(x_t)g^{\zeta'}_t(x_t)}\right)^N} - 1,\label{eq:chisq_nonadaptive}
	\end{align}
	where the fourth step follows by \eqref{eq:use_product}, the last step follows by Holder's, and the third step follows by the fact that for $S\neq S'$ and any $\zeta,\zeta'$, \begin{equation}
		\E[x_{\le N}]{g^{\zeta}_S(x_S)g^{\zeta'}_{S'}(x_{S'})} = \prod_{t\in S\cap S'}\E[x_t]{g^{\zeta}_t(x_t)g^{\zeta'}_t(x_t)}\cdot \prod_{t\in S\backslash S'}\E[x_t]{g^{\zeta}_t(x_t)}\cdot \prod_{t\in S'\backslash S}\E[x_t]{g^{\zeta'}_t(x_t)} = 0,
		\label{eq:use_indep}
	\end{equation}
\end{proof}

The upshot of \eqref{eq:chisq_nonadaptive} is that the fluctuations of the quantities $\E[x_t]{g^{\zeta}_t(x_t)g^{\zeta'}_t(x_t)}$ with respect to the randomness of $\zeta,\zeta'$ dictate how large $N$ must be for $\nullhypo{N}$ and $\althypo{N}$ to be distinguishable. 

\begin{example}
	Recalling \eqref{eq:paninskig}, the quantities $\E[x_t]{g^{\zeta}_t(x_t)g^{\zeta'}_t(x_t)}$ take a particularly nice form in Paninski's setting. There we have \begin{equation}
		\E[x_t]{g^{\zeta}_t(x_t)g^{\zeta'}_t(x_t)} = \epsilon^2\cdot \E[x\sim \brk{d}]*{z_{\ceil{x/2}}\cdot z'_{\ceil{x/2}}} = \frac{\epsilon^2}{d}\sum^d_{x=1}\bone{z_{\ceil{x/2}} = z'_{\ceil{x/2}}} = \frac{2\epsilon^2}{d} \iprod{z,z'}
		\label{eq:innerexp}
	\end{equation}

	Because $\iprod{z,z'}$ is distributed as a shifted, rescaled binomial distribution, $\E[x_t]{g^{\zeta}_t(x_t)g^{\zeta'}_t(x_t)}$ has sub-Gaussian tails and fluctuations of order $O(\epsilon^2/\sqrt{d})$, implying that for $N$ as large as $o(\sqrt{d}/\epsilon^2)$, $\chisq{\panalt{N}}{\pannull{N}} = o(1)$. While this is not exactly how Paninski's lower bound was originally proven, concentration of the binomial random variable $\iprod{z,z'}$ lies at the heart of the lower bound and formalizes the usual intuition for the $\sqrt{d}$ scaling in the lower bound: to tell whether a distribution is far from uniform, it is necessary to draw $\Omega(\sqrt{d})$ samples just to see some element of $[d]$ appear twice.
	\label{example:paninski_calc}
\end{example}

In Section~\ref{sec:nonadaptive}, we will show how to use Lemma~\ref{lem:nonadaptive_key_lemma} to prove Theorem~\ref{thm:nonadaptive}. As it turns out, understanding the fluctuations of the random variable $\E[x_t]{g^{\zeta}_t(x_t)g^{\zeta'}_t(x_t)}$ that arises in that setting will be one of the primary technical challenges of this work, both for our adaptive and non-adaptive lower bounds (see Section~\ref{subsec:tails}).

\subsection{Adaptive Lower Bounds}
\label{sec:adaptive_sketch}

As was discussed previously and is evident from the proof of Lemma~\ref{lem:nonadaptive_key_lemma}, the lack of product structure for $\nullhypo{N}$ and $\althypo{N}|\zeta$ in the adaptive setting of Definition~\ref{defn:generic} makes it infeasible to directly estimate $\chisq{\althypo{N}}{\nullhypo{N}}$. Inspired by the literature on bandit lower bounds \cite{auer2002nonstochastic,bubeck2012regret}, we instead upper bound $\KL{\althypo{N}}{\nullhypo{N}}$, for which we can appeal to the chain rule to tame the extra power afforded by adaptivity. To handle the mixture structure of $\althypo{N}$, we will upper bound each of the resulting \emph{conditional} KL divergence terms by their corresponding conditional $\chi^2$ divergence.

First, we introduce some notation essential to the calculations in this work.

\begin{definition}[Key Quantities]
	In the generic setup of Definition~\ref{defn:generic}, for any $x_{\le t}\in\calU^{t}$, define
	\begin{equation}
		\Delta(x_{\le t}) \triangleq \frac{\d\althypo{t}}{\d\nullhypo{t}}(x_{\le t}), \  \phi^{\zeta,\zeta'}_{x_{\le t}} \triangleq \E[x\sim \nullmarg{t}(\cdot|x_{\le t})]*{g^{\zeta}_{x_{\le t}}(x)g^{\zeta'}_{x_{\le t}}(x)}, \ \Psi^{\zeta,\zeta'}_{x_{\le t}} \triangleq \prod^{t}_{i=1}(1 + g^{\zeta}_{x_{<i}}(x_i))(1 + g^{\zeta'}_{x_{<i}}(x_i))
		\label{eq:main_quantities}
	\end{equation}
	\label{defn:main_quantities}
\end{definition}

The following is a key technical ingredient of this work.

\begin{lemma}
	Let $\nullhypo{N}, \althypo{N}, \calD, \brc{g^{\zeta}_{x_{<t}}(\cdot)}$ be defined as in Definition~\ref{defn:generic}. Then 
	\begin{equation}
		\frac{1}{2\ln 2}\tvd\left(\nullhypo{N},\althypo{N}\right)^2 \le \KL{\althypo{N}}{\nullhypo{N}} \le \sum^N_{t=1}\E[x_{<t}\sim \nullhypo{t-1}]*{\frac{1}{\Delta(x_{<t})}\E[\zeta,\zeta'\sim\calD]*{\phi^{\zeta,\zeta'}_{x_{<t}}\cdot \Psi^{\zeta,\zeta'}_{x_{<t}}}}.
		\label{eq:main_chain_eq}
	\end{equation}
	\label{lem:main_chain}
\end{lemma}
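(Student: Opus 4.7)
The plan is to establish \eqref{eq:main_chain_eq} by pairing three standard moves: Pinsker's inequality for the first bound, and a chain-rule decomposition of KL followed by a pointwise chi-squared upper bound for the second. Passing through KL rather than chi-squared (which would be the more natural choice in the non-adaptive setting of Lemma~\ref{lem:nonadaptive_key_lemma}) is precisely what allows the chain rule to split the estimate across rounds; the lack of a product structure for $\althypo{N}$ under adaptivity makes a direct chi-squared calculation intractable. Concretely, the first inequality follows from Pinsker's in the form $\tvd(\althypo{N},\nullhypo{N})^2 \le \tfrac{\ln 2}{2}\KL{\althypo{N}}{\nullhypo{N}}$ (KL in bits). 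For the second, I would first apply the chain rule to write
\[
\KL{\althypo{N}}{\nullhypo{N}} = \sum_{t=1}^N \E[x_{<t}\sim \althypo{t-1}]{\KL{\altmarg{t}(\cdot|x_{<t})}{\nullmarg{t}(\cdot|x_{<t})}},
\]
and then upper bound each conditional KL by the corresponding conditional chi-squared divergence via $\log z \le z-1$, which gives
\[
\KL{\althypo{N}}{\nullhypo{N}} \le \sum_{t=1}^N \E[x_{<t}\sim \althypo{t-1}]{\E[x_t\sim \nullmarg{t}(\cdot|x_{<t})]{(R_t(x_{\le t})-1)^2}},
\]
where $R_t(x_{\le t})\triangleq \altmarg{t}(x_t|x_{<t})/\nullmarg{t}(x_t|x_{<t})$.

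The remaining step is to massage this into the form claimed in the lemma. First, the conditional density ratio $R_t(x_{\le t})$ equals $\Delta(x_{\le t})/\Delta(x_{<t})$, as follows from taking the ratio of joint Radon--Nikodym derivatives at levels $t$ and $t-1$. Unpacking $R_t-1$ using the mixture representation $\Delta(x_{\le t})=\E[\zeta]{\prod_{i=1}^t(1+g^\zeta_{x_{<i}}(x_i))}$ from \eqref{eq:mixture} and pulling out the common $\Delta(x_{<t})$ denominator gives
\[
R_t(x_{\le t}) - 1 = \frac{1}{\Delta(x_{<t})}\E[\zeta\sim\calD]{\prod_{i=1}^{t-1}(1+g^\zeta_{x_{<i}}(x_i))\cdot g^\zeta_{x_{<t}}(x_t)}.
\]
Squaring symmetrizes over an independent copy $\zeta'\sim\calD$: the products over $i<t$ combine with their $\zeta'$ counterparts to form exactly $\Psi^{\zeta,\zeta'}_{x_{<t}}$, and the inner expectation over $x_t\sim\nullmarg{t}(\cdot|x_{<t})$ produces $\phi^{\zeta,\zeta'}_{x_{<t}}$. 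Finally, converting the outer expectation from $\althypo{t-1}$ to $\nullhypo{t-1}$ introduces a factor of $\Delta(x_{<t})$, cancelling one of the two $\Delta(x_{<t})$ powers in the denominator and producing exactly the right-hand side of \eqref{eq:main_chain_eq}.

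The hard part will be purely bookkeeping: one must carefully track the powers of $\Delta(x_{<t})$ that arise when expanding the ratio $\Delta(x_{\le t})/\Delta(x_{<t})$ and when reweighting the outer expectation from $\althypo{t-1}$ back to $\nullhypo{t-1}$. There is no deeper analytic content in the lemma itself; the substantive work of the paper lies in controlling the right-hand side of \eqref{eq:main_chain_eq} --- in particular, in understanding the joint fluctuations of $\Psi^{\zeta,\zeta'}$ and $\phi^{\zeta,\zeta'}$ under the quantum hard instance --- which is deferred to later sections.
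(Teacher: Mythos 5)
Your proposal is correct and follows essentially the same route as the paper: Pinsker's, then the KL chain rule, then the pointwise KL $\le \chi^2$ bound, then identifying the conditional density ratio with $\Delta(x_{\le t})/\Delta(x_{<t})$, expanding via \eqref{eq:mixture}, symmetrizing over an independent $\zeta'$, and doing the change of measure to cancel one factor of $\Delta(x_{<t})$. The only difference is cosmetic ordering (the paper performs the change of measure before expanding the numerator, you do it after), so there is nothing substantive to compare.
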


\begin{proof}
	The first inequality is Pinsker's. For the second, by the chain rule for KL divergence and the fact that chi-squared divergence upper bounds KL, $\KL{\althypo{(N)}}{\nullhypo{N}}$ can be written as \begin{equation}
		\sum^N_{t=1}\E[x_{<t}\sim \althypo{t-1}]*{\KL{\altmarg{t}(\cdot|x_{<t})}{\nullmarg{t}(\cdot|x_{<t})}} \le \sum^N_{t=1}\E[x_{<t}\sim \althypo{t-1}]*{\chisq{\altmarg{t}(\cdot|x_{<t})}{\nullmarg{t}(\cdot|x_{<t})}}.\label{eq:mainKL_ineq}
	\end{equation} By definition, the conditional densities $p^t_0(\cdot|x_{<t}), p^t_1(\cdot|x_{<t})$ satisfy \begin{equation}
	p^t_i(x_t|x_{<t}) = \frac{p^{\le t}_i(x_{< t}\circ x_t)}{p^{\le t-1}_i(x_{<t})} \ \ \ \text{for} \ i = 0,1.
	\label{eq:bayes}
\end{equation} Therefore, we have: \begin{align}
		\E[x_{<t}\sim \althypo{t-1}]*{\chisq{\altmarg{t}(\cdot|x_{<t})}{\nullmarg{t}(\cdot|x_{<t})}}
		% \E[x_{<t}\sim \althypo{t-1}]*{ \E[x_t\sim \nullmarg{t}(\cdot|x_{<t})]*{\left(\frac{\altmarg{t}(x_t|x_{<t})}{\nullmarg{t}(x_t|x_{<t})} - 1\right)^2}} \\
		&= \E[x_{<t}\sim \althypo{t-1}]*{\E[x_t\sim \nullmarg{t}(\cdot|x_{<t})]*{\left(\frac{\Delta(x_{<t}\circ x_t)}{\Delta(x_{<t})} - 1\right)^2}} \\
		&= \E[x_{<t}\sim \althypo{t-1}]*{\frac{1}{\Delta(x_{<t})^2}\E[x_t\sim \nullmarg{t}(\cdot|x_{<t})]*{\left(\Delta(x_{<t}\circ x_t) - \Delta(x_{<t})\right)^2}} \\
		&= \E[x_{<t}\sim \nullhypo{t-1}]*{\frac{1}{\Delta(x_{<t})}\E[x_t\sim \nullmarg{t}(\cdot|x_{<t})]*{\left(\Delta(x_{<t}\circ x_t) - \Delta(x_{<t})\right)^2}}\label{eq:plugin}
	\end{align}
	where the first step follows by \eqref{eq:bayes} and the third step follows by a change of measure in the outer expectation.

	By the assumption~\eqref{eq:mixture} and the definition of $\Delta(\cdot)$,
	\begin{equation}
		\Delta(x_{<t}) = \E[z]*{\prod^{t-1}_{i=1}(1 + g^{\zeta}(x_i))}.
		\label{eq:LR}
	\end{equation} 
	This yields
	\begin{align}
		\E[x_t\sim\nullmarg{t}(\cdot|x_{<t})]*{(\Delta(x_{<t}\circ x_t) - \Delta(x_{<t}))^2} &= \E[x_t\sim\nullmarg{t}(\cdot|x_{<t})]*{\E[\calD]*{\prod^{t-1}_{i=1}(1 + g^{\zeta}_{x_{<i}}(x_i))\cdot g^{\zeta}_{x_{<t}}(x_t)}^2} \\
		&= \E[\zeta,\zeta']*{\E[x_t]*{g^{\zeta}_{x_{<t}}(x_t)g^{\zeta'}_{x_{<t}}(x_t)}\prod^{t-1}_{i=1}(1 + g^{\zeta}_{x_{<i}}(x_i))(1 + g^{\zeta'}_{x_{<i}}(x_i))} \\
		&= \E[\zeta,\zeta']*{\phi^{\zeta,\zeta'}_{x_{<t}}\cdot \Psi^{\zeta,\zeta'}_{x_{<t}}},
	\end{align} from which the lemma follows by \eqref{eq:plugin}.
\end{proof}

%!TEX root = ./new_quantum_paninski.tex

\section{Unentangled Measurements and Lower Bound Instance}
\label{sec:quantum_prelims}

In this section we provide some preliminary notions and calculations that are essential to understanding the proofs of Theorem~\ref{thm:nonadaptive} and \ref{thm:main}. We first formalize the notion of quantum property testing with unentangled, possibly adaptive measurements in Section~\ref{sec:unentangledtesting}. Then in Section~\ref{sec:instance}, we give our lower bound construction and instantiate it in the generic setup of Definition~\ref{defn:generic}. Finally, in Section~\ref{sec:intuition}, we give some intuition for some of the key quantities that arise.

\subsection{Testing with Unentangled Measurements}
\label{sec:unentangledtesting}

We first formally define the notion of a POVM with possibly infinite outcome set.

\begin{definition}
	Given space $\Omega$ with Borel $\sigma$-algebra $\mathcal{B}(\Omega)$, let $\mu$ be a regular positive real-valued measure $\mu$ on $\mathcal{B}(\Omega)$, and let $M:\Omega\to\C^{d\times d}$ be a measurable function taking values in the set of psd Hermitian matrices. We will denote the image of $x\in\Omega$ under $M$ by $M_x$.

	We say that the pair $(\mu, M)$ specifies a \emph{POVM} $\calM$ if $\int_{\Omega} M\, \d \mu = \Id_{d\times d}$ and, for any $d\times d$ density matrix $\rho$, the map $B\mapsto \int_B \iprod{M_x, \rho} \, \d\mu$ for $B\in\mathcal{B}(\Omega)$ specifies a probability measure over $\Omega$. We call the distribution given by this measure the \emph{distribution over outcomes from measuring $\rho$ with $\calM$}.\footnote{This definition looks diferent from standard ones because we are implicitly invoking the Radon-Nikodym theorem for POVMs on finite-dimensional Hilbert spaces, see e.g. Theorem 3 from \cite{moran2013positive} or Lemma 11 from \cite{chiribella2010barycentric}.}

	Given a POVM $\calM$, we will refer to the space of measurement outcomes as $\Omega(\calM)$.
	\label{defn:povm}
\end{definition}

With no meaningful loss in understanding, the reader may simply imagine that all POVMs mentioned henceforth have finitely many outcomes so that a POVM is simply the data of some finite set of positive semidefinite Hermitian matrices $\{M_x\}_{x\in\Omega}$ for which $\sum_x M_x = \Id_{d\times d}$, though our arguments extend to the full generality of Definition~\ref{defn:povm}.

\begin{definition}
\label{def:povm-schedule}
	Let $N\in\N$. An \emph{unentangled, possibly adaptive POVM schedule} $\calS$ is a type of measurement schedule specified by a (possibly infinite) collection of POVMs $\brc*{\calM^{x_{<t}}}_{t\in[N],x_{<t}\in\calT_t}$ where $\calT_1\triangleq \brc{\emptyset}$, and for every $t>1$, $\calT_t$ denotes the set of all possible transcripts of measurement outcomes $x_{<t}$ for which $x_i\in\Omega(\calM^{x_{<i}})$ for all $1\le i\le t-1$ (recall that $x_{<i}\triangleq (x_1,...,x_{i-1})$).
	The schedule works in the natural manner: at time $t$ for $t = 1, \ldots, N$, given a transcript $x_{< t}\in\calT_t$, it measures the $t$-th copy of $\rho$ using the POVM $\calM^{x_{<t}}$.

If in addition the resulting schedule is also a nonadaptive measurement schedule, we say it is an unentangled, nonadaptive POVM schedule.
%	A POVM schedule $\calS$, together with a (possibly randomized) algorithm $\calA$ which takes as input $x_{\le N}$ and outputs $\mathsf{YES}$ or $\mathsf{NO}$, gives rise to a testing algorithm with copy complexity $N$ as follows:
%	\begin{enumerate}
%		\item At iteration $t = 1$, measure the first copy of $\rho$ using $\calM^{\emptyset}$. 
%		\item For every iteration $1< t \le N$, if the first $t - 1$ measurements yielded the transcript of outcomes $x_{<t}\in \calT_t$, then measure the next copy of $\rho$ with $\calM^{x_{<t}}$.
%		\item Output $\calA(x_{\le N})$.
%	\end{enumerate}
%	We say an algorithm \emph{uses only unentangled, possibly adaptive measurements} if it arises from a POVM schedule in this way.\footnote{Note that by taking convex combinations, algorithms that randomly choose from a set of POVMs at every step also fall under this definition.} We say a POVM schedule and the testing algorithm it gives rise to are \emph{nonadaptive} if for every $t$, the POVMs $\brc*{\calM^{x_{<t}}}_{x_{<t}\in\calT_t}$ are all identical to some POVM $\calM^t$.
\end{definition}

\subsection{Lower Bound Instance}
\label{sec:instance}

Let $\calD$ be the Haar measure over the unitary group $U(d)$. In place of $\zeta$ from Definition~\ref{defn:generic}, we will denote elements from $\calD$ by $\U$. $\Pr[\U]{\cdot}$ and $\E[\U]{\cdot}$ will be with respect to $\calD$ unless otherwise specified.

\begin{construction}
	Let $\X\in\R^{d\times d}$ denote the diagonal matrix whose first $d/2$ diagonal entries are equal to $\epsilon$, and whose last $d/2$ diagonal entries are equal to $-\epsilon$. Let $\X'\triangleq \frac{1}{\epsilon}\X$. Let $\Lam\triangleq \frac{1}{d}(\Id + \X)$. 

	Our lower bound instance will be the distribution over densities $\U^{\dagger}\Lam \U$ for $\U\sim\calD$. We remark that this instance, the quantum analogue of Paninski's lower bound instance \cite{paninski2008coincidence} for classial uniformity testing, has appeared in various forms throughout the quantum learning and testing literature \cite{o2015quantum,wright2016learn,haah2017sample}.

	Given $N\in\N$, define $\nullstate\triangleq \mm^{\otimes N}$ and $\altstate\triangleq \E[\U\sim \calD]{(\U^{\dagger}\Lam \U)^{\otimes N}}$. Take any POVM schedule $\calS = \brc*{\calM^{x_{<t}}}_{t\in[N],x_{<t}\in\calT_t}$. Given $t\le N$, define $\nullhypo{t}$ and $\althypo{t}$ to be the distribution over the measurement outcomes when the first $t$ steps of these POVM schedules are applied to the first $t$ parts of $\nullstate$ and $\altstate$ respectively. Equivalently, $\althypo{t}$ can be regarded as the distribution over sequences of $t$ measurement outcomes arising from first sampling $\U$ according to the Haar measure $\calD$ and then applying the first $t$ steps of POVM schedule $\calS$ to $t$ copies of $\rho\triangleq \U^{\dagger}\Lam\U$.
	\label{constr:quantum}
\end{construction}

\begin{lemma}
	For any POVM $\calM$, define \begin{equation}
		g^{\U}_{\calM}(x) \triangleq \iprod{\hatM^{x_{<t}}_{x}, \U^{\dagger}\X\U}.
		\label{eq:quantum_gdef}
	\end{equation} $\althypo{N}$ is absolutely continuous with respect to $\nullhypo{N}$, and the family of likelihood ratio factors $\brc{g^{\U}_{x_{<t}}(\cdot)}$ for which \eqref{eq:mixture} holds for $\nullhypo{N}$ and $\althypo{N}$ defined in Construction~\ref{constr:quantum} is given by $g^{\U}_{x_{<t}}(\cdot) \triangleq g^{\U}_{\calM^{x_{<t}}}$.
	\label{lem:instantiate_quantum}
\end{lemma}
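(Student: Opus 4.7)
The plan is to compute the Radon-Nikodym derivative $\d(\althypo{t}|\U)/\d\nullhypo{t}$ by chaining one-step conditional likelihood ratios, since everything in Construction~\ref{constr:quantum} is defined sequentially by applying the adaptively-chosen POVM $\calM^{x_{<t}}$ to the $t$-th copy. For each fixed past transcript $x_{<t}$, the POVM $\calM^{x_{<t}} = (\mu_t, M^{x_{<t}})$ induces, via Definition~\ref{defn:povm}, a conditional density (w.r.t. $\mu_t$) of the form $x_t \mapsto \iprod{M^{x_{<t}}_{x_t}, \sigma}$ when the state is $\sigma$. So under $\nullhypo{t}$ the conditional density is $\Tr(M^{x_{<t}}_{x_t})/d$ (since $\sigma = \mm$), while under $\althypo{t}|\U$ it is $\iprod{M^{x_{<t}}_{x_t}, \U^\dagger \Lam \U}$.

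The central calculation is a one-line one: on the set where $\Tr(M^{x_{<t}}_{x_t})>0$, substituting $\Lam = \tfrac{1}{d}(\Id + \X)$ and the definition $\hatM^{x_{<t}}_{x_t} = M^{x_{<t}}_{x_t}/\Tr(M^{x_{<t}}_{x_t})$ gives
\[
\frac{\iprod{M^{x_{<t}}_{x_t}, \U^\dagger \Lam \U}}{\Tr(M^{x_{<t}}_{x_t})/d}
= \frac{\Tr(M^{x_{<t}}_{x_t}) + \iprod{M^{x_{<t}}_{x_t}, \U^\dagger \X \U}}{\Tr(M^{x_{<t}}_{x_t})}
= 1 + \iprod{\hatM^{x_{<t}}_{x_t}, \U^\dagger \X \U} = 1 + g^\U_{\calM^{x_{<t}}}(x_t),
\]
matching the claimed likelihood ratio factor. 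I would then apply the Bayes-type factorization in \eqref{eq:bayes} inductively to obtain $\d(\althypo{t}|\U)/\d\nullhypo{t}(x_{\le t}) = \prod_{i=1}^{t}(1 + g^\U_{\calM^{x_{<i}}}(x_i))$, which is exactly the form \eqref{eq:mixture} demands, and finally take the expectation over $\U \sim \calD$ (by Fubini, and pushing the mixture through the Radon-Nikodym derivative) to obtain the statement for $\althypo{N}$ itself.

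For absolute continuity, on any $\mu_t$-null set of transcripts where $\Tr(M^{x_{<t}}_{x_t}) = 0$, positive-semidefiniteness of $M^{x_{<t}}_{x_t}$ forces $M^{x_{<t}}_{x_t} = 0$, so both conditional densities vanish there; iterating in $t$ and then averaging over $\U$ gives $\althypo{N} \ll \nullhypo{N}$. There is no genuine obstacle in the argument once the right bookkeeping is set up: the only mildly technical point is dealing with POVMs that may have infinite outcome sets, which requires phrasing the entire computation via Radon-Nikodym derivatives with respect to the POVM base measures $\mu_t$ rather than probability mass functions. Everything else reduces to the single algebraic identity above and the standard chain rule for Radon-Nikodym derivatives along a filtration of transcripts.
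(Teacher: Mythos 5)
Your proposal is correct and follows essentially the same route as the paper's proof: compute the one-step conditional likelihood ratio $\iprod{M^{x_{<t}}_{x_t},\U^\dagger\Lam\U}/(\Tr(M^{x_{<t}}_{x_t})/d)=1+g^{\U}_{\calM^{x_{<t}}}(x_t)$, chain it via the Radon-Nikodym chain rule to get the product form in \eqref{eq:mixture}, and deduce absolute continuity from the observation that $\Tr(M_x)=0$ forces $M_x=0$ by positive semidefiniteness (the paper phrases this as $\iprod{M_x,\mm}>0$ for any nonzero psd $M_x$). The only difference is bookkeeping: the paper builds one global base measure $\mu^*$ on $(\Omega^*)^N$ via a disjoint union, while you work with the per-step base measures $\mu_t$ and a filtration argument; these are equivalent.
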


\begin{proof}
	By taking a disjoint union over $\Omega(\calM^{x_{<t}})$ for all $t\in\N$ and transcripts $x_{<t}$, we can assume without loss of generality that there is some space $\Omega^*$ for which $\Omega(\calM^{x_{<t}})$ is a subspace of $\Omega^*$ for every $t, x_{<t}$. For the product space $(\Omega^*)^N$, equip the $t$-th factor with the $\sigma$-algebra given by the join of all $\sigma$-algebras associated to $\Omega(\calM^{x_{\le t}})$ for transcripts $x_{\le t}$ of length $t$. 

	Then the measures $\mu$ in Definition~\ref{defn:povm} for all POVMs $\calM^{x_{<t}}$ induce a measure $\mu^*$ over $(\Omega^*)^N$. Moreover, by definition, $\nullhypo{N}$ and $\althypo{N}$ correspond to probability measures over $(\Omega^*)^t$ which are absolutely continuous with respect to $\mu^*$.

	Because $\iprod{M_x, \mm} > 0$ for any nonzero psd Hermitian matrix $M_x$, absolute continuity of $\althypo{N}$ with respect to $\nullhypo{N}$ follows immediately.
	
	By the chain rule for Radon-Nikodym derivatives, we conclude that
	\begin{equation}
		\frac{\d\althypo{t}|\U}{\d\nullhypo{t}}(x_{\le t}) = \frac{\prod^{t}_{i=1}\iprod{M^{x_{<i}}_{x_i}, \U^{\dagger}\Lam\U}}{\prod^{t}_{i=1}\frac{1}{d}\Tr(M^{x_{<i}}_{x_i})} = \prod^{t}_{i=1}\iprod{\hatM^{x_{<i}}_{x_i}, \U^{\dagger}(\Id + \X)\U} = \prod^{t}_{i=1} (1 + g^{\U}_{x_{<i}}(x_i))
	\end{equation}
	as claimed.
\end{proof}

For any $\U,\U'\in U(d)$, the quantities $\Psi^{\U,\U'}_{x_{<t}}$ and $\phi^{\U,\U'}_{x_{<t}}$ are given by \eqref{eq:main_quantities}. Given a POVM $\calM$, also define $\phi^{\U,\U'}_{\calM}$ in the obvious way. Lastly, we record the following basic facts:

\begin{fact} 
For any POVM $\calM$,
	\begin{enumerate}[label=(\Roman*)]
		\item $\E[x\sim p]{g^{\U}_{\calM}(x)} = 0$ for any $\U\in U(d)$.
		\label{fact:zero}
		\item For any measurement outcome $x$ and $\U,\U'\in U(d)$, $\abs{g^{\U}_{\calM}(x)} \le \epsilon$ and thus $\phi^{\U,\U'}_{\calM} \le \epsilon^2$. \label{fact:absbound}
	\end{enumerate}
	\label{fact:basic}
\end{fact}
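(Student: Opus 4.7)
The plan is to reduce both parts to direct computations from the definitions in Construction~\ref{constr:quantum} and Lemma~\ref{lem:instantiate_quantum}, together with the defining property of a POVM from Definition~\ref{defn:povm}.

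For part~(I), I would first identify the distribution $p$ as the null-outcome distribution, i.e.\ the law of measuring the maximally mixed state $\mm$ with $\calM = (\mu, M)$; by Definition~\ref{defn:povm}, its density against the base measure $\mu$ is $\iprod{M_x, \mm} = \Tr(M_x)/d$. Substituting this into $\E[x\sim p]{g^{\U}_{\calM}(x)}$ and using $\hatM_x = M_x/\Tr(M_x)$ causes the $\Tr(M_x)$ factor from the density to exactly cancel the normalization in $\hatM_x$, leaving $\frac{1}{d}\int \iprod{M_x,\U^{\dagger}\X\U}\,\d\mu(x) = \frac{1}{d}\iprod{\Id,\U^{\dagger}\X\U} = \frac{1}{d}\Tr(\X)$, where the second equality invokes the POVM normalization $\int M_x\,\d\mu = \Id$ and the third uses cyclicity of trace. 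This last quantity vanishes because $\X$ has $d/2$ diagonal entries equal to $+\epsilon$ and $d/2$ equal to $-\epsilon$ by construction.

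For part~(II), I would observe that each $\hatM_x$ is positive semidefinite with unit trace, hence itself a valid density matrix. Then by Holder's inequality for Schatten norms applied to the Hermitian pair $\hatM_x$ and $\U^{\dagger}\X\U$, I get $\abs{g^{\U}_{\calM}(x)} = \abs{\iprod{\hatM_x, \U^{\dagger}\X\U}} \le \norm{\hatM_x}_1 \cdot \norm{\U^{\dagger}\X\U}_2 = \norm{\X}_2 = \epsilon$, using unitary invariance of the operator norm together with the fact that the spectrum of $\X$ is $\brc{\pm\epsilon}$. The bound $\phi^{\U,\U'}_{\calM}\le \epsilon^2$ then follows immediately, since the integrand $g^{\U}_{\calM}(x)\,g^{\U'}_{\calM}(x)$ is pointwise bounded by $\epsilon^2$ in absolute value, so its expectation under $p$ is as well.

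There is really no serious obstacle here: the statement just packages three basic ingredients (the POVM identity $\int M_x\,\d\mu = \Id$, the traceless property $\Tr(\X) = 0$, and the operator-norm bound $\norm{\X}_2 = \epsilon$) into the form needed for later calculations. The only minor thing I would take care to verify is the cancellation between the $\Tr(M_x)$ factor arising from the density of $p$ against $\mu$ and the normalization inside $\hatM_x$, so that part~(I) genuinely reduces to integrating $M_x$ itself against $\mu$ rather than the normalized $\hatM_x$.
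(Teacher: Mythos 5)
The paper states Fact~\ref{fact:basic} without proof, so there is no official argument to compare against; your proposal supplies the obvious intended verification and is correct. The cancellation of $\Tr(M_x)$ between the density $\iprod{M_x,\mm} = \Tr(M_x)/d$ and the normalization in $\hatM_x$, the POVM identity $\int M_x\,\d\mu = \Id$ together with $\Tr(\X)=0$ for part~(I), and trace--operator-norm H\"older duality with $\norm{\hatM_x}_1=1$, $\norm{\X}_2=\epsilon$, and unitary invariance for part~(II) are exactly the right ingredients.
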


\subsection{Intuition for \texorpdfstring{$\phi^{\U,\U'}_{\calM}$}{phi}}
\label{sec:intuition}

Recall from Example~\ref{example:paninski_calc} that for classical uniformity testing, $\phi^{z,z'} = \frac{2\epsilon^2}{d}\iprod{z,z'}$, and by Lemma~\ref{lem:nonadaptive_key_lemma}, the $O(\epsilon^2/\sqrt{d})$ fluctuations of $\phi^{z,z'}$ as a random variable in $z,z'$ precisely dictate the sample complexity of uniformity testing.

One should therefore think of the distribution of the quantity $\phi^{\U,\U'}_{\calM}$ as a ``quantum analogue'' of the binomial distribution whose fluctuations are closely related to the scaling of the copy complexity of mixedness testing.

As we will show in Theorem~\ref{thm:rho_tail}, $\phi^{\U,\U'}_{\calM}$ has $O(\epsilon^2/d^{3/2})$ fluctuations and concentrates well, from which it will follow by integration by parts that $N$ can be taken as large as $o(d^{3/2}/\epsilon^2)$, yielding the lower bound of Theorem~\ref{thm:nonadaptive}.

To get some intuition for where these $O(\epsilon^2/d^{3/2})$ fluctuations come from, suppose $\calM$ were the orthogonal POVM given by the standard basis. Then \begin{equation}
	\phi^{\U,\U'}_{\calM} = \frac{1}{d}\sum^d_{i=1}\iprod*{\diag(\U^{\dagger}\X\U), \diag(\U'^{\dagger}\X\U')} = \frac{1}{d}\sum^d_{i=1}\epsilon^2 \cdot \delta(\U_i)\cdot\delta(\U'_i),
\end{equation} where \begin{equation}
	\delta(v)\triangleq \sum^{d/2}_{i=1} v^2_i - \sum^d_{i=d/2+1}v^2_i.
	\label{eq:deltadef}
\end{equation} For any fixed $i$, $\U_i,\U'_i$ are independent random unit vectors, and the variance of $\delta(\U_i)\cdot\delta(\U'_i)$ is $O(1/d^2)$ (see Fact~\ref{fact:randomunitvector}). If $\U_1,\U'_1...,\U_d,\U'_d$ were all independent, then $\phi^{\U,\U'}_{\calM}$ would thus have variance $\epsilon^4/d^3$, suggesting $O(\epsilon^2/d^{3/2})$ fluctuations as claimed. Of course we do not actually have this independence assumption; in addition, the other key technical challenges we must face to get Theorem~\ref{thm:rho_tail} are 1) to go beyond just a second moment bound and show sufficiently strong concentration of $\phi^{\U,\U'}_{\calM}$, and 2) to show this is the case for \emph{all POVMs}. We do this in Section~\ref{subsec:tails}.

%!TEX root = ./new_quantum_paninski.tex

\section{Proof of Non-Adaptive Lower Bound}
\label{sec:nonadaptive}

In this section we prove Theorem~\ref{thm:nonadaptive} by applying Lemma~\ref{lem:nonadaptive_key_lemma}; the technical crux of the proof (and of our proof of Theorem~\ref{thm:main} in the next section) is the following tail bound, whose proof we defer to Section~\ref{subsec:tails}:

\begin{restatable}{theorem}{rhotail}
	Fix any POVM $\calM$. 
	There exists an absolute constant $c''>0$ such that for any $t >  \Omega(\epsilon^2/d^{1.99})$, we have \begin{equation}
		\Pr[\U,\U'\sim\calD]*{\abs*{\phi^{\U,\U'}_{\calM}} > t} \le \exp\left(-c''\brc*{\Min{\frac{d^3t^2}{\epsilon^4}}{\frac{d^2t}{\epsilon^2}}}\right)
	\end{equation}
	\label{thm:rho_tail}
\end{restatable}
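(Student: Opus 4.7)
The plan is to apply Meckes' log-Sobolev concentration inequality on $U(d)$ from \cite{meckes2013spectral} in a two-stage conditional fashion, after rewriting
\begin{equation*}
	\phi^{\U,\U'}_{\calM} = \iprod{A(\U'),\ \U^{\dagger}\X\U}, \qquad A(\U') \triangleq \E[x\sim p]{g^{\U'}_{\calM}(x)\,\hatM_x},
\end{equation*}
where $p$ is the distribution of measurement outcomes from $\calM$ applied to $\mm$. Since $\X$ is traceless, a short Haar calculation gives $\E[\U]{\U^{\dagger}\X\U} = 0$, so for every $\U'$ we have $\E[\U]{\phi^{\U,\U'}_{\calM}} = 0$. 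The POVM completeness relation $\sum_x p(x)\hatM_x = \Id/d$ together with the two-point Weingarten formula for $\E[\U']{\U'^{\dagger}\X\U' \otimes \U'^{\dagger}\X\U'}$ also yields the second-moment estimate $\E[\U']{\|A(\U')\|_{HS}^2} = O(\epsilon^2/d)$.

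The first stage holds $\U'$ fixed and views $\phi^{\U,\U'}_{\calM}$ as a function of $\U$. Using $|g^{\U}_{\calM}(x) - g^{\V}_{\calM}(x)| \le 2\epsilon\|\U - \V\|_{HS}$ pointwise in $x$---which follows from $\|\hatM_x\|_{HS}\le 1$ and $\|\X\|_{op}=\epsilon$ via the splitting $\U^{\dagger}\X\U - \V^{\dagger}\X\V = (\U-\V)^{\dagger}\X\U + \V^{\dagger}\X(\U-\V)$---the map $\U\mapsto \phi^{\U,\U'}_{\calM}$ is $2\epsilon\|A(\U')\|_{HS}$-Lipschitz with respect to the Hilbert--Schmidt metric. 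Meckes' inequality then delivers the conditional sub-Gaussian tail $\Pr[\U]{|\phi^{\U,\U'}_{\calM}|>t} \le 2\exp\bigl(-c\, d\, t^2/(\epsilon^2\|A(\U')\|_{HS}^2)\bigr)$ for every fixed $\U'$.

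The second stage controls the random Lipschitz constant $\|A(\U')\|_{HS}$. Using $\iprod{\hatM_x,\hatM_y}\ge 0$ (both matrices are psd) and $\sum_{x,y}p(x)p(y)\iprod{\hatM_x,\hatM_y}=1/d$, an analogous computation shows that $\U'\mapsto \|A(\U')\|_{HS}$ is $O(\epsilon/\sqrt{d})$-Lipschitz, so a second application of Meckes' inequality combined with the second-moment bound above yields $\Pr[\U']{\|A(\U')\|_{HS}>b} \le 2\exp\bigl(-cd^2(b-C\epsilon/\sqrt{d})^2/\epsilon^2\bigr)$ for $b \ge C\epsilon/\sqrt{d}$. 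Combining the two stages via the truncation
\begin{equation*}
	\Pr[\U,\U']{|\phi^{\U,\U'}_{\calM}|>t} \le \Pr[\U']{\|A(\U')\|_{HS}>b} + 2\exp\bigl(-c\,d\,t^2/(\epsilon^2 b^2)\bigr),
\end{equation*}
and carefully optimizing the threshold $b$ in terms of $t$, should then yield a Bernstein-type tail of the claimed form.

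The main obstacle I anticipate is that naively balancing the two Meckes-type terms only recovers an exponent of order $\min(d^2 t^2/\epsilon^4,\ d^{3/2}t/\epsilon^2)$, short of the target $\min(d^3 t^2/\epsilon^4,\ d^2 t/\epsilon^2)$ by polynomial factors in $d$. Closing this gap will likely require either a finer HS-Lipschitz estimate on $\U\mapsto\phi^{\U,\U'}_{\calM}$ that exploits the orbit structure of $\U^{\dagger}\X\U$ (which ranges only over traceless Hermitian matrices with eigenvalues $\pm\epsilon$), or sharper control of $\|A(\U')\|_{HS}$ via the operator norm of the Kraus-like map $T: B\mapsto \sum_x p(x)\iprod{\hatM_x,B}\hatM_x$---which can be as small as $1/d$ for near-orthogonal POVMs but must be bounded with care in the worst case. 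Establishing the Bernstein tail uniformly over \emph{all} POVMs, in particular highly non-orthogonal ones such as the uniform POVM over rank-one projectors, is the central technical challenge.
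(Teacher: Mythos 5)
Your two-stage conditioning strategy (concentrate in $\U$ with $\U'$ fixed, then concentrate the random Lipschitz constant over $\U'$) is exactly the paper's high-level plan, and you correctly predicted that naively balancing the two Meckes tails leaves a gap. The gap is real, and it traces precisely to your choice of how to split the Lipschitz constant of $\U\mapsto\phi^{\U,\U'}_{\calM}$. You write $\phi^{\U,\U'}_{\calM}=\iprod{A(\U'),\,\U^\dagger\X\U}$ and apply Cauchy--Schwarz on the trace inner product, giving the Lipschitz constant $2\epsilon\,\|A(\U')\|_{HS}$. The paper instead applies Cauchy--Schwarz on the $p$-expectation,
\begin{equation*}
\bigl|\phi^{\U,\U'}_{\calM}-\phi^{\V,\U'}_{\calM}\bigr|
=\Bigl|\E[x\sim p]*{\bigl(g^{\U}_{\calM}(x)-g^{\V}_{\calM}(x)\bigr)\,g^{\U'}_{\calM}(x)}\Bigr|
\le \E[x\sim p]*{\bigl(g^{\U}_{\calM}(x)-g^{\V}_{\calM}(x)\bigr)^2}^{1/2}\cdot G(\U'),
\end{equation*}
with $G(\U')\triangleq\E[x\sim p]*{g^{\U'}_{\calM}(x)^2}^{1/2}$, and then proves the \emph{deterministic} estimate
\begin{equation*}
\E[x\sim p]*{\bigl(g^{\U}_{\calM}(x)-g^{\V}_{\calM}(x)\bigr)^2}\le O(\epsilon^2/d)\cdot\|\U-\V\|_{HS}^2
\end{equation*}
via Jensen's inequality ($\iprod{\hatM_x,\A}^2\le\sum_i(\hatM_x)_{ii}\A_{ii}^2$ in the eigenbasis of $\A$) followed by the POVM completeness relation $\int M_x\,\d\mu=\Id$. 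This yields a Lipschitz constant of the form $O(\epsilon/\sqrt{d})\cdot G(\U')$ rather than $2\epsilon\,\|A(\U')\|_{HS}$.

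Why this matters for the balance: both $G(\U')$ and $\|A(\U')\|_{HS}$ turn out to be $O(\epsilon/\sqrt{d})$-Lipschitz in $\U'$, so both have Meckes fluctuation width $O(\epsilon/d)$. But $\E{G(\U')}=O(\epsilon/\sqrt{d})$, which dominates its fluctuation width, whereas $\E{\|A(\U')\|_{HS}}=O(\epsilon/d)$ (your claimed $O(\epsilon^2/d)$ for the second moment is loose by a factor $d$; a sharper Weingarten computation plus $\E[x,y\sim p]{\iprod{\hatM_x,\hatM_y}}=1/d$ gives $O(\epsilon^2/d^2)$), which is the same order as its fluctuation width. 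When you integrate the conditional tail against the distribution of the random Lipschitz factor and run the AM--GM step, the paper's balance of $\tfrac{Ct^2}{y^2}+C'(y-m)^2$ with $m\asymp\E{G}$ gives the full exponent $\Omega(d^2 t/\epsilon^2)$, but your factorization leads to $\tfrac{Ct^2}{y^2}+C'd\,(y-m')^2$ (the extra $d$ comes from the mismatch between the power of $d$ in the conditional tail and the power in the tail of $\|A\|_{HS}$), whose AM--GM bound is only $\Omega(\sqrt{d}\,t)$, losing a factor $\sqrt{d}$ in the Bernstein regime exactly as you observed. So the fix is indeed the ``finer HS-Lipschitz estimate'' you floated: the ingredient you are missing is the Jensen-plus-completeness bound above, which shifts a factor of $1/\sqrt{d}$ from the random piece to the deterministic piece of the Lipschitz constant and makes the random piece well-concentrated relative to its mean.
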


\begin{proof}[Proof of Theorem~\ref{thm:nonadaptive}]
	By Fact~\ref{fact:basic-lowerbound}, it suffices to show that no nonadaptive POVM schedule can solve the distinguishing task given by Construction~\ref{constr:quantum}, unless $N = \Omega (d^{3/2 } / \epsilon^2)$.
	For a non-adaptive POVM schedule $\calS$, let $\{\calM^1,...,\calM^N\}$ denote the sequence of POVMs that are used. Recalling \eqref{eq:quantum_gdef}, the likelihood ratio factors $\brc{g^{\U}_t(\cdot)}_{\U\in U(d), t\in[N]}$ for which \eqref{eq:mixture} holds in the nonadaptive setting of Definition~\ref{defn:nonadaptive_generic} are given by $g^{\U}_{\calM^t}(\cdot)$. Similarly, denote $\phi^{\U,\U'}_{x_{<t}}$ by $\phi^{\U,\U'}_t$.

	By Lemma~\ref{lem:nonadaptive_key_lemma}, we have 
	\begin{equation}
		\frac{1}{2\ln 2}\tvd\left(\althypo{N},\nullhypo{N}\right)^2 \le \max_t \E[\zeta,\zeta']*{\left(1 + \phi^{\U,\U'}_t\right)^N} - 1.
	\end{equation}

	To finish the proof, we will show that \begin{equation}
		\sup_{\calM} \E[\U,\U']*{\left(1 + \phi^{\U,\U'}_{\calM}\right)^N} = 1 + o(1)
	\end{equation} for $N = o(d^{3/2}/\epsilon^2)$, from which the proof is complete by \eqref{eq:chisq_nonadaptive}.

	We would like to apply integration by parts (Fact~\ref{fact:stieltjes}) to the random variable $Z\triangleq 1 + \phi^{\U,\U'}_{\calM}$ and the function $f(Z)\triangleq Z^N$. By Part~\ref{fact:absbound} of Fact~\ref{fact:basic}, this random variable is supported in $[1-\epsilon^2,1+\epsilon^2]$. We can take the parameters in Fact~\ref{fact:stieltjes} as follows: set $a \triangleq \epsilon /(N^{1/2}d^{3/4})$, $b\triangleq 1 + \epsilon^2$, and tail bound function $\tau(x) = \exp\left(-c''\brc*{\Min{\frac{d^3(x - 1)^2}{\epsilon^4}}{\frac{d^2(x - 1)}{\epsilon^2}}}\right)$. Note that for $N = o(d^{3/2}/\epsilon^2)$, $(1+\tau(a))f(a) = 1 + o(1)$. So by Fact~\ref{fact:stieltjes} and Theorem~\ref{thm:rho_tail}, \begin{align}
		\MoveEqLeft \E[\U,\U']*{\left(1 + \phi^{\U,\U'}_{\calM}\right)^N} \\
		&\le 1 + o(1) + \int^{1 + \epsilon^2}_{1 + \epsilon^2/d^{3/2}} Nx^{N-1}\cdot \exp\left(-c''\brc*{\Min{\frac{d^3(x - 1)^2}{\epsilon^4}}{\frac{d^2(x - 1)}{\epsilon^2}}}\right)\ \d x 
		\\
		&\le 1 + o(1) + \int^{1 + \epsilon^2}_{\epsilon^2/d^{3/2}} N(1+x)^{N-1}\left(\exp\left(-\frac{c''d^3 x^2}{\epsilon^4}\right) + \exp\left(-{\frac{c''d^2x}{\epsilon^2}}\right)\right)\ \d x \\
		&\le 1 + o(1) + \int^{\infty}_{0} N(1+x)^{N-1}\left(\exp\left(-\frac{c''d^3 x^2}{\epsilon^4}\right) + \exp\left(-{\frac{c''d^2x}{\epsilon^2}}\right)\right)\ \d x \\
		&= 1 + o(1) + (N/2)! (c''d^4/\epsilon^3)^{-N/2} + N! (c''d^2/\epsilon^2)^N = 1 + o(1),
	\end{align} where the final step uses that $N = o(d^{3/2}/\epsilon^2)$.
\end{proof}

%!TEX root = ./new_quantum_paninski.tex

\section{A Chain Rule Proof of Paninski's Theorem}
\label{sec:warmup}

As discussed previously, the proof of Theorem~\ref{thm:nonadaptive} completely breaks down when the POVM schedule $\calS$ is adaptive, so we will instead use the chain rule, via Lemma~\ref{lem:main_chain}, to prove Theorem~\ref{thm:main}.

As a warmup, in this section we will show how to use Lemma~\ref{lem:main_chain} to prove a lower bound for \emph{classical} uniformity testing. As it turns out, it is possible to recover Paninski's optimal $\Omega(\sqrt{d}/\epsilon^2)$ lower bound with this approach, the details of which we give in Appendix~\ref{app:paninski}, but in this section we opt to present a proof which achieves a slightly weaker bound. The reason is that in our proof of Theorem~\ref{thm:paninski_weak}, we will make minimal use of the kind of precise cancellations that would yield a tight bound but which, unfortunately, are specific to the product structure of the distribution of random signs $z$. As such, these steps will be general-purpose enough to extend to the quantum setting where the Haar measure over $U(d)$ enjoys no such product structure.

Specifically, we will use the chain rule to show the following:

\begin{theorem}[Weaker Paninski Theorem]
	$\Omega(d^{1/3}/\epsilon^2)$ samples are necessary to test whether a distribution $p$ is $\epsilon$-far from the uniform distribution.
	\label{thm:paninski_weak}
\end{theorem}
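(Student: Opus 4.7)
The plan is to apply Lemma~\ref{lem:main_chain} directly to Paninski's construction (Example~\ref{example:paninski}). Because the classical sampling oracle is trivially non-adaptive, the likelihood-ratio factors $g^z_{x_{<t}}(\cdot)$ collapse to $g^z(x) = \epsilon(-1)^x z_{\lceil x/2\rceil}$ independently of the transcript, and hence $\phi^{z,z'}_{x_{<t}}$ equals $\phi^{z,z'} \triangleq (2\epsilon^2/d)\iprod{z,z'}$. The goal is to show that the chain-rule bound
\begin{equation}
	S \triangleq \sum^N_{t=1}\E[x_{<t}\sim\pannull{t-1}]*{\frac{1}{\Delta(x_{<t})}\,\E[z,z'\sim\calD]*{\phi^{z,z'}\,\Psi^{z,z'}_{x_{<t}}}}
\end{equation}
is $o(1)$ whenever $N = o(d^{1/3}/\epsilon^2)$; combined with Pinsker's inequality, this proves Theorem~\ref{thm:paninski_weak}.

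The single identity on $\calD$ I will use---chosen deliberately because it depends only on the independence of $z$ and $z'$, and not on any product structure of the Rademacher prior---is $\E[z,z'\sim\calD]{\Psi^{z,z'}_{x_{<t}}} = \bigl(\E[z\sim\calD]{\prod_{i<t}(1+g^z(x_i))}\bigr)^2 = \Delta(x_{<t})^2$. The only other information about $\phi^{z,z'}$ that I will exploit is the sub-Gaussian tail $\Pr_{z,z'\sim\calD}\!\bigl[|\phi^{z,z'}|>\tau\bigr] \le 2\exp(-c\,d\tau^2/\epsilon^4)$---which follows from Hoeffding applied to the $d/2$-term sum of i.i.d.\ Rademacher variables $\iprod{z,z'}$---together with the deterministic bound $|\phi^{z,z'}|\le\epsilon^2$. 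I will set the threshold $\tau \triangleq \epsilon^2/d^{1/3}$ and split $\E_{z,z'}[\phi\Psi]$ on the event $\{|\phi|\le\tau\}$, yielding
\begin{equation}
	\frac{1}{\Delta(x_{<t})}\,\bigl|\E[z,z'\sim\calD]{\phi^{z,z'}\,\Psi^{z,z'}_{x_{<t}}}\bigr| \;\le\; \tau\,\Delta(x_{<t}) \;+\; \frac{\epsilon^2}{\Delta(x_{<t})}\,\E[z,z'\sim\calD]{\Psi^{z,z'}_{x_{<t}}\,\mathbf{1}_{|\phi^{z,z'}|>\tau}}.
\end{equation}
Taking the outer expectation and using $\E[x_{<t}\sim\pannull{t-1}]{\Delta} = 1$, the first (``main'') term contributes $\tau$ per step, summing to $N\tau = N\epsilon^2/d^{1/3} = o(1)$ for $N = o(d^{1/3}/\epsilon^2)$.

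To handle the tail term, I would swap the order of the $x_{<t}$ and $(z,z')$ expectations and derive a deterministic (in $z,z'$) bound on the inner quantity $\E[x_{<t}\sim\pannull{t-1}]{\Psi^{z,z'}_{x_{<t}}/\Delta(x_{<t})}$. A change of measure gives $\E[x_{<t}\sim\pannull{t-1}]{q_z q_{z'}/\Delta} = \E_{x_{<t}\sim p_z^{t-1}}[q_{z'}/\Delta]$, and AM-GM applied to $\Delta(x_{<t}) = \E_{z''}[q_{z''}(x_{<t})]$ gives $\Delta(x_{<t}) \ge (1-\epsilon^2)^{(t-1)/2}$, so that $1/\Delta(x_{<t}) \le e^{(t-1)\epsilon^2}$ pointwise. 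Combined with the direct computation $\E_{x_{<t}\sim p_z^{t-1}}[q_{z'}(x_{<t})] = (1+\phi^{z,z'})^{t-1} \le e^{(t-1)\epsilon^2}$, this yields $\E[x_{<t}\sim\pannull{t-1}]{\Psi^{z,z'}/\Delta} \le e^{2(t-1)\epsilon^2}$ for every $(z,z')$. Plugging this bound into the tail and summing a geometric series, the aggregated tail contribution is $O\!\bigl(\exp(2N\epsilon^2 - c\,d^{1/3})\bigr)$, which is $o(1)$ provided $N \le c''\,d^{1/3}/\epsilon^2$ for a sufficiently small absolute constant $c'' < c/2$.

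The main obstacle---and the reason this argument only yields $d^{1/3}$ rather than Paninski's optimal $\sqrt{d}$---is the choice of threshold $\tau$. The typical fluctuations of $\phi^{z,z'}$ are of order $\epsilon^2/\sqrt{d}$, but taking $\tau$ that small would make the tail bound $\exp(-c\,d\tau^2/\epsilon^4)$ merely a constant, allowing the tail to swamp the main term. The optimal $\sqrt{d}/\epsilon^2$ exponent would require exploiting higher-moment cancellations (e.g.\ $\E_{z,z'}[\phi^{z,z'}]=0$ and the vanishing of all odd moments) coming from the symmetries of the Rademacher prior---cancellations that are specific to the \emph{product structure} of $\calD$ and unavailable in the quantum setting, where the prior is Haar measure on $U(d)$. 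The balance $N\tau = O(1)$ versus $\exp(-c\,d\tau^2/\epsilon^4) = o(1)$ then forces $\tau = \Theta(\epsilon^2/d^{1/3})$, and the same $d^{1/6}$-factor loss will reappear in our proof of Theorem~\ref{thm:main}, appearing there as the gap between $\Omega(d^{4/3}/\epsilon^2)$ and the plausibly optimal $\Omega(d^{3/2}/\epsilon^2)$.
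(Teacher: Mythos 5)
Your proof is correct, and the skeleton is the same as the paper's: fix the threshold $\tau = \Theta(\epsilon^2/d^{1/3})$, split $\E_{z,z'}[\phi\Psi]$ on $\{|\phi^{z,z'}|\le\tau\}$, bound the main term per step by $\tau$ via $\E_{x_{<t}\sim\pannull{t-1}}[\Delta(x_{<t})]=1$, and show the tail term is $\exp(O(t\epsilon^2))\cdot\exp(-\Omega(d^{1/3}))$. You also use the same pointwise lower bound on $\Delta$ (Lemma~\ref{lem:denom_paninski}, via AM-GM and the $z\mapsto -z$ involution).

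Where you deviate is in the tail-term analysis, and it is worth flagging the tradeoff. The paper applies Cauchy--Schwarz over the \emph{joint} $(x_{<t},z,z')$ law and then invokes the second-moment estimate $\E_{x_{<t},z,z'}\bigl[(\Psi^{z,z'}_{x_{<t}})^2\bigr]\le (1+O(\epsilon^2))^{t-1}$ (Lemma~\ref{lem:naive_num}). You instead swap the $(z,z')$ and $x_{<t}$ expectations and use the exact first-moment identity $\E_{x_{<t}\sim U^{\otimes(t-1)}}\bigl[\Psi^{z,z'}_{x_{<t}}\bigr]=(1+\phi^{z,z'})^{t-1}$ (your change-of-measure to $p_z^{t-1}$ is actually unnecessary here: $\E_{U^{\otimes(t-1)}}[q_z q_{z'}]$ already factorizes to $(1+\phi^{z,z'})^{t-1}$ by Fact~\ref{fact:basic}), combined with the pointwise $\|1/\Delta\|_\infty$ bound. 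Your route is tidier for the classical case and avoids Lemma~\ref{lem:naive_num} entirely. However, the exact factorization $\E_{x_{<t}}[\Psi^{z,z'}_{x_{<t}}]=(1+\phi^{z,z'})^{t-1}$ is a consequence of the transcript law being an i.i.d.\ product and of $\phi^{z,z'}_{x_{<t}}$ not depending on $x_{<t}$---both of which fail in the adaptive quantum setting, where $g^{\U}_{x_{<i}}$ genuinely depends on the history. The paper's choice of Cauchy--Schwarz plus a moment bound on $\Psi$ is deliberately the one that survives the passage to Theorem~\ref{thm:main} (where Lemma~\ref{lem:naive_num} becomes Lemma~\ref{lem:psi_main} and requires the substantial Theorem~\ref{thm:moments}). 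So your argument is a correct and slightly simpler proof of this warmup theorem, but it would not serve as the template for the main adaptive lower bound in the way the paper's version does.

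Two minor points. The bound $1/\Delta(x_{<t})\le e^{(t-1)\epsilon^2}$ requires $\log(1-\epsilon^2)\ge -2\epsilon^2$, valid for $\epsilon$ bounded away from $1$; it is fine to assume $\epsilon\le 1/2$, and the paper's $(1+O(\epsilon^2))^{(t-1)/2}$ form makes the same implicit assumption. Your final accounting gives $S\le N\tau + O(\epsilon^2 N e^{2N\epsilon^2 - cd^{1/3}})$, which is indeed $o(1)$ for $N=o(d^{1/3}/\epsilon^2)$ and at most a sufficiently small constant for $N\le c''d^{1/3}/\epsilon^2$ with $c''<c/2$; either formulation suffices via Pinsker and Fact~\ref{fact:basic-lowerbound}.
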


In this section, let $\pannull{N},\panalt{N}$ denote the distributions defined in Example~\ref{example:paninski}. Recalling the notation from Example~\ref{example:paninski} and Definition~\ref{defn:main_quantities}, as well as the identities \eqref{eq:paninskig} and \eqref{eq:innerexp}, we immediately get the following from Lemma~\ref{lem:main_chain}:

\begin{restatable}{lemma}{rewritepan}
	\begin{equation}
		\KL{\panalt{N}}{\pannull{N}}\le \sum^N_{t=1}Z_t \ \ \ \text{for} \ \ \ Z_t \triangleq \E[x_{<t}\sim U^{\otimes t-1}]*{\frac{1}{\Delta(x_{<t})}\E[z,z'\sim\{\pm 1\}^{d/2}]*{\phi^{z,z'} \cdot \Psi^{z,z'}_{x_{<t}}}}.\label{eq:paninski_main}
	\end{equation}\label{lem:rewrite_paninski}
\end{restatable}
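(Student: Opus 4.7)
The plan is to apply Lemma~\ref{lem:main_chain} directly to Paninski's setup of Example~\ref{example:paninski} and verify that the resulting expression reduces to the one in the statement.

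First, observe that the classical sampling oracle in Paninski's setup is a nonadaptive measurement schedule in the sense of Definition~\ref{defn:nonadaptive_generic}: the $t$-th measurement merely draws an independent sample from the underlying distribution and ignores $x_{<t}$. Consequently, $\pannull{N} = U^{\otimes N}$ is a product measure, so its $(t-1)$-marginal $\pannull{t-1}$ equals $U^{\otimes (t-1)}$. Moreover, by \eqref{eq:paninskig}, the likelihood ratio factors $g^z_{x_{<t}}(\cdot)$ collapse to $g^z(\cdot) = \epsilon(-1)^x z_{\lceil x/2 \rceil}$, which is independent of the preceding transcript $x_{<t}$.

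Next, I would invoke Lemma~\ref{lem:main_chain} with $\zeta = z$ and $\calD$ equal to the uniform distribution on $\{\pm 1\}^{d/2}$. Using that $\pannull{t-1} = U^{\otimes (t-1)}$, this yields
\[
\KL{\panalt{N}}{\pannull{N}} \le \sum_{t=1}^N \E[x_{<t}\sim U^{\otimes t-1}]*{\frac{1}{\Delta(x_{<t})}\,\E[z,z'\sim\{\pm 1\}^{d/2}]*{\phi^{z,z'}_{x_{<t}} \cdot \Psi^{z,z'}_{x_{<t}}}}.
\]
Since the $g^z$ functions do not depend on $x_{<t}$, the inner expectation $\phi^{z,z'}_{x_{<t}} = \E[x\sim U]{g^z(x)g^{z'}(x)}$ likewise does not depend on $x_{<t}$, and by \eqref{eq:innerexp} it is simply the scalar $\phi^{z,z'} = (2\epsilon^2/d)\iprod{z,z'}$. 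Substituting this in and matching notation yields exactly the expression for $\sum_{t=1}^N Z_t$ in the statement.

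Because the claim is essentially a specialization of Lemma~\ref{lem:main_chain}, there is no real technical obstacle in this step; the substantive difficulty in deducing Theorem~\ref{thm:paninski_weak} from this representation lies downstream, in bounding each $Z_t$. That bound will rely on concentration of $\iprod{z,z'}$ together with careful control of the likelihood ratio $\Delta(x_{<t})$ under $U^{\otimes (t-1)}$, and it is precisely there that the chain-rule approach will be lossier than the direct $\chi^2$-calculation used in Paninski's original proof.
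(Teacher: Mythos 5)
Your proof is correct and matches the paper's own derivation: the paper likewise obtains Lemma~\ref{lem:rewrite_paninski} as an immediate specialization of Lemma~\ref{lem:main_chain} to Paninski's nonadaptive setup, using that $\pannull{t-1}=U^{\otimes(t-1)}$, that the likelihood ratio factors $g^z$ are transcript-independent, and that $\phi^{z,z'}_{x_{<t}}$ collapses to the scalar $\phi^{z,z'}$ from \eqref{eq:innerexp}. No gap here.
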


We will also need the following two estimates (see below for their proofs).

\begin{lemma}
	For any transcript $x_{<t}$, $\Delta(x_{<t}) \ge \left(1 - \epsilon^2\right)^{(t-1)/2}$.
	\label{lem:denom_paninski}
\end{lemma}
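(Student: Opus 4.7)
The plan is to exploit the fact that in Paninski's construction, the perturbation $g^z(x) = \epsilon(-1)^x z_{\lceil x/2\rceil}$ depends on the random signs $z$ only through the single coordinate $z_{\lceil x/2\rceil}$. Once the factors of $\prod_{i=1}^{t-1}(1+g^z(x_i))$ are grouped by which coordinate of $z$ they involve, independence of the Rademacher variables $z_1,\ldots,z_{d/2}$ lets the expectation defining $\Delta(x_{<t})$ factor into a product over $j\in[d/2]$, after which a single application of AM-GM to each factor will give the claimed bound.

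Concretely, using \eqref{eq:LR} I would write $\Delta(x_{<t}) = \E[z]{\prod_{i=1}^{t-1}(1+g^z(x_i))}$ and, for each $j\in[d/2]$, let $S_j \triangleq \{i\in[t-1] : \lceil x_i/2\rceil = j\}$, with $a_j$ (resp.\ $b_j$) denoting the number of indices $i\in S_j$ for which $x_i$ is even (resp.\ odd). Since each factor indexed by $i\in S_j$ depends on $z$ only through $z_j$, and distinct $z_j$'s are independent, the expectation factorizes as
\[
\Delta(x_{<t}) = \prod_{j=1}^{d/2}\E[z_j\sim\{\pm 1\}]{(1+\epsilon z_j)^{a_j}(1-\epsilon z_j)^{b_j}} = \prod_{j=1}^{d/2}\frac{(1+\epsilon)^{a_j}(1-\epsilon)^{b_j}+(1-\epsilon)^{a_j}(1+\epsilon)^{b_j}}{2}.
\]
AM-GM bounds each factor below by $\sqrt{(1+\epsilon)^{a_j+b_j}(1-\epsilon)^{a_j+b_j}} = (1-\epsilon^2)^{(a_j+b_j)/2}$, and taking the product over $j$ together with the identity $\sum_j (a_j+b_j) = t-1$ then yields $\Delta(x_{<t})\ge (1-\epsilon^2)^{(t-1)/2}$.

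I do not anticipate a serious obstacle here: once the factorization over the independent coordinates of $z$ is in place, the inequality is immediate in a single line. The point worth flagging is that this argument relies essentially on the product structure of the Rademacher distribution, which is precisely what is unavailable in the quantum setting where the analogue of $\zeta$ is drawn from the Haar measure on $U(d)$. So this lemma should be read as a clean classical benchmark whose quantum counterpart, in the proof of Theorem~\ref{thm:main}, must be recovered by quite different means (for instance via the concentration of $\phi^{\U,\U'}_{\calM}$ from Theorem~\ref{thm:rho_tail}).
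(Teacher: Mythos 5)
Your proof is correct. You factor the expectation over the independent Rademacher coordinates $z_1,\ldots,z_{d/2}$: grouping the factors $(1+g^z(x_i))$ by $j = \lceil x_i/2 \rceil$, the expectation over $z_j$ of $(1+\epsilon z_j)^{a_j}(1-\epsilon z_j)^{b_j}$ equals $\frac{1}{2}\bigl[(1+\epsilon)^{a_j}(1-\epsilon)^{b_j} + (1-\epsilon)^{a_j}(1+\epsilon)^{b_j}\bigr]$, which by two-term AM-GM is at least $(1-\epsilon^2)^{(a_j+b_j)/2}$; multiplying over $j$ and using $\sum_j (a_j+b_j) = t-1$ gives the claim. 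This is a genuinely different route from the paper's. The paper never factorizes over coordinates: it applies AM-GM once, globally, to the discrete average over all $z\in\{\pm 1\}^{d/2}$ (arithmetic mean of $2^{d/2}$ nonnegative numbers exceeds their geometric mean), then pairs each $z$ with $-z$ via the measure-preserving involution $z\mapsto -z$, using the pointwise identity $(1+g^z(x))(1+g^{-z}(x)) = 1-\epsilon^2$. Your factorized form is actually identical to the expression $\Delta(x_{<t}) = \prod_a A^{h_{2a-1},h_{2a}}$ recorded as Fact~\ref{fact:ABs} in the tight chain-rule proof of Theorem~\ref{thm:paninski} (Appendix~\ref{app:paninski}), so your computation is doing more work up front than this lemma alone requires, though it pays off if one wants the sharp bound.

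One remark in your closing paragraph needs correcting: the quantum counterpart of this lemma is not obtained through the concentration of $\phi^{\U,\U'}_{\calM}$ (Theorem~\ref{thm:rho_tail}), which plays a separate role. It is Lemma~\ref{lem:denom}, and it is proved by essentially the \emph{same} AM-GM-plus-involution trick as the paper's version of Lemma~\ref{lem:denom_paninski}: the involution $z\mapsto -z$ is replaced by $\U\mapsto\T\U$ for the block-swap matrix $\T$, and the pointwise identity $(1+g^z(x))(1+g^{-z}(x)) = 1-\epsilon^2$ is replaced by $\T^\dagger\X\T = -\X$ (so $g^{\T\U} = -g^{\U}$). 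This is precisely why the paper phrases the classical proof the way it does: the global argument only needs a measure-preserving involution negating $g^\zeta$, not product structure of the parameter distribution, whereas your factorization is intrinsically tied to independence of the $z_j$ and does not transfer to Haar measure on $U(d)$.
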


\begin{lemma}
	For any $z,z'\in\{\pm 1\}^{d/2}$, $\E[x_{<t}\sim U^{\otimes t-1}]{(\Psi^{z,z'}(x_{<t}))^2} \le (1 + O(\epsilon^2))^{t-1}$.
	\label{lem:naive_num}
\end{lemma}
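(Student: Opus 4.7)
The plan is to exploit the product structure provided by the non-adaptive Paninski setting. In this case the likelihood ratio factors $g^{z}_{x_{<i}}(x_i) = g^z(x_i) = \epsilon(-1)^{x_i} z_{\lceil x_i/2\rceil}$ do not depend on the preceding transcript, and $x_1,\ldots,x_{t-1}$ under $U^{\otimes t-1}$ are i.i.d. uniform on $[d]$. Therefore
\[
    \E_{x_{<t}\sim U^{\otimes t-1}}\bigl[(\Psi^{z,z'}_{x_{<t}})^2\bigr]
    = \prod_{i=1}^{t-1} \E_{x\sim U}\bigl[(1+g^z(x))^2(1+g^{z'}(x))^2\bigr],
\]
so it suffices to bound the single-step factor by $1+O(\epsilon^2)$.

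Next, I would expand $(1+g^z(x))^2(1+g^{z'}(x))^2$ and take the expectation term by term, using the three basic identities $\E_{x\sim U}[g^z(x)]=0$ (Fact-type property from Example~\ref{example:paninski}), $\E_{x\sim U}[g^z(x)^2]=\epsilon^2$ (since $|g^z(x)|=\epsilon$ pointwise), and $\E_{x\sim U}[g^z(x)g^{z'}(x)]=\phi^{z,z'}=\frac{2\epsilon^2}{d}\iprod{z,z'}$ from \eqref{eq:innerexp}. The pure odd-in-$g^z$ or odd-in-$g^{z'}$ terms vanish. The one cubic cross term $\E_x[g^z(x)(g^{z'}(x))^2]$ equals $\epsilon^2 \cdot \E_x[g^z(x)] = 0$ because $(g^{z'}(x))^2=\epsilon^2$ is a deterministic constant; symmetrically the other cubic term vanishes. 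The quartic term contributes $\epsilon^4$.

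Collecting, the factor becomes
\[
    1 + \E_x[g^z(x)^2] + \E_x[g^{z'}(x)^2] + 4\,\phi^{z,z'} + \epsilon^4 \le 1 + 2\epsilon^2 + 4|\phi^{z,z'}| + \epsilon^4.
\]
Using the crude bound $|\phi^{z,z'}| \le \epsilon^2$ from Part~\ref{fact:absbound} of Fact~\ref{fact:basic} (or equivalently $|\iprod{z,z'}|\le d/2$), each factor is at most $1+O(\epsilon^2)$, and the product over $t-1$ factors yields the claim.

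The only potential obstacle is the worry that the cross term $4\phi^{z,z'}$ could be as large as $\Theta(\epsilon^2)$, but this is exactly of the desired order and so is absorbed into the $O(\epsilon^2)$; this is in fact the ``lossy'' step alluded to in Section~\ref{sec:warmup} (a tighter analysis would exploit the fluctuations of $\phi^{z,z'}$ around zero rather than worst-casing it, which is the difference between the $d^{1/3}$ bound proved here and Paninski's $\sqrt d$ bound). Since we are only shooting for Theorem~\ref{thm:paninski_weak}, no such refinement is needed.
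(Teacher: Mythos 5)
Your proposal is correct and takes essentially the same approach as the paper: factor the expectation over the $t-1$ i.i.d.\ uniform coordinates and bound the single-step factor $\E_{x\sim U}\bigl[(1+g^z(x))^2(1+g^{z'}(x))^2\bigr]$ by $1+O(\epsilon^2)$ using Fact~\ref{fact:basic}. Your term-by-term expansion is in fact more explicit than the paper's one-line calculation (which records the single-step factor simply as $1 + \E_x[g^z g^{z'}]$, silently absorbing the $2\epsilon^2 + \epsilon^4$ contribution into the $O(\epsilon^2)$), and your observation that the crude bound $|\phi^{z,z'}|\le\epsilon^2$ is the lossy step responsible for $d^{1/3}$ rather than $\sqrt d$ matches the paper's discussion.
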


We now describe how to use these to bound the summands $Z_t$ in \eqref{eq:paninski_main}. As discussed in Example~\ref{example:paninski_calc}, $\phi^{z,z'} = \frac{2\epsilon^2}{d}\iprod{z,z'}$ has $O(\epsilon^2/\sqrt{d})$ fluctuations. If we pretended $\phi^{z,z'}$ was of this magnitude with probability one, then \begin{equation}
	Z_t\approx O(\epsilon^2/\sqrt{d})\cdot \E[x_{<t}\sim U^{\otimes t-1}]*{\frac{1}{\Delta(x_{<t})}\E[z,z'\sim\{\pm 1\}^{d/2}]*{\Psi^{z,z'}_{x_{<t}}}} = O(\epsilon^2/\sqrt{d}),
	\label{eq:approx}
\end{equation} where the last step follows because $\Delta(x_{<t})^2 = \E[z,z']{\Psi^{z,z'}_{x_{<t}}}$ and the likelihood ratio between two distributions always integrates to 1. Then by \eqref{eq:paninski_main} we would in fact even recover Theorem~\ref{thm:paninski}.

Unfortunately, in reality $\phi^{z,z'}$ can be as large as order $\epsilon^2$, albeit with exponentially small probability, so instead we will partition the space of $z,z'\in\{\pm 1\}^{d/2}$ into those for which $\phi^{z,z'}$ is either less than some threshold $\tau$ or greater. When $\phi^{z,z'} \le \tau$, we can bound the total contribution to $Z_t$ of such $z,z'$ by $\tau$. When $\phi^{z,z'} > \tau$, we will use the pointwise estimates from Lemmas~\ref{lem:denom_paninski} and~\ref{lem:naive_num} and argue that because $\Pr{\phi^{z,z'} > \tau}$ is so small, these $z,z'$ contribute negligibly to $Z_t$. The reason we only get an $\Omega(d^{1/3}/\epsilon^2)$ lower bound in the end is that we must take $\tau$ slightly larger than the fluctuations of $\phi^{z,z'}$ to balance the low probability of $\phi^{z,z'}$ exceeding $\tau$ with the pessimistic pointwise estimates of Lemmas~\ref{lem:denom_paninski} and~\ref{lem:naive_num}.

\begin{proof}[Proof of Theorem~\ref{thm:paninski_weak}]
We fill in the details of the strategy outlined above. 
We will use Fact~\ref{fact:basic-lowerbound} with the construction in Example~\ref{example:paninski}.
Given a transcript $x_{<t}$ and $z,z'\in\{\pm 1\}^{d/2}$, let $\bone{\calE^{z,z'}(\tau)}$ denote the indicator of whether $\phi^{z,z'}>\tau$. We have that \begin{align}
	\E[z,z']*{\Psi^{z,z'}_{x_{<t}}\cdot \phi^{z,z'}} &= \E[z,z']*{\Psi^{z,z'}_{x_{<t}}\cdot \phi^{z,z'} \cdot \left(\bone{\calE^{z,z'}(\tau)} + \bone{\calE^{z,z'}(\tau)^c}\right)} \\
	&\le \epsilon^2\cdot \E[z,z']*{\Psi^{z,z'}_{x_{<t}}\cdot \bone{\calE^{z,z'}(\tau)}} + \tau\cdot \E[z,z']*{\Psi_{x_{<t}}^{z,z'}\cdot \bone{\calE^{z,z'}(\tau)^c}} \\
	&\le \epsilon^2\cdot\underbrace{\E[z,z']*{\Psi^{z,z'}_{x_{<t}}\cdot \bone{\calE^{z,z'}(\tau)}}}_{\circled{B}_{x_{<t}}} + \tau\cdot \underbrace{\E[z,z']*{\Psi_{x_{<t}}^{z,z'}}}_{\circled{G}_{x_{<t}}},
\end{align} where in the second step we used Part~\ref{fact:absbound} of Fact~\ref{fact:basic}. Note that for any transcript $x_{<t}$, $\Delta(x_{<t})^2 = \E[z,z']{\Psi^{z,z'}_{x_{<t}}} = \circled{G}_{x_{<t}}$, so by this and the fact that the likelihood ratio between two distributions always integrates to 1, 
\begin{equation}
	\E[x_{<t}\sim U^{\otimes t-1}]*{\frac{1}{\Delta(x_{<t})}\cdot \circled{G}_{x_{<t}}} = \E[x_{<t}]{\Delta(x_{<t})} = 1.\label{eq:one_paninski}
\end{equation}

We conclude that \begin{align}
	Z_t &\le \epsilon^2\cdot \E[x_{<t}\sim U^{\otimes t-1}]*{\frac{1}{\Delta(x_{<t})}\cdot \circled{B}_{x_{<t}}} + \tau\cdot \E[x_{<t}\sim U^{\otimes t-1}]*{\frac{1}{\Delta(x_{<t})}\cdot \circled{G}_{x_{<t}}} \\
	&\le \epsilon^2 \cdot (1 + \epsilon^2)^{(t-1)/2} \E[x_{<t}]{\circled{B}_{x_{<t}}} + \tau,
\end{align} where the second step follows by Lemma~\ref{lem:denom_paninski} and \eqref{eq:one_paninski}. It remains to show that $\tau$ is the dominant quantity above, for appropriately chosen $\tau$.

Pick $\tau = \Omega(\epsilon^2/d^{1/3})$. To upper bound $\E[x_{<t}]{\circled{B}_{x_{<t}}}$, first apply Cauchy-Schwarz to get \begin{align}
	\E[x_{<t}]{\circled{B}_{x_{<t}}} &\le \E[x_{<t},z,z']*{\left(\Psi^{z,z'}_{x_{<t}}\right)^2}^{1/2}\cdot \Pr[x_{<t},z,z']*{\phi^{z,z'} > \tau}^{1/2} \\
	&\le (1 + O(\epsilon^2))^{(t-1)/2} \cdot \exp(-\Omega(d^{1/3})),
\end{align} where the second step follows by Lemma~\ref{lem:naive_num}, \eqref{eq:innerexp}, and standard binomial tail bounds. For $t = o(d^{1/3}/\epsilon^2)$, this quantity is indeed negligible, concluding the proof that $\circled{*} \le O(\epsilon^2/d^{1/3})$ and, by Lemma~\ref{lem:rewrite_paninski}, that $\chisq{\althypo{N}}{\nullhypo{N}} = o(1)$ for $N = o(d^{1/3}/\epsilon^2)$.
\end{proof}

\paragraph{Deferred Proofs}

\begin{proof}[Proof of Lemma~\ref{lem:denom_paninski}]
For any $x_{<t}\in[d]^{t-1}$, we have that \begin{align*}
	\E[z]*{\prod^{t-1}_{i=1}(1 + g^z(x_i))} &\ge \left(\prod_{z\in \{\pm 1\}^{d/2}}\prod^{t-1}_{i=1}(1 + g^z(x_i))\right)^{2^{-d/2}} \\
	&= \left(\prod_{z\in \{\pm 1\}^{d/2}}\prod^{t-1}_{i=1}(1 + g^z(x_i))^{1/2}(1 + g^{-z}(x_i))^{1/2}\right)^{2^{-d/2}} \\
	&= (1 - \epsilon^2)^{(t - 1)/2},
\end{align*} where in the first step we used AM-GM, in the second step we used the fact that if $z$ is chosen uniformly at random from $\{\pm 1\}^{d/2}$, then $-z$ is also distributed according to the uniform distribution over $\{\pm 1\}^{d/2}$, and in the third step we used that for any $x$, $(1 + g^z(x))(1+g^{-z}(x)) = 1 - \epsilon^2$.
\end{proof}

\begin{proof}[Proof of Lemma~\ref{lem:naive_num}]
Note that by both parts of Fact~\ref{fact:basic}, \begin{equation}
	\E[x\sim U]{(1 + g^z(x))^2(1+g^{z'}(x))^2} = 1 + \E[x]{g^z(x)g^{z'}(x)} \le 1 + O(\epsilon^2).
\end{equation} Writing \begin{equation}
	\E[x_{<t}]{\Psi^{z,z'}_{x_{<t}}} \le \E[x_{<t-1}]{\Psi^{z,z'}_{x_{<t-1}}}\cdot (1 + O(\epsilon^2)),
\end{equation} we see that the claim follows by induction on $t$.
\end{proof}

\paragraph{Parallels to Proof of Theorem~\ref{thm:main}}

Lastly, we comment on how these ingredients carry over to our proof of Theorem~\ref{thm:main}. Lemma~\ref{lem:rewrite_paninski} translates verbatim to the quantum setting (see Lemma~\ref{lem:rewrite}), as does the final part of the proof where we partition based on the value of $\phi^{z,z'}$.

Lemma~\ref{lem:denom} will be the quantum analogue of Lemma~\ref{lem:denom_paninski}, and its proof uses a similar trick of AM-GM plus averaging with an involution.

Lemma~\ref{lem:psi_main} will be the quantum analogue of Lemma~\ref{lem:naive_num}. Unfortunately, as we will see later in Section~\ref{sec:proof}, an analogously naive bound will not suffice in our proof of Theorem~\ref{thm:main}. The workaround is somewhat technical, and we defer the details to Lemma~\ref{lem:psi_main} and the discussion preceding it.

Finally, as in Section~\ref{sec:nonadaptive_sketch}, the central technical ingredient in the proof of Theorem~\ref{thm:paninski_weak} is the concentration of $\phi^{z,z'}$. Analogously, in the proof of Theorem~\ref{thm:main}, we will need sufficiently strong tail bounds for $\phi^{\U,\U'}_{\calM}$, which we show in Theorem~\ref{thm:rho_tail}.

%!TEX root = ./new_quantum_paninski.tex

\section{An Adaptive Lower Bound for Mixedness Testing}
\label{sec:proof}

In this section we prove our main result, Theorem~\ref{thm:main}.

First, recalling the notation from Construction~\ref{constr:quantum} and Definition~\ref{defn:main_quantities}, as well as the identity \eqref{eq:quantum_gdef}, we immediately get the following from Lemma~\ref{lem:main_chain}:

\begin{lemma}
	\begin{equation}
		\KL{\althypo{N}}{\nullhypo{N}} \le \sum^N_{t=1}Z_t \ \ \ \text{for} \ \ \ Z_t\triangleq \E[x_{<t}\sim \nullhypo{t-1}]*{\frac{1}{\Delta(x_{<t})}\E[\U,\U']*{\Psi^{\U,\U'}_{x_{<t}}\cdot \phi^{\U,\U'}_{x_{<t}}}}
		\label{eq:quantum_main}
	\end{equation}
	\label{lem:rewrite}
\end{lemma}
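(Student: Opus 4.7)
The plan is to obtain this lemma as an immediate specialization of the generic chain-rule bound Lemma~\ref{lem:main_chain} to the quantum lower bound instance of Construction~\ref{constr:quantum}. The two things one needs to check before invoking Lemma~\ref{lem:main_chain} are (i) that $\althypo{N}|\U$ is absolutely continuous with respect to $\nullhypo{N}$ for every $\U \in U(d)$, and (ii) that the likelihood ratio admits a product-form factorization of the type \eqref{eq:mixture}. Both of these have already been established in Lemma~\ref{lem:instantiate_quantum}: the likelihood ratio factors are given explicitly by $g^{\U}_{x_{<t}}(\cdot) = g^{\U}_{\calM^{x_{<t}}}(\cdot) = \iprod{\hatM^{x_{<t}}_{(\cdot)}, \U^{\dagger}\X\U}$.

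Given this, I would set $\calD$ to be the Haar measure on $U(d)$ and identify the role of $\zeta$ in the generic setup with $\U$. Plugging into the right-hand side of \eqref{eq:main_chain_eq}, the quantities $\phi^{\zeta,\zeta'}_{x_{<t}}$ and $\Psi^{\zeta,\zeta'}_{x_{<t}}$ from Definition~\ref{defn:main_quantities}, instantiated with the quantum $g$-functions from Lemma~\ref{lem:instantiate_quantum}, become exactly the quantum versions $\phi^{\U,\U'}_{x_{<t}}$ and $\Psi^{\U,\U'}_{x_{<t}}$ used in Construction~\ref{constr:quantum}. The resulting sum is then, by definition, $\sum_{t=1}^N Z_t$, yielding \eqref{eq:quantum_main}.

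There is essentially no technical obstacle here; this lemma is best viewed as a bookkeeping step that simply specializes the abstract adaptive chain-rule machinery of Section~\ref{sec:strategies} to the quantum instance, so that the subsequent heavy lifting can focus on bounding each summand $Z_t$. The real difficulty is deferred to the remainder of Section~\ref{sec:proof}, where controlling $Z_t$ will require the tail bound Theorem~\ref{thm:rho_tail} on $\phi^{\U,\U'}_{\calM}$ together with quantum analogues of Lemmas~\ref{lem:denom_paninski} and \ref{lem:naive_num} — these latter steps being nontrivial precisely because the Haar measure on $U(d)$ lacks the product structure of the sign distribution exploited in the classical setting.
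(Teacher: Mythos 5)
Your proposal is correct and follows exactly the paper's approach: the paper presents this lemma with no separate proof, stating only that it follows immediately from Lemma~\ref{lem:main_chain} once the quantum $g$-functions from Lemma~\ref{lem:instantiate_quantum} and the notation of Construction~\ref{constr:quantum} and Definition~\ref{defn:main_quantities} are substituted in. Your slightly more explicit bookkeeping (verifying absolute continuity and the factorization \eqref{eq:mixture} via Lemma~\ref{lem:instantiate_quantum}, and identifying $\zeta \leftrightarrow \U$ with $\calD$ the Haar measure) is exactly what the paper is implicitly invoking.
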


Take any $t\le N$. To bound $Z_t$ in \eqref{eq:quantum_main}, we first estimate the likelihood ratio $\Delta$ for an arbitrary transcript, in analogy with Lemma~\ref{lem:denom_paninski} from Section~\ref{sec:warmup} respectively:

\begin{lemma}
For any transcript $x_{<t}$, $\Delta(x_{<t}) \ge \left(1 - O(\epsilon^2/d)\right)^{t-1}$.
\label{lem:denom}
\end{lemma}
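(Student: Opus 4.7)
The plan is to mirror the proof of Lemma~\ref{lem:denom_paninski}, combining Jensen's inequality with an involution on $U(d)$ under which $\X$ is negated. The natural choice is left multiplication by the permutation matrix $\Pi \in U(d)$ that swaps the $i$-th and $(i+d/2)$-th standard basis vectors for each $i \in [d/2]$: since $\Pi^{\dagger}\X\Pi = -\X$, we get $g^{\Pi\U}_{x_{<i}}(x_i) = -g^{\U}_{x_{<i}}(x_i)$, and left-invariance of the Haar measure means that $\U \mapsto \Pi\U$ preserves $\calD$. Applying Jensen's inequality and then this symmetry yields
\begin{equation*}
	\log \Delta(x_{<t}) \ge \sum_{i=1}^{t-1}\E[\U]*{\log(1+g^{\U}_{x_{<i}}(x_i))} = \frac{1}{2}\sum_{i=1}^{t-1}\E[\U]*{\log\!\left(1-(g^{\U}_{x_{<i}}(x_i))^2\right)}.
\end{equation*}
Using the elementary inequality $\log(1-a) \ge -2a$ valid for $a \in [0,1/2]$ (and recalling that $|g^{\U}_{x_{<i}}(x_i)| \le \epsilon$ by Part~\ref{fact:absbound} of Fact~\ref{fact:basic}, so that $a = (g^{\U}_{x_{<i}}(x_i))^2 \le \epsilon^2$ is small), the right-hand side is at least $-\sum_{i=1}^{t-1}\E[\U]*{(g^{\U}_{x_{<i}}(x_i))^2}$. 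It therefore suffices to show $\E[\U]*{(g^{\U}_{x_{<i}}(x_i))^2} = O(\epsilon^2/d)$, uniformly over the choice of POVM element $\hatM^{x_{<i}}_{x_i}$.

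This is a standard second-moment Haar integral, and it is where the key improvement over the classical setting materializes. Since $\Tr(\X) = 0$, the Weingarten formula gives, for any Hermitian $\hatM$,
\begin{equation*}
	\E[\U]*{\iprod{\hatM, \U^{\dagger}\X\U}^2} = \frac{\Tr(\X^2)}{d^2-1}\left(\Tr(\hatM^2) - \frac{\Tr(\hatM)^2}{d}\right).
\end{equation*}
Substituting $\Tr(\X^2) = d\epsilon^2$, together with $\Tr(\hatM) = 1$ and the trivial bound $\Tr(\hatM^2) \le 1$ that holds for any density matrix, gives $\E[\U]*{(g^{\U}_{x_{<i}}(x_i))^2} \le \epsilon^2/(d+1)$. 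Summing over $i$ and exponentiating then gives
\begin{equation*}
	\Delta(x_{<t}) \ge \exp\!\left(-(t-1)\cdot O(\epsilon^2/d)\right) \ge \left(1 - O(\epsilon^2/d)\right)^{t-1},
\end{equation*}
where the last step follows from $e^{-c} \ge 1-c$ raised to the power $t-1$.

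The main obstacle, relative to the classical proof, is that Paninski's involution $z\mapsto -z$ collapses $(1+g^z(x))(1+g^{-z}(x))$ to the \emph{deterministic} quantity $1-\epsilon^2$, whereas here the analogous product $1-(g^{\U}_{x_{<i}}(x_i))^2$ is a genuine random variable on $U(d)$. This is precisely what forces us through the Jensen step and necessitates the explicit Haar second-moment calculation, and it is also the source of the crucial improvement from $1-\epsilon^2$ to $1-O(\epsilon^2/d)$: it reflects the fact that the quantum alternative is ``spread out'' over all of $U(d)$ rather than the finite set $\{\pm 1\}^{d/2}$, so that a typical rotation aligns only a $1/d$ fraction of the ``signal'' $\X$ with the measurement direction $\hatM$.
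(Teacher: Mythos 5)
Your proof is correct and follows essentially the same route as the paper's: Jensen's inequality applied to the exponential, averaging with the involution $\U \mapsto \T\U$ (your $\Pi$ is the paper's $\T$) to convert $1+g$ into $1-g^2$, the elementary bound $\log(1-a)\ge -2a$, and finally the second-moment estimate $\E[\U]{(g^\U)^2} = O(\epsilon^2/d)$. The only (cosmetic) difference is in the last step, where you obtain $\E[\U]{\iprod{\hatM,\U^\dagger\X\U}^2}\le \epsilon^2/(d+1)$ by a direct Weingarten computation for general $\hatM$ (as the paper itself does in Lemma~\ref{lem:secondmoment}), whereas the paper's proof of Lemma~\ref{lem:denom} reduces to the rank-one case and then applies Cauchy--Schwarz; both give the same $O(\epsilon^2/d)$ bound.
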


\begin{proof}
	Recall \eqref{eq:LR}. By convexity of the exponential function and the fact that $1 + g^{\U}_{x_{<i}}(x_i)>0$ for all $\U,i,x_i$, \begin{equation}
		\Delta(x_{<t})\ge \exp\left(\E[U\sim\calD]*{\sum^{t-1}_{i=1}\ln\left(1 + g^{\U}_{x_{<i}}(x_i)\right)}\right) = \prod^{t-1}_{i=1}\exp\left(\E[\U\sim\calD]*{\ln(1 + g^{\U}_{x_{<i}}(x_i))}\right).\label{eq:sub}
	\end{equation} 
	Define the unitary block matrix $\T = \begin{pmatrix}
		\mathbf{0} & \Id_{d/2} \\
		\Id_{d/2} & \mathbf{0}.
	\end{pmatrix}$ As $\calD$ is invariant with respect to left-multiplication by $\T\in U(d)$, for all $i<t$ we have that \begin{align}
		\exp\left(\E[\U\sim\calD]*{\ln(1 + g^{\U}_{x_{<i}}(x_i))}\right) &= \exp\left(\frac{1}{2}\E[\U\sim\calD]*{\ln(1 + g^{\U}_{x_{<i}}(x_i)) + \ln(1 + g^{\T\U}_{x_{<i}}(x_i))}\right) \\
		&= \exp\left(\frac{1}{2}\E[\U\sim\calD]*{\ln(1 + g^{\U}_{x_{<i}}(x_i)) + \ln(1 - g^{\U}_{x_{<i}}(x_i))}\right) \\
		&= \exp\left(\frac{1}{2}\E[\U\sim\calD]*{\ln(1 - g^{\U}_{x_{<i}}(x_i)^2)}\right) \\
		&\ge 1 + \frac{1}{2}\E[\U\sim\calD]*{\ln(1 - g^{\U}_{x_{<i}}(x_i)^2)} \\
		&\ge 1 - \E[\U\sim\calD]*{g^{\U}_{x_{<i}}(x_i)^2)}\label{eq:oneminus}
	\end{align} 
	where the second step follows from the fact that $\T^{\dagger}\X\T = -\X$, the fourth step follows by the elementary inequality $\exp(x)\ge 1 + x$ for all $x$, and the fifth inequality follows by the elementary inequality $\log(1 - x)\ge -2x$ for all $0\le x < 1/2$.

	Finally, note that for any trace-one psd matrix $M$, we may write $M = \sum\lambda_i v_iv_i^{\dagger}$, and for any unit vector $v\in\C^n$, $\E[\U]{\iprod{vv^{\dagger},\U^{\dagger}\X\U}^2} = O(\epsilon^2/d)$. So \begin{equation}
		\E[U]{\iprod{M,\U^{\dagger}\X\U}^2} = \sum_{i,j} \lambda_i\lambda_j \E*{\iprod{v_iv_i^{\dagger},\U^{\dagger}\X\U}\iprod{v_jv_j^{\dagger},\U^{\dagger}\X\U}} \le O(\epsilon^2/d)\cdot \left(\sum_i \lambda_i\right)^2 = O(\epsilon^2/d),
	\end{equation}
	 where the second step follows by Cauchy-Schwarz. From this we conclude that $\E[\U\sim\calD]{g^{\U}_{x_{<i}}(x_i)^2} \le O(\epsilon^2/d)$ for all $i,x_{<i},x_i$, and the lemma follows by \eqref{eq:sub} and \eqref{eq:oneminus}.
\end{proof}

Next, in analogy with Lemma~\ref{lem:naive_num}, we would like to control the expectation of $(\Psi^{\U,\U'}_{x_{<t}})^2$. We remark that like in the proof of Lemma~\ref{lem:naive_num}, one can obtain a naive estimate of $(1 + O(\epsilon^2))^{t-1}$ using just Fact~\ref{fact:basic}, but unlike in the proof of Theorem~\ref{thm:paninski_weak}, such a bound would not suffice here. Instead, we will need the following important moment bound, whose proof we defer to Section~\ref{subsec:tails}:

\begin{restatable}{theorem}{moments}
	For any POVM $\calM$, let $p$ denote the distribution over outcomes from measuring $\mm$ with $\calM$, and let $\gamma > 0$ be an absolute constant. Define the random variable \begin{equation}
		K^{\U,\U'}_{\calM} \triangleq \E[x\sim p]*{\left(g^{\U}_{\calM}(x) + g^{\U'}_{\calM}(x)\right)^2}
	\end{equation} Then for any $n = o(d^2/\epsilon^2)$, we have that \begin{equation}
		\E[{\U,\U'}]*{\left(1 + \gamma\cdot K^{\U,\U'}_{\calM}\right)^n} \le \exp(O(\gamma n \epsilon^2/d))
		\label{eq:Kdef}
	\end{equation}
	\label{thm:moments}
\end{restatable}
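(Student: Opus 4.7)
The plan is to reduce the theorem to a sharp tail bound on $\phi^{\U,\U}_{\calM}$ (analogous to that of Theorem~\ref{thm:rho_tail}) via a separation-of-variables trick, and then to apply integration by parts. To separate the $\U,\U'$ dependence, I would use the pointwise inequality $(g^{\U}_{\calM}(x)+g^{\U'}_{\calM}(x))^2 \le 2g^{\U}_{\calM}(x)^2+2g^{\U'}_{\calM}(x)^2$ inside the expectation defining $K$, obtaining $K^{\U,\U'}_{\calM} \le 2\phi^{\U,\U}_{\calM}+2\phi^{\U',\U'}_{\calM}$. Combining this with $1+a+b \le (1+a)(1+b)$ for $a,b\ge 0$ (applied inside the $n$th power), together with the independence of $\U$ and $\U'$, yields
\[
\E[\U,\U']{(1+\gamma K^{\U,\U'}_{\calM})^n} \;\le\; \Bigl(\E[\U]{(1+2\gamma \phi^{\U,\U}_{\calM})^n}\Bigr)^2,
\]
reducing the task to showing $\E[\U]{(1+2\gamma \phi^{\U,\U}_{\calM})^n} \le \exp(O(\gamma n\epsilon^2/d))$.

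Next I would apply integration by parts (Fact~\ref{fact:stieltjes}) to the nonnegative, bounded random variable $\phi^{\U,\U}_{\calM}\in[0,\epsilon^2]$ with $f(t)=(1+2\gamma t)^n$, rewriting the target moment as $1+\int_0^{\epsilon^2} 2n\gamma (1+2\gamma t)^{n-1}\Pr[\U]{\phi^{\U,\U}_{\calM}>t}\,\d t$. A short Weingarten computation along the lines of the final paragraph of the proof of Lemma~\ref{lem:denom} gives $\E[\U]{\phi^{\U,\U}_{\calM}}=O(\epsilon^2/d)$, so splitting the integral at $t_0:=C\epsilon^2/d$ and bounding the tail by $1$ on $[0,t_0]$ contributes at most $(1+2\gamma t_0)^n-1 \le \exp(O(\gamma n\epsilon^2/d))$, which is within the target budget.

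The principal obstacle is controlling the tail $\Pr[\U]{\phi^{\U,\U}_{\calM}>t}$ for $t>t_0$ sharply enough to accommodate all $n=o(d^2/\epsilon^2)$. A direct appeal to the log-Sobolev inequality on $U(d)$ (using that $\phi^{\U,\U}_{\calM}$ is $O(\epsilon^2)$-Lipschitz in Hilbert-Schmidt distance, which follows from the identity $g^{\U}_{\calM}(x)^2-g^{\V}_{\calM}(x)^2=(g^{\U}_{\calM}(x)+g^{\V}_{\calM}(x))(g^{\U}_{\calM}(x)-g^{\V}_{\calM}(x))$ combined with Part~\ref{fact:absbound} of Fact~\ref{fact:basic}) only yields sub-Gaussian fluctuations at scale $\epsilon^2/\sqrt{d}$, which is a factor of $\sqrt{d}$ too weak and suffices only for $n=O(d/\epsilon^2)$. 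To close this gap I would adapt the Weingarten-based argument underlying Theorem~\ref{thm:rho_tail} (Section~\ref{subsec:tails}) to the diagonal quantity $\phi^{\U,\U}_{\calM}=\iprod{N,(\U\otimes\U)^{\dagger}(\X\otimes\X)(\U\otimes\U)}$, where $N=\int p(x)\,\hatM_x\otimes\hatM_x\,\d\mu$ is psd with unit trace: the higher centered moments of $\phi^{\U,\U}_{\calM}$ can be computed via Weingarten calculus on $U(d)$, and the tracelessness of $\X$ produces the extra powers of $1/d$ needed to deliver a tail of the same form $\Pr[\U]{\phi^{\U,\U}_{\calM}>C\epsilon^2/d+s} \le \exp(-c\min(d^3 s^2/\epsilon^4, d^2 s/\epsilon^2))$ as in Theorem~\ref{thm:rho_tail}. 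Plugging this into the remaining piece of the tail integral, completing the square in the sub-Gaussian regime $t\lesssim\epsilon^2/d$ (where the maximum of the exponent $2\gamma n t-cd^3 t^2/\epsilon^4$ occurs below $t_0$ precisely when $n=o(d^2/\epsilon^2)$) and using a straightforward estimate in the sub-exponential regime (where $2\gamma n<cd^2/\epsilon^2$ forces the linear exponent $2\gamma n t-cd^2 t/\epsilon^2$ to be negative) shows that the residual contribution is at most $\exp(O(\gamma n\epsilon^2/d))$, completing the proof.
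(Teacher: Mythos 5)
Your high-level framework is sound and even slightly streamlines the paper's: the decomposition $K^{\U,\U'}_{\calM} \le 2\phi^{\U,\U}_{\calM} + 2\phi^{\U',\U'}_{\calM}$, the factorization $(1+a+b) \le (1+a)(1+b)$ combined with independence of $\U,\U'$, and the integration-by-parts step are all correct. (The paper works with $K^{\U,\U'}_{\calM}$ directly via its Lemma~\ref{lem:Ztail} and~\ref{lem:secondmoment}, so your split is a genuine minor simplification at the cost of a constant factor.)

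The gap is in how you propose to obtain the tail bound on $\phi^{\U,\U}_{\calM}$, which is the technical heart of the result. You correctly observe that a direct Lipschitz bound on $\phi^{\U,\U}_{\calM}$ is too crude, but then you pivot to ``compute higher centered moments of $\phi^{\U,\U}_{\calM}$ via Weingarten calculus,'' describing Section~\ref{subsec:tails} as a ``Weingarten-based argument.'' This is a misreading of that section: it uses Weingarten calculus \emph{only} for second moments (Lemma~\ref{lem:weincalc}, Lemma~\ref{lem:secondmoment}), and the tails come entirely from the log-Sobolev concentration inequality (Theorem~\ref{thm:deusexmachina}) applied to a carefully chosen Lipschitz function. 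Controlling Weingarten coefficients of arbitrary order uniformly in $d$ and the permutation, with enough precision to deliver a tail of the form $\exp(-c\min(d^3s^2/\epsilon^4, d^2s/\epsilon^2))$, is a substantial combinatorial undertaking that you have not sketched and that the paper does not carry out.

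The idea you are missing is the square-root trick, which is exactly Lemma~\ref{lem:Glipschitz}: one should not show that $\phi^{\U,\U}_{\calM}$ is Lipschitz, but that $G(\U) := (\phi^{\U,\U}_{\calM})^{1/2} = \E[x\sim p]{g^{\U}_{\calM}(x)^2}^{1/2}$ is $O(\epsilon/\sqrt d)$-Lipschitz in Hilbert--Schmidt distance. This follows from the triangle inequality in $L^2(p)$ together with the estimate $\E[x\sim p]{(g^{\U}_{\calM}(x)-g^{\V}_{\calM}(x))^2}^{1/2} \le O(\epsilon/\sqrt d)\norm{\U-\V}_{HS}$, which comes from the isometry-type identity $\int_{\Omega(\calM)}(M_x)_{ii}\,\d\mu = 1$ (the normalization of the POVM), not from a pointwise bound via Fact~\ref{fact:basic}. (Your proposed $O(\epsilon^2)$-Lipschitz bound on $\phi^{\U,\U}_{\calM}$ effectively bounds $|g^{\U}-g^{\V}|$ pointwise, which loses the crucial $1/\sqrt d$.) Applying Theorem~\ref{thm:deusexmachina} to the $O(\epsilon/\sqrt d)$-Lipschitz function $G$ gives sub-Gaussian concentration of $G$ at scale $\epsilon/d$ around its mean $O(\epsilon/\sqrt d)$; squaring then yields precisely the sub-exponential tail $\Pr{\phi^{\U,\U}_{\calM} > c\epsilon^2/d + s} \le \exp(-\Omega(sd^2/\epsilon^2))$ that the rest of your integration-by-parts argument requires. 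Note also that the sub-Gaussian tail at scale $\epsilon^2/d$ (which is what the sharpest direct Lipschitz bound on $\phi$ would yield) is genuinely insufficient for $n$ approaching $d^2/\epsilon^2$: the exponent $2\gamma n t - d^2 t^2/\epsilon^4$ has a maximum of order $\gamma^2 n^2 \epsilon^4/d^2$, which can be as large as $\Theta(d^2)$, so the square-root trick is not merely a convenience but essential.
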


We will use this and a series of invocations of Holder's to prove the following sufficiently strong generalization of Lemma~\ref{lem:naive_num}:

\begin{lemma}
	Suppose $t = o(d^2/\epsilon^2)$. Then $\E[x_{<t},\U,\U']{(\Psi^{\U,\U'}_{x_{<t}})^2} \le \exp(O(t\cdot \epsilon^2/d))$.\label{lem:psi_main}
\end{lemma}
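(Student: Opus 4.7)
The plan is to combine a pointwise per-step bound with Theorem~\ref{thm:moments} via iterated tower followed by a chain of Holder-type inequalities. The key starting point is the per-step bound
\begin{equation*}
\E[x_i\sim p^i_0(\cdot|x_{<i})]*{(1+g^{\U}_{x_{<i}}(x_i))^2(1+g^{\U'}_{x_{<i}}(x_i))^2} \le 1 + c\, K^{\U,\U'}_{\calM^{x_{<i}}},
\end{equation*}
valid for every prefix $x_{<i}$ and every pair $\U,\U'\in U(d)$, with $c = 2 + O(\epsilon)$. It follows by expanding $(1+u)^2(1+v)^2$ with $u = g^{\U}_{x_{<i}}(x_i)$ and $v = g^{\U'}_{x_{<i}}(x_i)$, killing the linear terms using the mean-zero property of Fact~\ref{fact:basic}\ref{fact:zero} and absorbing the cubic and quartic cross terms via $|u|,|v|\le\epsilon$ from Fact~\ref{fact:basic}\ref{fact:absbound}. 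The resulting quantity $K^{\U,\U'}_{\calM^{x_{<i}}}$ is precisely the one whose exponential moments Theorem~\ref{thm:moments} controls.

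Iterating the tower property from the innermost measurement outward replaces each inner conditional expectation by this per-step bound, morally reducing the task to estimating $\E[x_{<t},\U,\U']*{\prod_{i=1}^{t-1}(1+cK^{\U,\U'}_{\calM^{x_{<i}}})}$. Since $\U,\U'$ are independent of $x_{<t}$ under the null, I would swap the order of expectation so that $\U,\U'$ is innermost. For each fixed transcript $x_{<t}$ the sequence of POVMs is determined, so by the generalized Holder inequality with all exponents equal to $t-1$,
\begin{equation*}
\E[\U,\U']*{\prod_{i=1}^{t-1}(1+cK^{\U,\U'}_{\calM^{x_{<i}}})} \le \prod_{i=1}^{t-1}\E[\U,\U']*{(1+cK^{\U,\U'}_{\calM^{x_{<i}}})^{t-1}}^{1/(t-1)}.
\end{equation*}
Theorem~\ref{thm:moments} applied to each $\calM^{x_{<i}}$ with $n = t-1 = o(d^2/\epsilon^2)$ bounds every factor by $\exp(O(c\epsilon^2/d))$, so the product is $\exp(O(ct\epsilon^2/d))$ uniformly in $x_{<t}$, which is exactly the target bound.

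The main obstacle is the tower reduction in the first step: because $\calM^{x_{<j}}$ depends on the full prefix $x_{<j}$, the factor $1+cK^{\U,\U'}_{\calM^{x_{<j}}}$ introduced at step $j$ depends non-trivially on earlier outcomes $x_i$, so when we subsequently tower over such an $x_i$, both $(1+g^{\U}_{x_{<i}}(x_i))^2(1+g^{\U'}_{x_{<i}}(x_i))^2$ and the downstream factors share a common dependence on $x_i$ which blocks direct factorization. The crude bound $K^{\U,\U'}\le 2\epsilon^2$ at this stage would produce an extra factor of $(1+O(\epsilon^2))^t = \exp(\Omega(t\epsilon^2))$, too weak by a factor of $d$ in the exponent. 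The fix---this is where the ``series of invocations of Holder's'' enters---is to apply Holder or Cauchy-Schwarz at each tower step to separate the per-step factor from the accumulated $1+cK_j$ terms, controlling the higher conditional moments $\E[x_i]*{R_i^k}$ via the same expansion as the per-step bound so they remain of the form $1+O(K)$; one must then show that the accumulated Holder losses telescope into the target $\exp(O(t\epsilon^2/d))$ bound rather than blowing up.
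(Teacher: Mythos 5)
Your plan matches the paper's in outline: the same per-step bound $\E[x]{(1+g^{\U}(x))^2(1+g^{\U'}(x))^2}\le 1 + O(K^{\U,\U'})$, the same appeal to Theorem~\ref{thm:moments}, and you correctly diagnose the obstruction---that the $K$ factor introduced at step $j$ depends on the earlier outcomes $x_{<j}$, so a naive tower does not factor. But the proposal halts exactly where the proof gets hard: the third paragraph defers the entire recursive Holder argument to ``one must then show that the accumulated Holder losses telescope,'' and that step \emph{is} the lemma.

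Concretely, here is what the paper supplies that your sketch does not. When you use Holder to strip off the $(1+O(K))$ factor produced by the innermost tower step---so that the $K$ factor can be routed to a $(t-1)$-th moment, which is what Theorem~\ref{thm:moments} controls---the conjugate exponent necessarily \emph{raises} the exponent on the leftover $\Psi^{\U,\U'}_{x_{<t-1}}$ from $2$ to $2\cdot\frac{t-1}{t-2}$. Iterating over all $t-1$ steps produces the geometric exponent sequence $\alpha_i = 2\left(\frac{t-1}{t-2}\right)^i$, and the argument only works because $\alpha_{t-1}\le 2\left(1+\frac{1}{t-2}\right)^{t-2}\le 2e$ stays bounded. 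Correspondingly, the per-step bound must be established for arbitrary constant exponent $c$ up to $2e$, not just $c=2$: this is what the expansion $\E[x]{(1+g^{\U}(x))^c(1+g^{\U'}(x))^c}\le 1 + Z^{\U,\U'}(c)$ in eq.~\eqref{eq:gc} accomplishes, relying on the estimate $\E[x]{|g^{\U}(x)|^a|g^{\U'}(x)|^b}\le \epsilon\,\E[x]{g^{\U}(x)^2}$ for $a\ge b$, $a\ge 2$. Your proposal covers neither device: the per-step bound is stated only for exponent $2$, and the description of the fix gives no mechanism for keeping the growing $\Psi$-exponent bounded. With a less careful choice---say Cauchy--Schwarz at every tower step, which doubles the $\Psi$-exponent each time---the accumulated exponent blows up in $t$ and the bound fails. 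So while you have correctly located all the ingredients and the obstruction, the lemma remains unproved until you write out the $\alpha_i$ recursion, generalize the per-step bound to constant $c$, and verify $\alpha_{t-1}\le 2e$.
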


\begin{proof}
	Consider any $a,b\in\Z$ for which $a\ge b$ and $a \ge 2$. For any $x_{<t-1}$, let $p$ denote the distribution over measurement outcomes when the POVM $\calM^{x_{<t-1}}$ is applied to $\mm$. We have by Part~\ref{fact:absbound} of Fact~\ref{fact:basic} that
	\begin{equation}
		\E[x\sim p]{g^{\U}_{x_{<t-1}}(x)^a\cdot g^{\U'}_{x_{<t-1}}(x_t)^b} \le \epsilon \E[x\sim p]{g^{\U}_{x_{<t-1}}(x)^2}.
	\end{equation} Recalling Part~\ref{fact:zero} of Fact~\ref{fact:basic}, we conclude that for any $x_{<t-1}$ and constant degree $c\ge 2$,
	\begin{multline}
		\E[x\sim p]*{(1 + g^{\U}_{x_{<t-1}}(x))^c(1 + g^{\U'}_{x_{<t-1}}(x))^c} \\ 
		\le 1 + O_c(\E[x\sim p]{g^{\U}_{x_{<t-1}}(x)^2}) + O_c(\E[x\sim p]{g^{\U'}_{x_{<t-1}}(x)^2}) + O_c(\phi^{\U,\U'}_{x_{<t-1}}) \triangleq 1 + Z_{x_{<t-1}}^{\U,\U'}(c).\label{eq:gc}
	\end{multline}
	By abuse of notation, for POVM $\calM$, define $Z^{\U,\U'}_{\calM}(c)$ in the obvious way.

	% For $t > 0$, $\upsilon = \epsilon^2/d^{1/4}$, and transcript $x_{<t-1}$, let $\calF_{x_{<t-1}}(\upsilon)$ denote the event (over $\U,\U'$) that the quantity $Z^{\U,\U'}_{x_{<t-1}}$ exceeds $\upsilon$. When the context is clear, we will just denote this by $\calF$. 
	For $\alpha_i \triangleq 2\cdot \left(\frac{t-1}{t-2}\right)^i$, we have that \begin{align}
		\MoveEqLeft \E[x_{<t},\U,\U']*{\left(\Psi^{\U,\U'}_{x_{<t}}\right)^{\alpha_i}} \\
		&\le \E[x_{<t-1},\U,\U']*{\left(\Psi^{\U,\U'}_{x_{<t-1}}\right)^{\alpha_i} \cdot \left(1 + Z^{\U,\U'}_{x_{<t-1}}(\alpha_i)\right)} \label{eq:apply_gc} \\
		&\le \E[x_{<t-1},\U,\U']*{\left(\Psi^{\U,\U'}_{x_{<t-1}}\right)^{\alpha_i (t-1)/(t-2)}}^{(t-2)/(t-1)}\cdot \E[x_{<t-1},\U,\U']*{\left(1 + Z^{\U,\U'}_{x_{<t-1}}(\alpha_i)\right)^{t-1}}^{1/(t-1)}\label{eq:holder} \\
		&\le \E[x_{<t-1},\U,\U']*{\left(\Psi^{\U,\U'}_{x_{<t-1}}\right)^{\alpha_{i+1} (t-1)/(t-2)}}\cdot \E[x_{<t-1},\U,\U']*{\left(1 + Z^{\U,\U'}_{x_{<t-1}}(\alpha_i)\right)^{t-1}}^{1/(t-1)}.
	\end{align} where \eqref{eq:apply_gc} follows by \eqref{eq:gc}, and \eqref{eq:holder} follows by Holder's. Unrolling this recurrence, we conclude that
	\begin{align}
		\E[x_{<t},\U,\U']*{\left(\Psi^{\U,\U'}_{x_{<t}}\right)^2} &\le \prod^{t-1}_{i=1}\E[x_{<i},\U,\U']*{\left(1 + Z^{\U,\U'}_{x_{<i}}(\alpha_{t-1-i})\right)^{t-1}}^{1/(t-1)}\label{eq:unroll} \\
		&\le \prod^{t-1}_{i=1}\E[x_{<i},\U,\U']*{\left(1 + Z^{\U,\U'}_{x_{<i}}(2e)\right)^{t-1}}^{1/(t-1)},\label{eq:2e} \\
		&\le \sup_{\calM}\E[\U,\U']*{\left(1 + Z^{\U,\U'}_{\calM}(2e)\right)^{t-1}}
	\end{align} where \eqref{eq:2e} follows by the fact that for $1\le i \le t - 1$, $\alpha_{t-1-i} \le 2\left(1 + \frac{1}{t-2}\right)^{t-2} \le 2e$, and the supremum in the last step is over all POVMs $\calM$. The proof is complete upon noting that $Z^{\U,\U'}_{\calM}$ is at most a constant multiple of $K^{\U,\U'}_{\calM}$ defined in \eqref{eq:Kdef} and invoking Theorem~\ref{thm:moments}.
\end{proof}

We can now complete the proof of Theorem~\ref{thm:main}. Note that the following argument is very similar to the argument we used to complete the proof of Theorem~\ref{thm:paninski_weak}.

\begin{proof}[Proof of Theorem~\ref{thm:main}]
Given a transcript $x_{<t}$ and $\U,\U'\in U(d)$, let $\bone{\calE_{x_{<t}}^{\U,\U'}(\tau)}$ denote the indicator of whether $\phi_{x_{<t}}^{\U,\U'} > \tau$. We have that \begin{align}
	\E[\U,\U']*{\Psi_{x_{<t}}^{\U,\U'}\cdot \phi_{x_{<t}}^{\U,\U'}} &= \E[\U,\U']*{\Psi_{x_{<t}}^{\U,\U'}\cdot \phi_{x_{<t}}^{\U,\U'} \cdot \left(\bone{\calE_{x_{<t}}^{\U,\U'}(\tau)} + \bone{\calE_{x_{<t}}^{\U,\U'}(\tau)^c}\right)} \\
	&\le \epsilon^2\cdot \E[\U,\U']*{\Psi_{x_{<t}}^{\U,\U'}\cdot \bone{\calE_{x_{<t}}^{\U,\U'}(\tau)}} + \tau\cdot \E[\U,\U']*{\Psi_{x_{<t}}^{\U,\U'}\cdot \bone{\calE_{x_{<t}}^{\U,\U'}(\tau)^c}} \\
	&\le \epsilon^2\cdot \underbrace{\E[\U,\U']*{\Psi_{x_{<t}}^{\U,\U'}\cdot \bone{\calE_{x_{<t}}^{\U,\U'}(\tau)}}}_{\circled{B}_{x_{<t}}} + \tau\cdot \underbrace{\E[\U,\U']*{\Psi_{x_{<t}}^{\U,\U'}}}_{\circled{G}_{x_{<t}}},
\end{align} where in the second step we used Part~\ref{fact:absbound} of Fact~\ref{fact:basic}. Note that for any transcript $x_{<t}$, $\Delta(x_{<t})^2 = \E[\U,\U']{\Psi^{\U,\U'}_{x_{<t}}} = \circled{G}_{x_{<t}}$, so by this and the fact that the likelihood ratio between two distributions always integrates to 1, \begin{equation}
	\E[x_{<t}\sim \nullhypo{t-1}]*{\frac{1}{\Delta^{(t-1)}(x_{<t})}\cdot \circled{G}_{x_{<t}}} = \E[x_{<t}\sim \nullhypo{t-1}]{\Delta^{(t-1)}(x_{<t})} = 1.\label{eq:one}
\end{equation}

We conclude that \begin{align}
	Z_t &\le \epsilon^2 \cdot \E[x_{<t}\sim \nullhypo{t-1}]*{\frac{1}{\Delta^{(t-1)}(x_{<t})}\cdot \circled{B}_{x_{<t}}} + \tau \cdot \E[x_{<t}\sim \nullhypo{t-1}]*{\frac{1}{\Delta^{(t-1)}(x_{<t})}\cdot \circled{G}_{x_{<t}}} \\
	&\le \epsilon^2 \cdot (1 + O(\epsilon^2/d))^{t-1} \E[x_{<t}\sim \nullhypo{t-1}]*{\circled{B}_{x_{<t}}} + \tau,
\end{align} where the second step follows by Lemma~\ref{lem:denom} and \eqref{eq:one}. So the challenge is to show that $\tau$ is the dominant quantity above, for appropriately chosen $\tau$.

Pick $\tau = \epsilon^2/d^{4/3}$. To upper bound $\E[x_{<t}\sim \nullhypo{t-1}]{\circled{B}_{x_{<t}}}$, apply Cauchy-Schwarz to get \begin{align}
	\E[x_{<t}\sim \nullhypo{t-1}]*{\circled{B}_{x_{<t}}} &\le \E[x_{<t}\sim \nullhypo{t-1},\U,\U']*{\left(\Psi^{\U,\U'}_{x_{<t}}\right)^2}^{1/2}\cdot \Pr[x_{<t}\sim \nullhypo{t-1},\U,\U']*{\calE_{x_{<t}}^{\U,\U'}(\tau)}^{1/2} \\
	&\le \E[x_{<t}\sim \nullhypo{t-1},\U,\U']*{\left(\Psi^{\U,\U'}_{x_{<t}}\right)^2}^{1/2}\cdot \exp\left(-\Omega(d^{1/3})\right),
\end{align}
where the second step follows by Theorem~\ref{thm:rho_tail}.

This, together with Lemma~\ref{lem:psi_main}, says that $\E[x_{<t}\sim \nullhypo{t-1}]{\circled{B}_{x_{<t}}}$ is indeed negligible for $t = o(d^{4/3}/\epsilon^2)$. For such $t$, $Z_t = O(\epsilon^2/d^{4/3})$, so by Lemma~\ref{lem:rewrite}, $\KL{\althypo{N}}{\nullhypo{N}} = o(1)$ as desired.
The desired result follows from Fact~\ref{fact:basic-lowerbound}.
\end{proof}

%!TEX root = ./new_quantum_paninski.tex

\section{Haar Tail Bounds}
\label{subsec:tails}

In this section we complete the proof of the two key estimates, Theorems~\ref{thm:moments} and~\ref{thm:rho_tail}, which were crucial to our proof of Theorem~\ref{thm:main}.
The following concentration inequality is key to our analysis:

\begin{theorem}[\cite{meckes2013spectral}, Corollary 17, see also \cite{anderson2010introduction}, Corollary 4.4.28]
	Equip $M \triangleq U(d)^k$ with the $L_2$-sum of Hilbert-Schmidt metrics. If $F: M\to \R$ is $L$-Lipschitz, then for any $t > 0$: \begin{equation}
		\Pr[(\U_1,...,\U_k)\in M]{|F(\U_1,...,\U_k) - \E{F(\U_1,...,\U_k)}|\ge t} \le e^{-d t^2/12L^2},
	\end{equation} where $\U_1,...,\U_k$ are independent unitary matrices drawn from the Haar measure.
	\label{thm:deusexmachina}
\end{theorem}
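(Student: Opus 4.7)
The plan is to prove this via the standard route: log-Sobolev inequality $\Rightarrow$ Gaussian concentration, specialized to the unitary group. Three ingredients are needed: (i) a log-Sobolev inequality (LSI) for the Haar measure on $U(d)$ with constant of order $1/d$; (ii) tensorization of LSI to the product space $U(d)^k$; and (iii) Herbst's argument to convert LSI into sub-Gaussian concentration for Lipschitz functions.

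For step (i), I would view $U(d)$ as a compact Riemannian manifold under the bi-invariant metric induced by the Hilbert-Schmidt inner product on its Lie algebra $\mathfrak{u}(d)$, and use the Bakry--\'Emery criterion: if the Ricci curvature of a compact Riemannian manifold is bounded below by $\rho > 0$, then the normalized volume measure satisfies a log-Sobolev inequality with constant proportional to $1/\rho$. A direct computation of the Ricci curvature of $U(d)$ (using that, for compact semisimple Lie groups with bi-invariant metrics, $\mathrm{Ric}(X,X) = \tfrac14 \sum_i \|[X,e_i]\|^2$ for an orthonormal basis $\{e_i\}$ of $\mathfrak{u}(d)$) yields $\mathrm{Ric} \gtrsim d \cdot \mathrm{id}$ on the Hilbert-Schmidt metric, giving an LSI with constant $C/d$ for an absolute $C > 0$.

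For step (ii), the LSI tensorizes: if $(M,\mu)$ satisfies LSI with constant $c$, then $(M^k, \mu^{\otimes k})$ equipped with the $L_2$-sum metric satisfies LSI with the same constant $c$ (this is standard, via the additivity of entropy and Dirichlet forms under products). So Haar$^{\otimes k}$ on $U(d)^k$ with the $L_2$-sum of Hilbert--Schmidt metrics satisfies LSI with constant $\Theta(1/d)$. For step (iii), I would apply Herbst's argument: given an $L$-Lipschitz $F$, controlling the Laplace transform $\E[e^{\lambda(F-\E F)}]$ via the LSI (applied to $e^{\lambda F/2}$) yields $\E[e^{\lambda(F-\E F)}] \le \exp(\lambda^2 L^2/(2c d))$; a Chernoff bound then gives the claimed sub-Gaussian tail with exponent of the form $-c' d t^2 / L^2$, with the constant $12$ arising from tracking the precise constants through Bakry--\'Emery and Herbst.

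The main obstacle is bookkeeping the absolute constants — specifically, getting the correct normalization of the Hilbert--Schmidt metric relative to the one in which the Ricci lower bound is stated, and threading this through Herbst's argument to land on the stated constant $12$. An alternative I would consider if the constants are troublesome is to bypass Ricci curvature and use a stochastic calculus proof on Brownian motion on $U(d)$ (as in Meckes's exposition), but the LSI route is the most transparent and extends painlessly to $U(d)^k$ via tensorization.
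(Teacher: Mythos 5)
The paper does not prove Theorem~\ref{thm:deusexmachina}; it is cited verbatim from \cite{meckes2013spectral} (see also \cite{anderson2010introduction}), so there is no ``paper's proof'' to compare against. Your outline (LSI $\Rightarrow$ tensorization $\Rightarrow$ Herbst) is the right skeleton and is indeed how Meckes--Meckes argue, but step (i) as you state it has a genuine gap. The group $U(d)$ is \emph{not} semisimple: its Lie algebra $\mathfrak{u}(d)$ has a one-dimensional center spanned by $i\Id$. Plugging $X = i\Id$ into the bi-invariant curvature formula $\mathrm{Ric}(X,X) = \tfrac14\sum_i\|[X,e_i]\|^2$ gives $\mathrm{Ric}(i\Id, i\Id) = 0$, so the Ricci curvature of $U(d)$ is \emph{not} bounded below by a positive constant, and the Bakry--\'Emery criterion cannot be applied to $U(d)$ directly. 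This is not a bookkeeping issue; the hypothesis of Bakry--\'Emery simply fails.

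The direct Bakry--\'Emery computation works for $SU(d)$ (which is semisimple, with $\mathrm{Ric} \gtrsim d$ in the Hilbert--Schmidt metric), but to get from there to $U(d)$ with the correct $\Theta(1/d)$ LSI constant you need an additional argument. The standard one, and the one Meckes uses, is a coupling: Haar measure on $U(d)$ is the pushforward of the product of Haar on $SU(d)$ with the uniform distribution on a \emph{short} arc $\theta \in [0, 2\pi/d)$ under $(\theta, V) \mapsto e^{i\theta}V$. In the metric pulled back from $(U(d), \|\cdot\|_{HS})$, the circle factor has length scale $\sqrt{d}\cdot(2\pi/d) = O(1/\sqrt{d})$, so it contributes an LSI constant of order $1/d$ (matching, rather than dominating, the $SU(d)$ contribution); the product and quotient steps then preserve the LSI with the right constant. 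Note that if one naively decomposed $U(d)$ as $SU(d) \times U(1)$ with the full circle, the $U(1)$ factor would be a circle of radius $\Theta(\sqrt{d})$ in the Hilbert--Schmidt metric, giving an LSI constant of order $d$ and ruining the bound --- so the ``short arc'' point is essential, not cosmetic. Steps (ii) and (iii) of your proposal (tensorization to $U(d)^k$ and Herbst) are fine once the single-copy LSI is in place.
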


\subsection{Proof of Theorem~\ref{thm:moments}}
\label{subsec:moments_proof}

For convenience, Theorem~\ref{thm:moments} is restated below:

\moments*

To get intuition for this, consider again the special case where $\calM$ is an orthogonal POVM given by an orthonormal basis of $\C^d$. Then $p$ is uniform over $[d]$ and \begin{equation}K^{\U,\U'}_{\calM} =\frac{\epsilon^2}{d}\sum^d_{i=1}(\delta(\U_i) + \delta(\U'_i))^2 \le \frac{2\epsilon^2}{d}\sum^d_{i=1}(\delta(\U_i)^2 + \delta(\U'_i)^2),
\label{eq:specialK}
\end{equation} where $\delta(\cdot)$ is defined in \eqref{eq:deltadef}. The following is a standard fact: \begin{fact}
	For random unit vector $v\in\S^{d-1}$, $\E{\delta(v)^2} = \frac{1}{d+1}$.\label{fact:randomunitvector}
\end{fact}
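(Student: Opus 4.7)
The plan is to rewrite $\delta(v)$ as an operator inner product and then reduce the second moment to a single invocation of the classical Haar second-moment identity on $\C^d$. Specifically, let $D\in\R^{d\times d}$ denote the diagonal matrix whose first $d/2$ diagonal entries equal $+1$ and whose last $d/2$ equal $-1$. Then $\delta(v) = \iprod{D, vv^\dagger}$, so
\[
\delta(v)^2 \;=\; \iprod{D\otimes D,\; vv^\dagger\otimes vv^\dagger}.
\]

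The main ingredient I will invoke is the identity
\[
\E[v]{vv^\dagger \otimes vv^\dagger} \;=\; \frac{\Id + F}{d(d+1)},
\]
where $F$ denotes the swap operator on $\C^d\otimes \C^d$ and $v$ is a Haar-random unit vector in $\C^d$ (equivalently, the first column of a Haar element of $U(d)$). This identity is a consequence of Schur-Weyl duality: the left-hand side commutes with $U\otimes U$ for every $U\in U(d)$ and hence lies in the two-dimensional span of $\{\Id, F\}$, and the two coefficients are pinned down by tracing against $\Id$ and $F$ respectively.

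Substituting this into the display for $\delta(v)^2$ and using $\iprod{D\otimes D, \Id} = (\Tr D)^2 = 0$ together with $\iprod{D\otimes D, F} = \Tr(D^2) = d$ yields $\E{\delta(v)^2} = d/(d(d+1)) = 1/(d+1)$, as desired. There is no real obstacle; the only slightly nontrivial step is the Haar second-moment formula itself, which is entirely standard and can alternatively be verified by noting that $(|v_1|^2,\ldots,|v_d|^2)$ is uniformly distributed on the probability simplex, which gives $\E{|v_i|^4} = 2/(d(d+1))$ and $\E{|v_i|^2|v_j|^2} = 1/(d(d+1))$ for $i\neq j$ and yields the same final answer by a direct expansion of $\delta(v)^2 = \sum_{i,j}\epsilon_i\epsilon_j |v_i|^2|v_j|^2$ together with $\sum_i \epsilon_i = 0$ and $\sum_i \epsilon_i^2 = d$.
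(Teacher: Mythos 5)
Your proof is correct, and it takes a genuinely different (though related) route from the paper's. The paper deliberately proves this fact via degree-$2$ Weingarten calculus — writing $\delta(v)$ as $\Tr(\Proj\,\U^\dagger\X'\U)$ for $\Proj = e_1e_1^\dagger$ and invoking Lemma~\ref{lem:weincalc} with $\Wg(e,d)=\tfrac{1}{d^2-1}$, $\Wg(\tau^*,d)=-\tfrac{1}{d(d^2-1)}$ — and it does so on purpose, noting that the same Weingarten machinery reappears in Lemma~\ref{lem:secondmoment}. You instead invoke the symmetric-subspace identity $\E{vv^\dagger\otimes vv^\dagger}=\tfrac{\Id+F}{d(d+1)}$, pair against $D\otimes D$, and read off the answer from $\Tr(D)=0$ and $\Tr(D^2)=d$; your alternative via the Dirichlet distribution of $(|v_1|^2,\ldots,|v_d|^2)$ is also correct and is the most elementary of the three. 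The two main approaches are essentially dual: the $q=2$ Weingarten formula is equivalent to the $\Id+F$ second-moment identity, so either choice pins down the same two scalars. What the paper's phrasing buys is reuse of exactly the same lemma and notation in the later, harder Weingarten computation; what yours buys is avoiding the Weingarten function altogether and making the cancellation $(\Tr D)^2=0$ transparent. One small notational caution: the paper's definition \eqref{eq:deltadef} writes $v_i^2$, but since $v$ is a column of a complex unitary this must be read as $|v_i|^2$ (as your identification $\delta(v)=\iprod{D,vv^\dagger}$ correctly presumes); it is worth saying this explicitly so that the appeal to the complex Haar second-moment formula (with constant $d(d+1)$ rather than the real-orthogonal $d(d+2)$) is clearly justified.
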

\noindent
While this follows immediately from moments of random unit vectors, for pedagogical purposes we will give a proof using Weingarten calculus, as it will be a crucial ingredient later on.
Recall that for every $q \in \N$, there exists a corresponding Weingarten function $\Wg(\cdot, d): \calS_q \to \R$~\cite{weingarten1978asymptotic,collins2003moments}.
In the special case of $q = 2$, the symmetric group $\calS_q$ consists of two elements $e, \tau^*$, namely, the identity and non-identity permutation, respectively, and we have that $\Wg(e,d) = \frac{1}{d^2 - 1}$ and $\Wg(\tau^*,d) = -\frac{1}{d(d^2-1)}$.
We then have:
\begin{lemma}[Degree-2 case of \cite{collins2003moments}, Lemma 4.3]
	Let $e,\tau^*$ denote the identity and non-identity permutation of $\Sym_2$ respectively. For $d\ge 2$ and any $\A,\B\in\C^{d\times d}$, we have that\footnote{Note that this looks different from the statement in \cite{collins2003moments} only because they work with normalized trace $\text{tr}(\cdot)\triangleq \frac{1}{d}\Tr(\cdot)$.}
	% \begin{align}
	% 	\E[\U]{\Tr((\A\U^{\dagger}\B\U)^2)} &= \sum_{\sigma,\tau\in \Sym_2}\iprod{\A}_{\sigma}\iprod{\B}_{\tau}\Wg(\sigma\tau^{-1},d) \\
	% 	&= \frac{1}{d^2-1}\left(\Tr(\A)^2\Tr(\B)^2 + \Tr(\A^2)\Tr(\B^2) - \frac{1}{d}\Tr(\A)^2\Tr(\B^2) - \frac{1}{d}\Tr(\A^2)\Tr(\B)^2\right),
	% \end{align}
	\begin{equation}
		\E[\U]{\Tr((\A\U^{\dagger}\B\U)^2)} = \sum_{\sigma,\tau\in \Sym_2}\iprod{\A}_{\sigma}\iprod{\B}_{\tau}\Wg(\sigma\tau^{-1},d) \; .
	\end{equation}
%	 where $\Wg(\cdot, \cdot)$ denotes the Waingarten functions.
	\label{lem:weincalc}
\end{lemma}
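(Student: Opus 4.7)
The plan is to apply the general Weingarten integration formula of \cite{collins2003moments} in the $n=2$ case and then chase indices. Expanding $\Tr((\A\U^\dagger\B\U)^2)$ in coordinates produces an eight-fold sum of the form $\sum A_{a_1 b_1}\bar\U_{c_1 b_1}B_{c_1 d_1}\U_{d_1 a_2}A_{a_2 b_2}\bar\U_{c_2 b_2}B_{c_2 d_2}\U_{d_2 a_1}$. The only randomness lives in the monomial $\U_{d_1 a_2}\U_{d_2 a_1}\bar\U_{c_1 b_1}\bar\U_{c_2 b_2}$ of bidegree $(2,2)$ in $\U$ and $\bar\U$, so only its expectation needs to be computed and everything else is a deterministic contraction against $\A^{\otimes 2}$ and $\B^{\otimes 2}$.

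Next I would invoke the Collins–Śniady formula at $n=2$,
\begin{equation}
  \mathbb{E}_{\U}\!\left[\U_{i_1 j_1}\U_{i_2 j_2}\bar\U_{i'_1 j'_1}\bar\U_{i'_2 j'_2}\right] = \sum_{\sigma,\tau\in\Sym_2}\Big(\prod_{k=1}^{2}\delta_{i_k,\,i'_{\sigma(k)}}\Big)\Big(\prod_{k=1}^{2}\delta_{j_k,\,j'_{\tau(k)}}\Big)\,\Wg(\sigma\tau^{-1},d),
\end{equation}
and substitute it into the expanded trace. Collapsing the eight Kronecker deltas associated to each $(\sigma,\tau)$ reduces the summand to a pure contraction depending only on $\A$ and $\B$. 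The key combinatorial observation is that the $\sigma$-identifications of row indices of $\U$ with row indices of $\bar\U$ link consecutive $\B$-factors around the trace into loops whose lengths equal the cycle lengths of $\sigma$, contributing $\prod_i \Tr(\B^{|C_i|}) = \iprod{\B}_\sigma$; symmetrically, the $\tau$-identifications of column indices chain the $\A$-factors into loops matching the cycles of $\tau$, producing $\iprod{\A}_\tau$. Summing over $\Sym_2\times\Sym_2$ gives exactly the claimed expression.

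The main obstacle is purely index bookkeeping: keeping straight which set of deltas cycles the $\A$-factors versus the $\B$-factors, and checking that the Weingarten weight emerges as $\Wg(\sigma\tau^{-1},d)$ rather than $\Wg(\tau\sigma^{-1},d)$ or $\Wg(\sigma\tau,d)$. Since $|\Sym_2|=2$ there are only four $(\sigma,\tau)$ pairs, and since $\Wg(\cdot,d)$ is a class function and every element of $\Sym_2$ is self-inverse, any ambiguity from convention choices is absorbed by a relabeling of summation variables. As a sanity check, specializing to $\A=\B=\Id$ forces both sides to equal $d^2$, and specializing to $\A=\B=\diag(1,-1,0,\ldots,0)$ pins down the coefficient structure, both of which are consistent with the values $\Wg(e,d)=\tfrac{1}{d^2-1}$ and $\Wg(\tau^*,d)=-\tfrac{1}{d(d^2-1)}$ recorded just before the lemma.
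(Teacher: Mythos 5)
The paper does not actually prove Lemma~\ref{lem:weincalc}; it cites it directly as the degree-2 case of Lemma~4.3 in Collins (2003). Your plan — expand in indices and apply the Collins--\'{S}niady integration formula — is exactly how that result is derived, so the approach is the right one. But there is a genuine gap in the execution, caused by a subtle ambiguity that you inherited from the paper's typesetting.

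As written, the left-hand side $\Tr((\A\U^\dagger\B\U)^2)$ reads as the \emph{trace of the square}. That is the expansion you wrote out: the last $\U$-factor in each copy carries the column index of the \emph{other} copy's $\A$-factor ($\U_{d_1 a_2}$ and $\U_{d_2 a_1}$). But with that expansion, your ``symmetric'' contraction claim fails. The $\sigma$-deltas do produce $\iprod{\B}_\sigma$ since $B_{c_k d_k}$ has both of its indices local to copy $k$; the $\tau$-deltas, however, identify $a_{k+1}$ (cyclically) with $b_{\tau(k)}$, so the $\A$-factors close up according to $\tau\kappa$ where $\kappa$ is the full $2$-cycle, not according to $\tau$. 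Concretely, $\tau=e$ yields $\Tr(\A^2)=\iprod{\A}_{\tau^*}$ and $\tau=\tau^*$ yields $(\Tr\A)^2=\iprod{\A}_e$. That shift is not absorbed by relabeling, because it also threads through $\Wg(\sigma\tau^{-1})$: carrying through the substitution gives $\sum_{\sigma,\tau}\iprod{\A}_\sigma\iprod{\B}_\tau\Wg(\sigma\tau^*\tau^{-1},d)$, which is a different quantity from the displayed right-hand side. Your own sanity check at $\A=\B=\Id$ actually exposes the problem: the right-hand side evaluates to $d^2$ as you say, but $\E\Tr((\U^\dagger\U)^2)=\Tr(\Id)=d$.

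The resolution is that the displayed identity is really about the \emph{square of the trace}, $\bigl(\Tr(\A\U^\dagger\B\U)\bigr)^2$ — this is what matches Collins' Lemma~4.3, what every downstream use in the paper (e.g.\ the proofs of Fact~\ref{fact:randomunitvector} and Lemma~\ref{lem:secondmoment}) actually requires, and what makes $\A=\B=\Id$ give $d^2$ on both sides; the paper's LaTeX has the exponent inside the trace rather than outside. If you instead expand $\bigl(\Tr(\A\U^\dagger\B\U)\bigr)^2 = \sum A_{a_1 b_1}\bar\U_{c_1 b_1}B_{c_1 d_1}\U_{d_1 a_1}\cdot A_{a_2 b_2}\bar\U_{c_2 b_2}B_{c_2 d_2}\U_{d_2 a_2}$, the $\tau$-deltas now identify $a_k$ with $b_{\tau(k)}$ within the same copy and the contraction really is symmetric, yielding $\iprod{\B}_\sigma\iprod{\A}_\tau$; swapping the dummy labels $\sigma\leftrightarrow\tau$ then gives the stated formula (using that $\Wg$ is a class function and every element of $\Sym_2$ is self-inverse). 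With that one correction to the expansion your proof goes through.
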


\begin{proof}[Proof of Fact~\ref{fact:randomunitvector}]
	Let $\Proj\triangleq e_1e_1^{\dagger}$ and note that $\delta(v)$ is identical in distribution to the quantity $\Tr(\Proj\U^{\dagger}\X'\U)$. By Lemma~\ref{lem:weincalc}, \begin{equation}
		\E[v]{\delta(v)^2} = \E[\U]{\Tr(\Proj\U^{\dagger}\X'\U)^2} = \sum_{\sigma,\tau\in\Sym_2}\iprod{\Proj}_{\sigma}\iprod{\X'}_{\tau}\Wg(\sigma\tau^{-1},d).
	\end{equation}  Note that $\iprod{\X'}_{\tau} = d\cdot\bone{\tau = \tau^*}$ and $\iprod{\Proj}_{\sigma} = 1$ for all $\sigma\in\Sym_2$, so \begin{equation}
		\E[v]{\delta(v)^2} = d\left(\frac{1}{d^2 - 1} - \frac{1}{d(d^2-1)}\right) = \frac{1}{d+1}
	\end{equation} as claimed.
\end{proof}

Furthermore, it is known that $\delta(v)^2$ concentrates around its expectation. So if the columns of $\U$ were actually \emph{independent} random unit vectors, we would conclude that $K^{\U,\U'}_{\calM} = O(\epsilon^2/d)$ with high probability and obtain \eqref{eq:Kdef} for the special case where $\calM$ is orthogonal.

To circumvent the issue of dependence among the columns of Haar-random $\U$, we will invoke Theorem~\ref{thm:deusexmachina}. The following is a toy version of the more general result that we show later in our proof of Theorem~\ref{thm:moments} (see Lemma~\ref{lem:Ztail}):

\begin{lemma}
	For any $t > 0$, $\Pr[\U\sim\calD]*{\left(\sum^d_{i=1}\delta(\U_i)^2\right)^{1/2} \ge 1 + t} \le \exp\left(-\Omega(dt^2)\right)$.
	\label{lem:basicZtail}
\end{lemma}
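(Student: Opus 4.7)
The plan is to apply Theorem~\ref{thm:deusexmachina} to the function $F(\U) \triangleq \left(\sum_{i=1}^d \delta(\U_i)^2\right)^{1/2}$ viewed as a map $U(d) \to \R$ equipped with the Hilbert-Schmidt metric. To do so I need two ingredients: a Lipschitz bound on $F$ with an absolute constant, and an upper bound on $\E[F]$ by $1$ (so that the tail bound around the mean becomes a one-sided tail bound around $1$).

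For the Lipschitz bound, I would first rewrite $F$ in a more geometric form. Observing that $\delta(\U_i) = (\U^\dagger \X' \U)_{ii}$, we get $\sum_{i=1}^d \delta(\U_i)^2 = \|\diag(\U^\dagger \X' \U)\|_{HS}^2$, so $F(\U) = \|\diag(\U^\dagger\X'\U)\|_{HS}$. Since the diagonal-extraction map is a contraction in Hilbert-Schmidt norm and the map $\U \mapsto \U^\dagger \X' \U$ is $2\|\X'\|_{op} = 2$-Lipschitz in Hilbert-Schmidt norm (via the telescoping estimate $\|\U^\dagger \X'\U - \V^\dagger \X'\V\|_{HS} \le \|(\U-\V)^\dagger \X' \U\|_{HS} + \|\V^\dagger \X'(\U-\V)\|_{HS}$ and submultiplicativity), the reverse triangle inequality gives $|F(\U) - F(\V)| \le 2\|\U - \V\|_{HS}$. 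So $F$ is $L$-Lipschitz with $L = 2$.

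For the expectation bound, by Fact~\ref{fact:randomunitvector} each $\U_i$ is a uniformly random unit vector with $\E[\delta(\U_i)^2] = 1/(d+1)$, so $\E\bigl[\sum_i \delta(\U_i)^2\bigr] = d/(d+1) < 1$. Jensen's inequality applied to the concave function $\sqrt{\cdot}$ then gives $\E[F] \le \sqrt{d/(d+1)} \le 1$. Combining this with Theorem~\ref{thm:deusexmachina} applied with $k = 1$ and $L = 2$:
\begin{equation*}
\Pr_{\U}\bigl[F(\U) \ge 1 + t\bigr] \le \Pr_{\U}\bigl[F(\U) - \E[F] \ge t\bigr] \le \exp(-d t^2/48) = \exp(-\Omega(dt^2)),
\end{equation*}
which is the claimed bound.

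The proof is essentially a direct application of the log-Sobolev-based concentration inequality, so there is no major obstacle; the only mild subtlety is recognizing that $F$ admits a clean expression in terms of $\diag(\U^\dagger \X' \U)$, which immediately yields the Lipschitz estimate and lets us avoid passing through the columns $\U_i$ individually (which are correlated, and whose joint Lipschitz dependence on $\U$ would be awkward to track directly).
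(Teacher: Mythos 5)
Your proof is correct and matches the paper's argument essentially step for step: rewrite $F(\U)=\|\diag(\U^\dagger\X'\U)\|_{HS}$, bound the Lipschitz constant by $2$ via contraction of the diagonal-extraction map plus the telescoping estimate for $\U\mapsto\U^\dagger\X'\U$, bound $\E[F]\le 1$ via Jensen and Fact~\ref{fact:randomunitvector}, and invoke Theorem~\ref{thm:deusexmachina}. No differences worth noting.
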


\begin{proof}
	By Jensen's and Fact~\ref{fact:randomunitvector}, \begin{equation}
		\E*{\left(\sum^d_{i=1}\delta(\U_i)^2\right)^{1/2}} \le \E*{\sum^d_{i=1}\delta(\U_i)^2}^{1/2} = \left(\frac{d}{d+1}\right)^{1/2}\le 1.
	\end{equation} We wish to invoke Theorem~\ref{thm:deusexmachina}, so it suffices to show that $G:\U\mapsto (\sum^d_{i=1}\delta(\U_i)^2)^{1/2}$ is $O(1)$-Lipschitz. Recalling the definition of $\X'$ from Construction~\ref{constr:quantum}, note that \begin{equation}\left(\sum^d_{i=1}\delta(\U_i)^2\right)^{1/2} = \norm{\diag(\U^{\dagger}\X'\U)}_{HS}.
	\label{eq:deltaHS}
	\end{equation} Take any $\U,\V\in U(d)$ and note \begin{align}
		G(\U) - G(\V) &\le \sqrt{\sum^d_{i=1}\abs{(\U^{\dagger}\X'\U)_{ii} - (\V^{\dagger}\X'\V)_{ii}}^2} \\
		&\le \norm{\U^{\dagger}\X'\U - \V^{\dagger}\X'\V}_{HS} \\
		&= \norm{\U^{\dagger}\X'(\U - \V) + (\V - \U)^{\dagger}\X'\V}_{HS} \\
		&\le 2\norm{\X'}_2\norm{\U - \V}_{HS} = 2\norm{\U - \V}_{HS},
	\end{align} where the first step follows by Cauchy-Schwarz. So $G(\U)$ is 2-Lipschitz as desired.
\end{proof}

Eq.~\eqref{eq:specialK}, Fact~\ref{fact:randomunitvector}, and Lemma~\ref{lem:basicZtail}, together with integration by parts, allow us to conclude Theorem~\ref{thm:moments} in the special case where $\calM$ is orthogonal. Guided by the arguments above, we now proceed to our actual proof of Theorem~\ref{thm:moments}.

\begin{proof}[Proof of Theorem~\ref{thm:moments}]

Let $\calM$ be an arbitrary POVM. We first show a bound on $\E[\U,\U']{K^{\U,\U'}_{\calM}}$, generalizing Fact~\ref{fact:randomunitvector}:

\begin{lemma}
	$\E[\U,\U']{K^{\U,\U'}_{\calM}} \le \frac{\epsilon^2}{d+1}$.
	\label{lem:secondmoment}
\end{lemma}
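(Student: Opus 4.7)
\medskip

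\noindent\textbf{Proof proposal.} The plan is to first eliminate the cross term in the expansion of $(g^{\U}_{\calM}(x) + g^{\U'}_{\calM}(x))^2$ using Haar invariance, and then to reduce the remaining ``diagonal'' terms to a pointwise second moment computation that can be carried out exactly via Weingarten calculus, exactly as in the proof of Fact~\ref{fact:randomunitvector} but applied to general trace-one PSD matrices.

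First I would expand
\begin{equation*}
K^{\U,\U'}_{\calM} = \E[x\sim p]*{g^{\U}_{\calM}(x)^2} + 2\,\E[x\sim p]*{g^{\U}_{\calM}(x) g^{\U'}_{\calM}(x)} + \E[x\sim p]*{g^{\U'}_{\calM}(x)^2}.
\end{equation*}
Taking expectations over the independent Haar unitaries $\U, \U'$ and pulling the $\U$-expectation inside the (deterministic) $x$-expectation, the cross term becomes the product $\E[\U]{g^{\U}_{\calM}(x)}\cdot \E[\U']{g^{\U'}_{\calM}(x)}$ for each fixed $x$. By Haar invariance, $\E[\U]{\U^{\dagger}\X\U} = \tfrac{\Tr(\X)}{d}\,\Id = 0$, so $\E[\U]{g^{\U}_{\calM}(x)} = \iprod{\hatM_x, \E[\U]{\U^{\dagger}\X\U}} = 0$ and the cross term vanishes. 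Hence $\E[\U,\U']{K^{\U,\U'}_{\calM}}$ reduces (up to a constant factor of $2$ from the two identical diagonal terms) to $\E[x\sim p, \U]{g^{\U}_{\calM}(x)^2}$.

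Next I would bound $\E[\U]{g^{\U}_{\calM}(x)^2}$ pointwise in $x$ using degree-$2$ Weingarten calculus, just as in Fact~\ref{fact:randomunitvector}. Applying the standard $n=2$ moment formula for Haar unitaries to $\E[\U]{\iprod{\hatM_x,\U^{\dagger}\X\U}^2}$ and plugging in $\iprod{\hatM_x}_e = 1$, $\iprod{\hatM_x}_{\tau^*} = \Tr(\hatM_x^2)$, $\iprod{\X}_e = \Tr(\X)^2 = 0$, $\iprod{\X}_{\tau^*} = \Tr(\X^2) = d\epsilon^2$, together with $\Wg(e,d) = \tfrac{1}{d^2-1}$ and $\Wg(\tau^*,d) = -\tfrac{1}{d(d^2-1)}$, yields the closed form
\begin{equation*}
\E[\U]*{g^{\U}_{\calM}(x)^2} = \frac{\epsilon^2\bigl(d\,\Tr(\hatM_x^2) - 1\bigr)}{d^2 - 1}.
\end{equation*}
Finally, since $\hatM_x$ is trace-one PSD, its eigenvalues lie in $[0,1]$ and so $\Tr(\hatM_x^2) \le 1$; hence $d\,\Tr(\hatM_x^2) - 1 \le d-1$, giving $\E[\U]{g^{\U}_{\calM}(x)^2} \le \tfrac{\epsilon^2}{d+1}$ for every $x$. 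Integrating against $p$ preserves this pointwise bound and, combined with the cross-term cancellation above, yields the claim.

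I do not anticipate any substantial obstacle: the vanishing of $\E[\U]{g^{\U}_{\calM}(x)}$ is immediate, the Weingarten computation is the same as the one used to prove Fact~\ref{fact:randomunitvector} (only with a rank-one projection replaced by the general trace-one PSD matrix $\hatM_x$), and the final pointwise estimate $\Tr(\hatM_x^2) \le 1$ is elementary. The only mildly delicate point is handling POVMs with a continuous outcome set, which is dispatched by interpreting the Radon--Nikodym density of $p$ with respect to the dominating measure $\mu$ as $\Tr(M_x)/d$ and noting that the entire argument above is pointwise in $x$.
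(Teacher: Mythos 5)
Your proof takes essentially the same route as the paper's: expand $K^{\U,\U'}_{\calM}$ into diagonal and cross terms, kill the cross term via Haar invariance (the paper's step $\E[\U,\U']{\phi^{\U,\U'}_{\calM}}=0$ is the same cancellation), bound the diagonal contribution pointwise in $x$ by the identical degree-two Weingarten computation, and finish with $\Tr(\hatM_x^2)\le 1$. One small note: as you yourself flag, the two diagonal terms contribute a factor of two, so the argument actually yields $\E[\U,\U']{K^{\U,\U'}_{\calM}} \le \frac{2\epsilon^2}{d+1}$ rather than $\frac{\epsilon^2}{d+1}$; the paper's own proof shares this harmless constant slip, and since only the $O(\epsilon^2/d)$ scaling is used downstream, nothing breaks.
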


\begin{proof}
	Note that $K^{\U,\U'}_{\calM} = 2\E[x\sim p]*{g^{\U}_{\calM}(x)^2} + 2\phi^{\U,\U'}_{\calM}$. We will suppress the subscripts for the rest of this proof. Clearly we have that $\E[\U,\U']{\phi^{\U,\U'}} = 0$, so it remains to bound $\E[x\sim p,\U,\U']*{g^{\U}(x)^2}$. Let $\tau^*\in\Sym_2$ denote the non-identity permutation. For any fixed $x$, by Lemma~\ref{lem:weincalc}, \begin{align}
		\E[\U]{g^{\U}(x)^2} &= \sum_{\sigma,\tau\in\Sym_2}\iprod{\X}_{\tau}\iprod{\hatM_x}_{\sigma}\Wg(\sigma\tau^{-1},d) \\
		&= \iprod{\X}_{\tau^*} \left(\Tr(\hatM_x^2)\cdot \Wg(e,d) + \Tr(\hatM_x)^2\cdot \Wg(\tau^*,d)\right) \\
		&= \epsilon^2\cdot d\cdot\left(\frac{1}{d^2 - 1}\Tr(\hatM_x^2) - \frac{1}{d(d^2-1)}\right) \le \frac{\epsilon^2}{d+1},\label{eq:gzsingle}
	\end{align} where the second step follows by the fact that $\iprod{\X}_{\tau} = \epsilon^2\cdot d\cdot \bone{\tau= \tau^*}$, and the last step follows by the fact that $\Tr(\hatM_x^2) \le 1$. As \eqref{eq:gzsingle} holds for any outcome $x$, $\E[x\sim p,\U,\U']{g^{\U}(x)^2} \le \frac{\epsilon^2}{d+1}$ as desired.
\end{proof}

We next show the following tail bound generalizing Lemma~\ref{lem:basicZtail}:

\begin{lemma}
	There are absolute constants $c,c'>0$ such that 
	\begin{equation}
		\Pr[\U,\U'\sim\calD]*{K^{\U,\U'}_{\calM} > c\epsilon^2/d + t} \le \exp(-c' t d^2/\epsilon^2))
	\end{equation} for any $t > c \epsilon^2/d$.
	\label{lem:Ztail}
\end{lemma}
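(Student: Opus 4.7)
The plan is to apply the concentration inequality of Theorem~\ref{thm:deusexmachina} to the function $F(\U,\U') \triangleq \sqrt{K^{\U,\U'}_{\calM}}$ on $M = U(d)^2$ equipped with the $L_2$-sum of Hilbert--Schmidt metrics. For this I need (i) an upper bound on $\E[F]$ of order $O(\epsilon/\sqrt{d})$ and (ii) a Lipschitz constant $L = O(\epsilon/\sqrt{d})$, so that Theorem~\ref{thm:deusexmachina} yields $\Pr[F > \E[F] + u] \le \exp(-\Omega(d^2 u^2/\epsilon^2))$, exactly the form needed. Part (i) is immediate from Jensen's inequality and Lemma~\ref{lem:secondmoment}:
\[
\E[F] \le \sqrt{\E[K^{\U,\U'}_{\calM}]} \le \frac{\epsilon}{\sqrt{d+1}}.
\]

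The main technical issue is (ii). Writing $F(\U_1,\U_2) = \|g^{\U_1}+g^{\U_2}\|_{L^2(p)}$, the reverse triangle inequality in $L^2(p)$ reduces the task to bounding $\|g^{\U}-g^{\V}\|_{L^2(p)}$ by $\|\U-\V\|_{HS}$. A naive estimate using $|\iprod{\hatM_x,\Delta}|\le \|\hatM_x\|_{HS}\,\|\Delta\|_{HS}\le \|\Delta\|_{HS}$ with $\Delta\triangleq \U^\dagger\X\U-\V^\dagger\X\V$ only gives $\|g^\U-g^\V\|_{L^2(p)} = O(\epsilon)\|\U-\V\|_{HS}$, which is a factor of $\sqrt{d}$ too weak and would produce the useless exponent $\exp(-\Omega(dt/\epsilon^2))$. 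The key trick is to unfold the $L^2(p)$ norm via $dp(x) = \tfrac{1}{d}\Tr(M_x)\,d\mu(x)$ and then apply an operator Cauchy--Schwarz of the form $\iprod{M_x,\Delta}^2 \le \Tr(M_x)\cdot \Tr(M_x\Delta^2)$ (proved by writing $\iprod{M_x,\Delta}=\Tr(M_x^{1/2}\cdot M_x^{1/2}\Delta)$ and applying Cauchy--Schwarz in the Hilbert--Schmidt inner product); this lets me collapse the resulting integral using the POVM completeness relation $\int M_x\,d\mu(x)=\Id$:
\[
\|g^\U - g^\V\|_{L^2(p)}^2 \;=\; \frac{1}{d}\int \frac{\iprod{M_x,\Delta}^2}{\Tr(M_x)}\,d\mu(x) \;\le\; \frac{1}{d}\int \Tr(M_x\Delta^2)\,d\mu(x) \;=\; \frac{1}{d}\|\Delta\|_{HS}^2.
\]
Combined with $\|\Delta\|_{HS}\le 2\epsilon\|\U-\V\|_{HS}$ (as in the proof of Lemma~\ref{lem:basicZtail}), this yields Lipschitz constant $O(\epsilon/\sqrt{d})$ in each argument, hence overall $L = O(\epsilon/\sqrt{d})$ on $U(d)^2$.

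Theorem~\ref{thm:deusexmachina} then gives $\Pr[F \ge \E[F] + u] \le \exp(-\Omega(d^2 u^2/\epsilon^2))$ for every $u>0$. To convert this to the claimed tail on $K^{\U,\U'}_{\calM}$, I observe that $K > c\epsilon^2/d + t$ implies $F > \sqrt{c\epsilon^2/d + t}\ge \sqrt{t}$, and for $c$ chosen sufficiently large relative to the absolute constant in $\E[F] \le C\epsilon/\sqrt{d}$, the hypothesis $t > c\epsilon^2/d$ forces $\E[F]\le \sqrt{t}/2$, so that $\sqrt{t}-\E[F]\ge \sqrt{t}/2$. Plugging $u = \sqrt{t}/2$ into the Gaussian-type tail above yields
\[
\Pr[K^{\U,\U'}_\calM > c\epsilon^2/d + t]\;\le\;\Pr[F > \E[F]+\sqrt{t}/2]\;\le\;\exp(-\Omega(d^2 t/\epsilon^2)),
\]
as required. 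The main obstacle is the Lipschitz improvement by $\sqrt{d}$: it is exactly the POVM completeness relation, applied through an operator Cauchy--Schwarz, that upgrades the crude Hilbert--Schmidt bound and is responsible for the extra factor of $d$ in the exponent.
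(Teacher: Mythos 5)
Your proposal is correct and follows essentially the same route as the paper: $F = \sqrt{K}$, Jensen plus Lemma~\ref{lem:secondmoment} for the mean, an $O(\epsilon/\sqrt{d})$ Lipschitz bound via the POVM completeness relation (the paper's Lemma~\ref{lem:Glipschitz} proves the same inequality $\iprod{M_x,\Delta}^2 \le \Tr(M_x)\Tr(M_x\Delta^2)$ by diagonalizing $\Delta$ and applying Jensen to the diagonal entries, which is your operator Cauchy--Schwarz in different clothing), and then Theorem~\ref{thm:deusexmachina}. Your conversion from the tail on $F$ to the tail on $K$ is slightly more careful than the paper's one-line ``take $s=\sqrt{t}$,'' but both reach the same conclusion.
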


\begin{proof}
	We wish to apply Theorem~\ref{thm:deusexmachina}. We will show that $F: (\U,\U')\mapsto \left(K^{\U,\U'}_{\calM}\right)^{1/2}$ is $L$-Lipschitz for $L = O(\epsilon/\sqrt{d})$. As $\E{F(\U,\U')} \le \E[\U,\U']*{K^{\U,\U'}_{\calM}}^{1/2} \le \frac{\epsilon}{\sqrt{d+1}}$ by Jensen's and Lemma~\ref{lem:secondmoment}, this would imply that for any $s > 0$, \begin{equation}
		\Pr[\U\sim\calD]*{F(\U,\U') > \frac{\epsilon}{\sqrt{d+1}} + s} \le e^{-d^2s^2/12\epsilon^2},
	\end{equation} from which the lemma would follow by taking $s = \sqrt{t}$.

	To show Lipschitz-ness, note that \begin{equation}
		F(\U,\U') = \E[x\sim p]*{\left(g^{\U}_{\calM}(x) + g^{\U'}_{\calM}(x)\right)^2} \le \E[x\sim p]*{g^{\U}_{\calM}(x)^2}^{1/2} + \E[x\sim p]*{g^{\U'}_{\calM}(x)^2}^{1/2},
	\end{equation} so the proof is complete given Lemma~\ref{lem:Glipschitz} below.
\end{proof}

\begin{lemma}
	The function $G: \U\mapsto \E[x\sim p]*{g^{\U}_{\calM}(x)^2}^{1/2}$ is $O(\epsilon/\sqrt{d})$-Lipschitz.
	\label{lem:Glipschitz}
\end{lemma}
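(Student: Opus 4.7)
The plan is to view $G(\U)$ as the norm $\|g^{\U}_{\calM}\|_{L_2(p)}$ of the function $x\mapsto g^{\U}_{\calM}(x)$, so that by the triangle inequality in $L_2(p)$,
\[
|G(\U) - G(\V)| \;\le\; \|g^{\U}_{\calM} - g^{\V}_{\calM}\|_{L_2(p)} \;=\; \E[x\sim p]*{\iprod{\hatM_x, D}^2}^{1/2},
\]
where $D \triangleq \U^{\dagger}\X\U - \V^{\dagger}\X\V$ is Hermitian. So it suffices to bound this second moment by $O(\epsilon^2/d)\cdot \|\U - \V\|_{HS}^2$.

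The key step, which saves a factor of $\sqrt{d}$ over a naive operator-norm bound, is to use both (i) that $\hatM_x$ is a density matrix and (ii) the POVM completeness relation $\int M_x \, d\mu = \Id$. For (i), diagonalize $\hatM_x = \sum_i \lambda_i \ketbra{e_i}$ with $\lambda_i\ge 0$ and $\sum_i \lambda_i = 1$; Jensen's inequality in the outer average over $\lambda_i$ together with $|\langle e_i|D|e_i\rangle|^2 \le \langle e_i|D^2|e_i\rangle$ (Cauchy--Schwarz, using $D = D^{\dagger}$) yields the pointwise bound
\[
\iprod{\hatM_x, D}^2 \;\le\; \iprod{\hatM_x, D^2}.
\]
For (ii), note $p(x)\cdot \hatM_x = \tfrac{1}{d} M_x$, so
\[
\E[x\sim p]*{\iprod{\hatM_x, D^2}} \;=\; \frac{1}{d}\Tr\!\left( D^2 \int M_x \, d\mu\right) \;=\; \frac{1}{d}\Tr(D^2) \;=\; \frac{1}{d}\|D\|_{HS}^2.
\]

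Finally, $\|D\|_{HS}$ is bounded exactly as in the end of the proof of Lemma~\ref{lem:basicZtail}: writing $D = \U^{\dagger}\X(\U-\V) + (\U-\V)^{\dagger}\X\V$ and using submultiplicativity $\|AB\|_{HS}\le \|A\|_{op}\|B\|_{HS}$ together with the unitary invariance of $\|\cdot\|_{op}$ gives $\|D\|_{HS} \le 2\|\X\|_{op}\|\U - \V\|_{HS} = 2\epsilon\|\U-\V\|_{HS}$. Combining the two estimates yields $|G(\U) - G(\V)| \le \frac{2\epsilon}{\sqrt{d}}\|\U-\V\|_{HS}$, as desired.

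The only nontrivial step is the Jensen-plus-completeness bound that replaces the $\iprod{\hatM_x, D}^2$ with $\iprod{\hatM_x, D^2}$ and then integrates the POVM out; everything else is standard triangle-inequality manipulation. I do not expect any obstacle; the calculation is short and does not require any additional structure on $\calM$.
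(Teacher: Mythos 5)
Your proposal is correct and follows essentially the same route as the paper: reduce to bounding $\E[x\sim p]{\iprod{\hatM_x,D}^2}$, establish the pointwise bound $\iprod{\hatM_x,D}^2\le\iprod{\hatM_x,D^2}$, and then use the completeness relation $\int M_x\,\d\mu=\Id$ to turn the expectation into $\frac{1}{d}\|D\|_{HS}^2$. The only cosmetic difference is in how the pointwise bound is obtained (you diagonalize $\hatM_x$ and combine Jensen with Cauchy--Schwarz, while the paper diagonalizes $D$ and applies Jensen directly with respect to the diagonal entries of $\hatM_x$), and both derivations give the identical inequality.
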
 

\begin{proof}
	Take any $\U,\V\in U(d)$ and note that by triangle inequality, \begin{equation}
		G(U) - G(V) \le \E[x\sim p]*{\left(g^{\U}_{\calM}(x) - g^{\V}_{\calM}(x)\right)^2}^{1/2},
	\end{equation} so it suffices to show \begin{equation}
		\E[x\sim p]*{\left(g^{\U}_{\calM}(x) - g^{\V}_{\calM}(x)\right)^2} \le O(\epsilon^2/d)\cdot\norm{\U - \U'}^2_{HS}.
		\label{eq:wantlipschitz}
	\end{equation}
	$\A\triangleq \U^{\dagger}\X\U - \V^{\dagger}\X\V$ is Hermitian, so write its eigendecomposition $\A = \W^{\dagger}\Sig\W$. Then \begin{equation}
		g^{\U}_{\calM}(x) - g^{\V}_{\calM}(x) = \iprod*{\hatM_x, \A} = \iprod*{\W \hatM_x \W^{\dagger},\Sig} = \iprod*{\diag(\W\hatM_x\W^{\dagger}),\Sig},
	\end{equation} so we may assume without loss of generality that $\A$ and $\hatM_x$ are diagonal, in which case by Jensen's, \begin{equation}
		\left(g^{\U}_{\calM}(x) - g^{\V}_{\calM}(x)\right)^2 = \iprod{\hatM_x,\A}^2 = \left(\sum^d_{i=1} (\hatM_x)_{ii}\A_{ii}\right)^2 \le \sum^d_{i=1} (\hatM_x)_{ii}\A^2_{ii}.
	\end{equation}
	Recalling the definition of $p$ and letting $\mu$ denote the measure over $\Omega(\calM)$ associated to $\calM$ (see Definition~\ref{defn:generic}), we see that the left-hand side of \eqref{eq:wantlipschitz} becomes \begin{multline}
		\frac{1}{d}\int_{\Omega(\calM)} \Tr(M_x)\cdot \iprod{\hatM_x,\A}^2 \, \d\mu = \frac{1}{d}\int_{\Omega(\calM)} \sum_{i\in[d]} \Tr(M_x)\cdot (\hatM_x)_{ii}\A^2_{ii} \, \d \mu = \frac{1}{d}\int_{\Omega(\calM)} \sum_{i\in[d]} (M_x)_{ii}\A^2_{ii}\\
		= \frac{1}{d}\norm{\A}^2_{HS} = \frac{1}{d}\norm{\U^{\dagger}\X(\U - \V) + (\V - \U)^{\dagger}\X\V}^2_{HS} \le \frac{2\epsilon^2}{d}\norm{\U - \V}^2_{HS},
	\end{multline} where the third step follows from the fact that $\int_{\Omega(\calM)} M_x\, \d\mu = \Id$, completing the proof of \eqref{eq:wantlipschitz}.
\end{proof}

To complete the proof of Theorem~\ref{thm:moments}, we would like to apply Fact~\ref{fact:stieltjes} to the random variable $Z\triangleq 1 + \gamma\cdot K^{\U,\U'}_{\calM}$ and the function $f(Z)\triangleq Z^n$. Note that this random variable is nonnegative and upper bounded by $1+C\cdot\gamma\cdot \epsilon^2$ for some absolute constant $C > 0$. So \begin{align}
	\E[\U,\U']*{\left(1 + \gamma K^{\U,\U'}_{\calM}\right)^n} &\le 2(1 + O(\gamma\epsilon^2/d))^n + \int^{1 + C\gamma\epsilon^2}_{1 + c\gamma\epsilon^2/d}nt^{n-1}\cdot e^{-\Omega(t\cdot d^2/\epsilon^2)}\ \d t \\
	&\le 2(1 + O(\gamma\epsilon^2/d))^n + \int^{\infty}_0 nt^{n-1}\cdot e^{-\Omega(t\cdot d^2/\epsilon^2)}\ \d t \\
	&= 2(1 + O(\gamma\epsilon^2/d))^n + n!\cdot O(\epsilon^2/d^2)^n \\
	&\le e^{O(\gamma\epsilon^2 n/d)},
\end{align} where in the last step we used that $n!\cdot O(\epsilon^2/d^2)^n$ is negligible when $n = o(d^2/\epsilon^2)$
\end{proof}

\subsection{Proof of Theorem~\ref{thm:rho_tail}}

For convenience, Theorem~\ref{thm:rho_tail} is restated below. Recall from the discussion in Section~\ref{sec:intuition} that this can be thought of as the ``quantum analogue'' of binomial tail bounds:

\rhotail*

\begin{proof}[Proof of Theorem~\ref{thm:rho_tail}]
	Define $G$ as in Lemma~\ref{lem:Glipschitz}. Fix any $\U'$ and consider the function $F_{\U'}: \U\mapsto \phi^{\U,\U'}_{\calM}$. 
	First note that \begin{equation}
		\E[\U]{F_{\U'}(\U)} = \E[x\sim p]{g^{\U'}_{\calM}(x)\cdot \E[\U]{g^{\U}_{\calM}(x)}}= 0
	\end{equation} by Part~\ref{fact:zero} of Fact~\ref{fact:basic}.
	Next, note that by Cauchy-Schwarz, \begin{equation}
		F_{\U'}(\U) - F_{\U'}(\V) \le \E[x\sim p]*{\left(g^{\U}_{\calM}(x) - g^{\V}_{\calM}(x)\right)^2}^{1/2} \cdot G(\U'),
	\end{equation}
	which by \eqref{eq:wantlipschitz} is $O(\epsilon/\sqrt{d})\cdot G(\U')$-Lipschitz.
	So for any fixed $\U'$, Theorem~\ref{thm:deusexmachina} implies \begin{equation}
		\Pr[\U]{\abs{F_{\U'}(\U)} > t} \le \exp\left(-C\cdot \frac{d^2t^2}{\epsilon^2 G(\U')^2}\right)\label{eq:integrand}
	\end{equation} for some absolute constant $C> 0$. We would like to integrate over $\U'$ to get a tail bound for $\phi^{\U,\U'}_{\calM}$ as a function of both $\U$ and $\U'$.

	To this end, we can apply Fact~\ref{fact:stieltjes} to the random variable $Y\triangleq G(\U') \in [0,\epsilon]$. Recall from \eqref{eq:gzsingle} and Jensen's that $\E{Y}\le\epsilon/\sqrt{d+1}$. Furthermore, by Lemma~\ref{lem:Glipschitz} and Theorem~\ref{thm:deusexmachina}, there is an absolute constant $C'>0$ such that \begin{equation}
		\Pr{Y > \epsilon/\sqrt{d+1} + s} \le \exp\left(-C'd^2s^2/\epsilon^2\right).\label{eq:Yconc}
	\end{equation}
	So we can take the parameters in Fact~\ref{fact:stieltjes} as follows: set $a \triangleq 2\epsilon/\sqrt{d+1}$, tail bound function $\tau(x)\triangleq \exp\left(-C'\cdot \frac{d^2}{\epsilon^2}\left(x - \frac{\epsilon}{\sqrt{d+1}}\right)^2\right)$ for absolute constant $C'>0$, and $f(Y)\triangleq \exp\left(-C\cdot\frac{d^2t^2}{\epsilon^2 Y^2}\right)$. By \eqref{eq:integrand}, $\Pr[\U,\U']*{\abs*{\phi^{\U,\U'}_{\calM}}>t} \le \E{f(Y)}$, and by Fact~\ref{fact:stieltjes}, \begin{equation}
		\E{f(Y)} \le 2\exp\left(-\Omega(d^3t^2/\epsilon^4)\right) + \int^{\epsilon}_{2\epsilon/\sqrt{d+1}} \frac{2C d^2t^2}{\epsilon^2 x^3} \exp\left(-\frac{d^2}{\epsilon^2} \left(\frac{Ct^2}{x^2} + C'\left(x - \frac{\epsilon}{\sqrt{d+1}}\right)^2\right)\right) \ \d x.
	\end{equation} Note that by AM-GM, for $x\ge 2\epsilon/\sqrt{d+1}$ we have that \begin{equation}
		\frac{Ct^2}{x^2} + C'\left(x - \frac{\epsilon}{\sqrt{d+1}}\right)^2 \ge 2t\cdot (C\cdot C')^{1/2}\cdot \left(1 - \frac{\epsilon/\sqrt{d+1}}{x}\right) \ge t\cdot (C\cdot C')^{1/2}.
	\end{equation} We conclude that \begin{equation}
		\Pr[\U,\U']*{\abs*{\phi^{\U,\U'}_{\calM}}>t} \le 2\exp\left(-\Omega(d^3t^2/\epsilon^4)\right) + \frac{2Cd^{7/2}t^2}{\epsilon^4}\cdot \exp\left(-\Omega\left(d^2t/\epsilon^2\right)\right).
	\end{equation} In particular, for $t \ge \Omega(\epsilon^2/d^{1.99})$, we have that \begin{equation}
		\Pr[\U,\U']*{\abs*{\phi^{\U,\U'}_{\calM}}>t} \ge \exp\left(-\Omega\left(\Min{\frac{d^3t^2}{\epsilon^4}}{\frac{d^2t}{\epsilon^2}}\right)\right)
	\end{equation} as claimed.
\end{proof}

\paragraph{Acknowledgments}

The authors would like to thank Ofer Zeitouni for pointing out the existence of Theorem~\ref{thm:deusexmachina}, and Robin Kothari and Jeongwan Haah for helpful preliminary discussions about quantum tomography.

\bibliographystyle{alpha}
\bibliography{biblio}

\appendix

%!TEX root = ./new_quantum_paninski.tex

\section{A Simple Tester Using \texorpdfstring{$O(d^{3/2}/\epsilon^2)$}{d32/eps2} Unentangled Measurements}
\label{subsec:tester}

In this section we prove the upper bound part of Theorem~\ref{thm:nonadaptive}, i.e.

\begin{theorem}
	Given $0 < \epsilon<1$ and copy access to $\rho$, there is an algorithm \textsc{TestMixed}($\rho,d,\epsilon)$ that makes unentangled measurements on $N = O(d^{3/2}/\epsilon^2)$ copies of $\rho$ and with probability $4/5$ distinguishes whether $\norm{\rho - \mm}_1\ge \epsilon$ or $\rho = \mm$.
	\label{thm:tester}
\end{theorem}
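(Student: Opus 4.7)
My plan is a reduction to classical $L_2$ uniformity testing via a random choice of measurement basis. The algorithm $\textsc{TestMixed}(\rho,d,\epsilon)$ samples a Haar-random $\U\sim\calD$, forms the orthogonal rank-one POVM $\calM=\{U_iU_i^\dagger\}_{i\in[d]}$ with $U_i$ the $i$-th column of $\U$, and measures each of $N=O(d^{3/2}/\epsilon^2)$ copies of $\rho$ with $\calM$ to obtain i.i.d.\ samples from the distribution $p_\U(i)\triangleq \iprod{U_iU_i^\dagger,\rho}$ on $[d]$. Feed these samples into any collision-based classical $L_2$ uniformity tester~\cite{chan2014optimal,diakonikolas2014testing,canonne2018testing} and return its verdict.

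Completeness is immediate: when $\rho=\mm$, the induced $p_\U$ is uniform on $[d]$ for every $\U$, so the tester accepts with probability at least $4/5$. For soundness, set $\A\triangleq \rho-\mm$, so $\Tr(\A)=0$ and $\|\A\|_1\ge\epsilon$; Cauchy-Schwarz then gives $\|\A\|_{HS}^2 \ge \|\A\|_1^2/d \ge \epsilon^2/d$. A direct $\Sym_2$ Weingarten calculation in the style of Lemma~\ref{lem:weincalc} yields
\begin{equation*}
\E[\U\sim\calD]*{\|p_\U-u\|_2^2}
\;=\; \E[\U]*{\|\diag(\U^{\dagger}\A\U)\|_{HS}^2}
\;=\; \frac{\|\A\|_{HS}^2}{d+1}
\;\ge\; \frac{\epsilon^2}{d(d+1)},
\end{equation*}
where $u$ denotes the uniform distribution on $[d]$. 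Thus the (null-centered) combined collision estimator $T$ has expectation at least $\epsilon_2^2 = \Omega(\epsilon^2/d^2)$ over the joint randomness of $\U$ and the measurement outcomes, and plugging this effective $L_2$-squared gap into the standard $O(1/(\sqrt{d}\,\epsilon_2^2))$ sample complexity of the collision-based $L_2$ uniformity tester gives exactly the claimed $N=O(d^{3/2}/\epsilon^2)$ bound.

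The technical crux will be converting this expectation statement into an honest probability-of-success bound by controlling the variance of $T$. Decomposing via the law of total variance, the within-$\U$ term is dominated by its null-hypothesis value, which is the classical collision-variance calculation and is fine. The across-$\U$ term is the variance of $F(\U)\triangleq\|p_\U-u\|_2^2$ with respect to Haar measure, which I would bound using Theorem~\ref{thm:deusexmachina}: a short Lipschitz calculation in the style of Lemma~\ref{lem:Glipschitz} shows that $F$ is $O(\|\A\|_{HS}\|\A\|_2)$-Lipschitz in Hilbert-Schmidt norm, from which Theorem~\ref{thm:deusexmachina} controls its fluctuations.

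The main obstacle is that this raw Lipschitz bound scales with $\|\A\|_2$, which can be comparable to $\|\A\|_{HS}$ when the eigenvalues of $\A$ are concentrated in a few outliers; in that regime the Lipschitz concentration alone is not strong enough relative to the mean $\|\A\|_{HS}^2/(d+1)$. The cleanest resolution I see is a second direct Weingarten computation, now for $\Sym_4$, of the fourth moment $\E[\U]*{F(\U)^2}$; exploiting the cancellations forced by $\Tr(\A)=0$ should yield $\E[\U]*{F(\U)^2}=O(\|\A\|_{HS}^4/d^2)$ unconditionally. A Paley--Zygmund argument then gives $F(\U)\ge c\,\|\A\|_{HS}^2/d$ with constant probability over $\U$, which combined with the within-$\U$ variance estimate suffices to make the overall tester succeed with probability at least $4/5$ as claimed.
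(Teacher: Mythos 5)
Your algorithm is identical to the paper's \textsc{TestMixed}: measure every copy in a single Haar-random orthonormal basis, then hand the outcomes to a classical $L_2$ uniformity tester. What differs is the soundness analysis, and your version is actually more careful than the one in Appendix~\ref{subsec:tester}. The paper's proof asserts ``we may assume $\rho = \Lam$ without loss of generality'' and then applies Lemma~\ref{lem:basicZtail} to the fixed matrix $\X'$; but unitary invariance of the random basis only lets one reduce to a \emph{diagonal} $\rho$ of arbitrary spectrum, not to the particular spectrum $\tfrac1d(1\pm\epsilon)$, so that step is at minimum under-justified. You instead work with a general trace-zero $\A = \rho - \mm$ satisfying $\|\A\|_{HS}^2 \geq \epsilon^2/d$, compute $\E_\U\|p_\U - u\|_2^2 = \|\A\|_{HS}^2/(d+1)$ correctly (it recovers Fact~\ref{fact:randomunitvector} as the case $\A=\X'$), and correctly diagnose why the Lipschitz route fails for general $\A$: the map $G(\U) = \|\diag(\U^\dagger\A\U)\|_{HS}$ is only $O(\|\A\|_2)$-Lipschitz, so Theorem~\ref{thm:deusexmachina} gives fluctuations of order $\|\A\|_2/\sqrt d$, which is the same order as $\E[G^2]^{1/2} = \|\A\|_{HS}/\sqrt{d+1}$ once $\A$ is close to rank one. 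The paper's $\Lam$ has the extremal ratio $\|\A\|_2/\|\A\|_{HS}=1/\sqrt d$, so the circularity is invisible there but is real in general. Your fix --- an $\Sym_4$ Weingarten evaluation of $\E_\U[F^2]$ exploiting $\Tr(\A)=0$ to show $\E[F^2]=O(\|\A\|_{HS}^4/d^2)=O(\E[F]^2)$, then Paley--Zygmund --- is a sound and in fact necessary route for arbitrary spectra.

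Two points to tighten. First, the $\Sym_4$ computation is genuine work (the columns $u_1,u_2$ of $\U$ are not independent, so $\E[X_1^2X_2^2]$ has Weingarten cross-terms) and you have only asserted its conclusion. Second, Paley--Zygmund yields only a universal constant lower bound on $\Pr_\U\bigl[\,\|p_\U-u\|_2 = \Omega(\epsilon/d)\,\bigr]$, which is necessarily strictly less than one, whereas the algorithm as written needs this probability close to one so that, after composing with the $9/10$ success guarantee of \textsc{TestUniformityL2}, the overall soundness reaches $4/5$. You need to amplify: repeat the entire procedure $O(1)$ times with independent Haar-random bases and a boosted-confidence classical tester, outputting $\mathsf{NO}$ by majority; this is absorbed into the $O(d^{3/2}/\epsilon^2)$ copy bound but should be stated. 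Finally, it is cleaner to treat \textsc{TestUniformityL2} as a black box once $\|p_\U - u\|_2 = \Omega(\epsilon/d)$ is established, rather than reasoning informally about its internal collision statistic $T$.
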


Our mixedness tester is extremely simple: pick a random orthogonal POVM corresponding to a Haar-random basis of $\C^d$, measure $O(d^{3/2}/\epsilon^2)$ copies of $\rho$ with this POVM, and use these measurement outcomes to check whether the distribution over measurement outcomes is too far (in $L_2$ distance) from uniform, in which case $\rho$ is far from $\mm$.

We will need the following result on classical uniformity testing in $L_2$.
We note that this result first appeared in~\cite{chan2014optimal,diakonikolas2014testing} with slightly incorrect proofs, which were fixed in~\cite{canonne2018testing}.

\begin{theorem}[\cite{chan2014optimal,diakonikolas2014testing,canonne2018testing}]
	Given $0 < \epsilon<1$ and sample access to a distribution $q$ over $[d]$, there is an algorithm \textsc{TestUniformityL2}($q,d,\epsilon)$ that uses $N = O(\sqrt{d}/\epsilon^2)$ samples from $q$ and with probability $9/10$ distinguishes whether $q$ is the uniform distribution over $[d]$ or $\epsilon/\sqrt{d}$-far in $L_2$ distance from the uniform distribution.
	\label{thm:ilias}
\end{theorem}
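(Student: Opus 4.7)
The plan is to prove Theorem~\ref{thm:tester} by reducing quantum mixedness testing to classical $L_2$ uniformity testing via a single Haar-random basis measurement. The algorithm \textsc{TestMixed} samples one Haar-random unitary $\U \in U(d)$, forms the rank-one POVM $\{\U_i\U_i^\dagger\}_{i=1}^d$ where $\U_i$ is the $i$-th column of $\U$, measures each of the $N = O(d^{3/2}/\epsilon^2)$ copies of $\rho$ with this non-adaptive POVM to obtain $N$ i.i.d.\ samples from the induced distribution $q(i) = \U_i^\dagger \rho \U_i$ on $[d]$, and returns the output of \textsc{TestUniformityL2}$(q,d,\epsilon')$ with $\epsilon' = c\epsilon/\sqrt d$ for a small absolute constant $c > 0$. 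By Theorem~\ref{thm:ilias} this costs $O(\sqrt d / \epsilon'^2) = O(d^{3/2}/\epsilon^2)$ samples, matching the copy budget. If $\rho = \mm$ then $q$ is exactly uniform and \textsc{TestUniformityL2} accepts with probability $\ge 9/10$; for the alternative case I need to show that with high probability over the Haar draw of $\U$, the induced $q$ has $L_2$ distance at least $c\epsilon/d = \epsilon'/\sqrt d$ from uniform, so that \textsc{TestUniformityL2} rejects with probability $\ge 9/10$ and a union bound delivers overall confidence $\ge 4/5$.

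The central quantity to control is
\begin{equation}
T(\U) \triangleq \sum_{i=1}^d (\U_i^\dagger \Delta \U_i)^2, \qquad \Delta \triangleq \rho - \mm,
\end{equation}
which equals the squared $L_2$ distance of $q$ from the uniform distribution on $[d]$ (since $\U_i^\dagger \mm \U_i = 1/d$). For the expectation I would apply the degree-$2$ Weingarten formula (Lemma~\ref{lem:weincalc}) to each summand: by rotational invariance of the Haar measure each column $\U_i$ is marginally a uniform unit vector, so a direct computation using $\Tr \Delta = 0$ gives $\E_\U[(\U_i^\dagger \Delta \U_i)^2] = \Tr(\Delta^2)/(d(d+1))$, hence $\E_\U[T] = \Tr(\Delta^2)/(d+1)$. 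Combining $\norm{\Delta}_1 \ge \epsilon$ with the Cauchy--Schwarz bound $\norm{\Delta}_1^2 \le d\Tr(\Delta^2)$ then yields $\E_\U[T] \ge \epsilon^2/(d(d+1)) = \Omega(\epsilon^2/d^2)$, precisely the threshold required by the classical tester at the target sample complexity.

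The main obstacle is the lower tail bound: upgrading the expectation into $T(\U) \ge c''\epsilon^2/d^2$ with probability at least $19/20$, which then composes cleanly with the $9/10$ confidence of the classical tester. My plan is to apply the Haar-Lipschitz inequality of Theorem~\ref{thm:deusexmachina} to $F(\U) \triangleq \sqrt{T(\U)} = \norm{\diag(\U^\dagger \Delta \U)}_{HS}$. Two applications of the triangle inequality together with the fact that conjugation by a unitary preserves the operator and Hilbert--Schmidt norms give $|F(\U) - F(\V)| \le \norm{\U^\dagger \Delta \U - \V^\dagger \Delta \V}_{HS} \le 2\norm{\Delta}_2 \norm{\U - \V}_{HS}$, so $F$ is $2\norm{\Delta}_2$-Lipschitz on $U(d)$ and hence sub-Gaussian around its mean with variance proxy $O(\norm{\Delta}_2^2/d)$. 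When $\norm{\Delta}_2^2$ is a small fraction of $\Tr(\Delta^2)$ (the ``spread-out'' regime, which includes the extremal flat spectrum that saturates the lower bound on $\E[T]$), this variance is negligible compared to $\E[T]$, so $\E[F]^2 \ge \E[T] - \mathrm{Var}(F) = \Omega(\E[T])$ and Lipschitz concentration around $\E[F]$ delivers $T \ge \E[T]/2$ except with probability $e^{-\Omega(d)}$. In the complementary ``peaky'' regime $\norm{\Delta}_2^2 \asymp \Tr(\Delta^2)$ the expectation already satisfies $\E[T] = \Omega(\norm{\Delta}_2^2/d) \gg \epsilon^2/d^2$, so a direct second-moment computation of $\E[T^2]$ via degree-$4$ Weingarten --- or, more crudely, Paley--Zygmund combined with $O(1)$-factor amplification by drawing independent Haar bases and taking a majority vote --- yields the same conclusion with an at most constant blow-up in the copy budget.

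Putting the pieces together: with probability $\ge 19/20$ over the (possibly repeated) Haar bases drawn, the induced distribution $q$ is either exactly uniform or at $L_2$ distance at least $c\epsilon/d$ from uniform; conditional on this event, \textsc{TestUniformityL2} correctly classifies the two cases with probability $\ge 9/10$; and a final union bound delivers the claimed success probability $\ge 4/5$ using $N = O(d^{3/2}/\epsilon^2)$ unentangled measurements on copies of $\rho$ in total.
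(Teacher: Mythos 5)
You have not proved the theorem you were assigned. The statement in question is Theorem~\ref{thm:ilias}, the classical $L_2$ uniformity tester of~\cite{chan2014optimal,diakonikolas2014testing,canonne2018testing}: given sample access to a distribution $q$ over $[d]$, decide with $O(\sqrt d/\epsilon^2)$ samples whether $q$ is uniform or $(\epsilon/\sqrt d)$-far from uniform in $L_2$. The paper does not prove this; it is cited as a black box from prior work. A proof would proceed along entirely classical lines --- typically by analyzing a collision statistic such as $\sum_{i<j}\bone{X_i=X_j}$ or the unbiased estimator $\sum_a (Y_a^2 - Y_a)/\binom{N}{2}$ of $\norm{q}_2^2$ where $Y_a$ is the count of symbol $a$ among $N$ draws, then bounding its variance and applying Chebyshev. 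There are no Haar measures, unitaries, or density matrices anywhere in that argument.

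What you have instead written is a proof of Theorem~\ref{thm:tester}, the quantum upper bound in Appendix~\ref{subsec:tester}, which \emph{uses} Theorem~\ref{thm:ilias} as a subroutine. Your opening sentence even says so explicitly. As a proof of Theorem~\ref{thm:tester} your sketch is broadly in the spirit of the paper's argument (Haar-random orthonormal POVM, reduction to $L_2$ uniformity testing, Lipschitz concentration of $\norm{\diag(\U^\dagger\Delta\U)}_{HS}$ via Theorem~\ref{thm:deusexmachina}), and your case split into ``spread-out'' versus ``peaky'' $\Delta$ is a reasonable refinement for handling general alternatives; but none of that engages with what Theorem~\ref{thm:ilias} actually asserts. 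To answer the question as posed, you would need to supply (or at minimum sketch) the collision-count analysis from the cited classical testing papers.
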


Certainly when $\rho = \mm$, for any orthogonal POVM corresponding to an orthonormal basis of $\C^d$, the induced distribution over measurement outcomes will be the uniform distribution over $d$ elements. The point is that when $\rho$ is $\epsilon$-far from maximally mixed, for a Haar-random orthogonal POVM, the induced distribution over measurement outcomes will be $O(\epsilon/\sqrt{d})$-far in $L_2$ distance from the uniform distribution over $d$ elements with high probability. So Theorem~\ref{thm:ilias} would imply an algorithm for testing mixedness which makes $O(d^{3/2}/\epsilon^2)$ unentangled, nonadaptive measurements. Formally, our algorithm is specified in Algorithm~\ref{alg:test_mixed} below.

\begin{algorithm}[h]\caption{\textsc{TestMixed}}
\label{alg:test_mixed}
\begin{algorithmic}[1]
	\State \textbf{Input}: $N \triangleq \Theta(d^{3/2}/\epsilon^2)$ copies of unknown state $\rho$
	\State \textbf{Output}: $\mathsf{NO}$ if $\norm{\rho - \frac{1}{d}\cdot\Id}_{1} \ge \epsilon$, $\mathsf{YES}$ if $\rho = \frac{1}{d}\cdot\Id$, with probability $4/5$.
			\State Sample a Haar-random unitary matrix $U\in\C^{d\times d}$.
			\State Define the POVM $\{\ket{U_i}\bra{U_i}\}_{1\le i\le d}$ and measure with this POVM $N$ times to get $N$ independent samples from the distribution $q$ over measurement outcomes.\label{step:POVM}
			\State Run \textsc{TestUniformityL2}($q,d,\epsilon/\sqrt{d}$) from Theorem~\ref{thm:ilias} to test uniformity of $q$.
			\If{$q$ far from uniform}
				\State Output $\mathsf{NO}$.
			\Else
				\State Output $\mathsf{YES}$.
			\EndIf 
\end{algorithmic}
\end{algorithm}

\begin{proof}
As the POVM defined in Step~\ref{step:POVM} of \textsc{TestMixed} is Haar-random, we may assume $\rho = \Lam$ without loss of generality, where $\Lam$ is the diagonal matrix whose first $d/2$ diagonal entries are $1/d + \epsilon/d$ and whose last $d/2$ diagonal entries are $1/d-\epsilon/d$. Also define the diagonal matrix $\X'$ whose first $d/2$ diagonal entries are $1$ and whose last $d/2$ diagonal entries are $-1$.

Let $q$ be the distribution over measurement outcomes, and let $u$ be the uniform distribution over $[d]$. Note that for any $i\in[d]$, the marginal probability $q_i = (\U^{\dagger}\Lam\U)_{ii}$, so \begin{equation}
	\norm{q - u}_2 = \left(\sum^d_{i=1}(q_i - 1/d)^2\right)^{1/2} = \norm{\diag(\U^{\dagger}\overline{\Lam}\U)}_{HS} = \frac{\epsilon}{d} \norm{\diag(\U^{\dagger}\X'\U)}_{HS}.\label{eq:trdiagsq}
\end{equation}

Recall from \eqref{eq:deltaHS} and the discussion at the beginning of Section~\ref{subsec:moments_proof} that we have a fine understanding of the expectation (Fact~\ref{fact:randomunitvector}) and tails (Lemma~\ref{lem:basicZtail}) of this quantity. Indeed, by \eqref{eq:deltaHS} and Lemma~\ref{lem:basicZtail}, we know that $\Pr[\U\sim\calD]{\norm{\diag(\U^{\dagger}\X'\U)}_{HS} \ge 1 + t} \le \exp\left(-\Omega(dt^2)\right)$. We conclude from this and \eqref{eq:trdiagsq} that $\norm{q - u}_2 \le 2\epsilon/d$ with probability $\exp(-\Omega(d))$, from which the theorem follows by the guarantees of \textsc{TestUniformityL2} in Theorem~\ref{thm:ilias}.
\end{proof}

%!TEX root = ./new_quantum_paninski.tex

\section{Chain Rule Proof of Theorem~\ref{thm:paninski}}
\label{app:paninski}

Here we give a proof of the tight $\Omega(\sqrt{d}/\epsilon^2)$ bound from Theorem~\ref{thm:paninski} using the chain rule. The only part of the proof of the weaker Theorem~\ref{thm:paninski_weak} that we need here is Lemma~\ref{lem:rewrite_paninski}, which we restate here for convenience.

\rewritepan*

Define $\epsilon'\triangleq \log\frac{1 + \epsilon}{1 - \epsilon}$ and note that for $\epsilon\le 1/2$, $\epsilon' \le 3\epsilon$. Given $x_{<t}\in[d]^{t-1}$, let $h(x_{<t})\in[d]^{d}$ denote the vector whose $j$-th entry is the number of occurrences of element $j\in[d]$ in $x_{<t}$. For any $h_1,h_2\in\N$, define \begin{equation}
	A^{h_1,h_2} \triangleq \frac{1}{2}\left((1-\epsilon)^{h_1}(1 + \epsilon)^{h_2} + (1 - \epsilon)^{h_2}(1 + \epsilon)^{h_1}\right)
\end{equation}
\begin{equation}
	B^{h_1,h_2} \triangleq \frac{1}{2}\left((1-\epsilon)^{h_1}(1 + \epsilon)^{h_2} - (1 - \epsilon)^{h_2}(1 + \epsilon)^{h_1}\right).
\end{equation}

\begin{fact}
	For any $x_{<t}\in[d]^{t-1}$,
	\begin{equation}
		\Delta(x_{<t}) = \prod^{d/2}_{a=1}A^{h(x_{<t})_{2a-1}, h(x_{<t})_{2a}}
	\end{equation}
	\begin{equation}
		\E[z,z'\sim\{\pm 1\}^{d/2}]*{\iprod{z,z'} \cdot \Psi^{z,z'}_{x_{<t}}} = \sum^{d/2}_{a=1}\left(B^{h(x_{<t})_{2a-1}, h(x_{<t})_{2a}}\right)^2\cdot \prod_{a'\neq a}\left(A^{h(x_{<t})_{2a-1}, h(x_{<t})_{2a}}\right)^2.
	\end{equation}
	\label{fact:ABs}
\end{fact}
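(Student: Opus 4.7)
The proof is a direct bookkeeping computation that exploits the product structure of the Rademacher distribution $\calD = \mathrm{Unif}(\brc{\pm 1}^{d/2})$ across the $d/2$ pairs of coordinates. The plan has two parallel strands, one for each identity.

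First I would rewrite the factor $1 + g^z(x_i) = 1 + \epsilon(-1)^{x_i} z_{\lceil x_i/2\rceil}$ so that it depends only on $z_a$ where $a=\lceil x_i/2\rceil$: when $x_i = 2a-1$ this is $1 - \epsilon z_a$, and when $x_i = 2a$ this is $1 + \epsilon z_a$. Hence for any transcript $x_{<t}$, grouping by pair index gives the factorization
\begin{equation}
\prod^{t-1}_{i=1}(1 + g^z(x_i)) = \prod^{d/2}_{a=1}(1 - \epsilon z_a)^{h(x_{<t})_{2a-1}}(1 + \epsilon z_a)^{h(x_{<t})_{2a}}.
\end{equation}
Since the $z_a$'s are mutually independent under $\calD$, taking $\E_z$ exchanges with the product over $a$, and on each factor I just average over $z_a \in \{\pm 1\}$; this yields exactly $A^{h_{2a-1},h_{2a}}$ by definition, proving the first identity.

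For the second identity, I would first expand $\iprod{z,z'} = \sum_{a=1}^{d/2} z_a z'_a$ and apply linearity of expectation. The integrand $\Psi^{z,z'}_{x_{<t}}$ factors as a product over pairs in both $z$ and $z'$, so $(z,z')$ has a product law across pairs. For each fixed $a$, the expectation $\E_{z,z'}[z_a z'_a \Psi^{z,z'}_{x_{<t}}]$ therefore splits as the product of (i) the pair-$a$ contribution $\E_{z_a,z'_a}[z_a z'_a\,(1-\epsilon z_a)^{h_{2a-1}}(1+\epsilon z_a)^{h_{2a}}(1-\epsilon z'_a)^{h_{2a-1}}(1+\epsilon z'_a)^{h_{2a}}]$ and (ii) the product over $a'\neq a$ of the corresponding pair contributions without the $z_{a'}z'_{a'}$ insertion. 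By independence of $z_a$ and $z'_a$, both expectations factor further as squares of a one-variable expectation; a short parity case analysis ($z_a=\pm 1$) shows that the one-variable expectation with the $z_a$ insertion equals $B^{h_{2a-1},h_{2a}}$, whereas without the insertion it equals $A^{h_{2a-1},h_{2a}}$. Summing over $a$ produces the claimed formula.

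There is no real obstacle here; the computation is entirely routine once one observes the pair-wise product structure. The only thing to be careful about is the parity bookkeeping that distinguishes odd indices $2a-1$ (which pick up $1-\epsilon z_a$) from even indices $2a$ (which pick up $1+\epsilon z_a$), and the sign flip under $z_a \mapsto -z_a$ that converts the averaged product into $B$ instead of $A$ once the extra factor of $z_a$ is inserted.
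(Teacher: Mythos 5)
Your proof is correct. The paper states this as a ``Fact'' without proof, so there is no official argument to compare against, but the computation you give is the natural and essentially unique route: factor $\prod_i(1+g^z(x_i))$ into a product over the $d/2$ coordinate pairs using the parity rule $1+g^z(2a-1)=1-\epsilon z_a$, $1+g^z(2a)=1+\epsilon z_a$, exchange expectation with the product by independence, and read off $A$ or $B$ depending on whether the extra $z_a$ factor is inserted. One small observation your derivation surfaces: in the paper's second displayed identity, the subscripts inside the product over $a'\neq a$ should read $h(x_{<t})_{2a'-1}$ and $h(x_{<t})_{2a'}$ (with $a'$, not $a$); your argument yields the correct indexing.
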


We can now complete the proof of Theorem~\ref{thm:paninski}.

\begin{proof}[Proof of Theorem~\ref{thm:paninski}]
	We will show that as long as $t\le O(\sqrt{d}/\epsilon^2)$, $Z_t$ defined in \eqref{eq:paninski_main} is no greater than $O(\epsilon^2/\sqrt{d})$, from which the theorem follows. Fix any $t\le N$. Let $\Mul_t(U)$ denote the multinomial distribution over $d$-tuples $\vh$ for which $\sum^d_{i=1}h_i = t$. By Fact~\ref{fact:ABs} and \eqref{eq:innerexp}, \begin{equation}
		Z_t = \frac{2\epsilon^2}{d}\sum^{d/2}_{a=1}\E[\vh\sim \Mul_t(U)]*{\frac{(B^{h_{2a-1},h_{2a}})^2}{A^{h_{2a-1},h_{2a}}}\cdot \prod_{a'\neq a}A^{h_{2a'-1},h_{2a'}}} \triangleq \frac{2\epsilon^2}{d}\sum^{d/2}_{a=1}C^{(a)}_t.\label{eq:exp_term}
	\end{equation} Fix any $a\in[d]$; without loss of generality suppose $a = 1$. Then \begin{align}
		C^{(1)}_t &= \frac{1}{d^t}\sum_{t}\binom{\vh}{h_1\cdots h_d}\frac{(B^{h_1,\ell - h_1})^2}{A^{h_1,\ell - h_1}}\cdot \prod_{a'\neq 1}A^{h_{2a'-1},h_{2a'}} \\
		&= \frac{1}{d^t}\sum^t_{\ell =0}\sum^\ell_{h_1 = 0}\frac{t!}{h_1!(\ell - h_1)!(t - \ell)!}\frac{(B^{h_1,\ell - h_1})^2}{A^{h_1,\ell - h_1}} \sum_{h_3+\cdots+h_d = t - \ell}\binom{t - \ell}{h_3\cdots h_d}\prod_{a'\neq 1}A^{h_{2a'-1},h_{2a'}} \\
		&= \E[\ell\sim\Bin(t,2/d)]*{\E[h_1\sim\Bin(\ell,1/2)]*{\frac{(B^{h_1,\ell - h_1})^2}{A^{h_1,\ell - h_1}}}}.\label{eq:binom}
	\end{align}
	Next, note that for any $h_1,h_2$, \begin{equation}
		\frac{(B^{h_1,h_2})^2}{A^{h_1,h_2}} = A^{h_1,h_2} - \frac{2}{\left(\frac{1 + \epsilon}{1 - \epsilon}\right)^{h_1 - h_2} + \left(\frac{1 + \epsilon}{1 - \epsilon}\right)^{h_2 - h_1}}\le A^{h_1,h_2} - \exp\left(-(h_1 - h_2)^2\epsilon'^2/2\right).
	\end{equation} Clearly $\E[h_1\sim\Bin(\ell,1/2)]{A^{h_1,\ell - h_1}} = 1$, so for any $0\le \ell \le t$, \begin{align}
		\E[h_1\sim\Bin(\ell,1/2)]*{\frac{(B^{h_1,\ell - h_1})^2}{A^{h_1,\ell - h_1}}} &\le \E[h_1\sim\Bin(\ell,1/2)]*{1 - \exp\left(-(2h_1 - \ell)^2\epsilon'^2/2\right)} \\
		&\le \E[h_1\sim\Bin(\ell,1/2)]*{2(h_1 - \ell/2)^2\epsilon'^2} = \epsilon'^2 \cdot \ell,
	\end{align} where in the last step we used the expression for the variance of a binomial distribution. Substituting this into \eqref{eq:binom}, we conclude that $C^{(a)}_t \le 2\epsilon'^2 t/d \le 18\epsilon^2t/d$ for all $a\in[d]$, so for $t = O(\sqrt{d}/\epsilon^2)$, \eqref{eq:exp_term} is at most $O(\epsilon^2/\sqrt{d})$ as desired. 
\end{proof}

%!TEX root = ./new_quantum_paninski.tex

\section{Miscellaneous Technical Facts}
\label{app:facts}

\begin{fact}
	For any $x > 1, c > 0$, $\frac{2}{x^c + x^{-c}} \ge \exp\left(-\frac{c^2}{2}\log^2 x\right)$.
\end{fact}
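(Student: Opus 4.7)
The plan is to reduce the inequality to a standard bound relating $\cosh$ to a Gaussian, then prove that bound by a term-by-term comparison of Taylor series.

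First I would substitute $y \triangleq c \log x$. Since $x > 1$ and $c > 0$, we have $y > 0$, and $x^c = e^y$, $x^{-c} = e^{-y}$. The inequality then becomes
\begin{equation}
\frac{1}{\cosh(y)} \ge \exp(-y^2/2), \quad \text{or equivalently} \quad \cosh(y) \le \exp(y^2/2).
\end{equation}
So the problem reduces to proving this single-variable inequality for all $y \ge 0$ (the case $y=0$ being trivial, and the claim being even for both sides, so the sign of $y$ is irrelevant).

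For the reduced inequality, I would compare power series. Using $\cosh(y) = \sum_{k \ge 0} \frac{y^{2k}}{(2k)!}$ and $\exp(y^2/2) = \sum_{k \ge 0} \frac{y^{2k}}{2^k \cdot k!}$, it suffices to verify the term-by-term bound
\begin{equation}
\frac{1}{(2k)!} \le \frac{1}{2^k \cdot k!} \quad \text{for every } k \ge 0,
\end{equation}
i.e.\ that $2^k \cdot k! \le (2k)!$. This in turn follows immediately upon writing
\begin{equation}
\frac{(2k)!}{2^k \cdot k!} = \frac{\prod_{j=1}^{2k} j}{\prod_{j=1}^{k} (2j)} = \prod_{j=1}^{k} (2j-1) \ge 1,
\end{equation}
since on the right we have a product of positive odd integers. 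Summing the coefficient-wise bound over $k$ (all coefficients of $y^{2k}$ being nonnegative) gives $\cosh(y) \le \exp(y^2/2)$, and undoing the substitution $y = c \log x$ yields the claimed inequality.

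There is no real obstacle here; the only modest subtlety is to pick the right substitution to recognize the symmetric combination $x^c + x^{-c}$ as $2\cosh(c \log x)$, after which the problem is classical. An alternative route, should one prefer not to handle series, is to set $f(y) = y^2/2 - \log \cosh(y)$, note $f(0) = 0$, and compute $f'(y) = y - \tanh(y) \ge 0$ for $y \ge 0$; but the Taylor-coefficient comparison is cleaner and entirely elementary.
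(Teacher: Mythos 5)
Your proof is correct. The paper states this fact in Appendix~\ref{app:facts} without giving any proof, so there is no argument in the paper to compare against; your argument is a clean and complete one. The substitution $y = c\log x$ correctly reduces the claim to $\cosh(y) \le e^{y^2/2}$, and the term-by-term Taylor comparison (via $2^k k! \le (2k)!$, i.e. $\prod_{j=1}^k(2j-1)\ge 1$) establishes it rigorously. The alternative derivative argument you mention ($f'(y) = y - \tanh y \ge 0$) is also valid and equally elementary, so either would serve. For context, the fact is used in the proof of Theorem~\ref{thm:paninski} in Appendix~\ref{app:paninski}, where $x = \frac{1+\epsilon}{1-\epsilon}$ and $c = |h_1 - h_2|$, to lower bound $2/(x^c + x^{-c})$ by $\exp(-(h_1-h_2)^2\epsilon'^2/2)$ with $\epsilon' = \log\frac{1+\epsilon}{1-\epsilon}$.
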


\begin{fact}[Integration by parts]
	Let $a,b\in\R$. Let $Z$ be a nonnegative random variable satisfying $Z\le b$ and such that for all $x\ge a$, $\Pr{Z > x} \le \tau(x)$. Let $f: [0,b]\to\R_{\ge 0}$ be nondecreasing and differentiable. Then \begin{equation}
		\E{f(Z)} \le f(a)(1 + \tau(a)) + \int^b_a \tau(x) f'(x) \ \d\, x.
	\end{equation}
	\label{fact:stieltjes}
\end{fact}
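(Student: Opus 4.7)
The plan is to split $\E[f(Z)]$ according to whether $Z \le a$ or $Z > a$, and then use a Fubini-style tail integral representation on the ``tail'' piece to bring in the bound $\tau$.

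First I would observe that since $f$ is nondecreasing and nonnegative, and $Z$ is nonnegative,
\[
\E[f(Z)] = \E[f(Z)\bone{Z \le a}] + \E[f(Z)\bone{Z > a}] \le f(a) + \E[f(Z)\bone{Z > a}],
\]
so it suffices to show $\E[f(Z)\bone{Z > a}] \le f(a)\tau(a) + \int_a^b \tau(x) f'(x)\, \d x$.

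The key step is the identity
\[
f(Z)\bone{Z > a} = f(a)\bone{Z > a} + \int_a^b f'(x)\bone{Z > x}\, \d x,
\]
which follows from the fundamental theorem of calculus applied to $f$ on the interval $[a, Z]$ when $Z > a$ (and both sides vanish when $Z \le a$), together with the fact that $Z \le b$ so the upper limit of integration may safely be taken to be $b$. Taking expectations on both sides and swapping expectation with the integral (justified by Fubini/Tonelli since $f' \ge 0$ and $\bone{Z > x} \ge 0$) yields
\[
\E[f(Z)\bone{Z > a}] = f(a)\Pr[Z > a] + \int_a^b f'(x) \Pr[Z > x]\, \d x.
\]

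The last step is simply to plug in the hypothesis $\Pr[Z > x] \le \tau(x)$ for every $x \ge a$, giving $\Pr[Z > a] \le \tau(a)$ and $\int_a^b f'(x)\Pr[Z > x]\,\d x \le \int_a^b f'(x)\tau(x)\,\d x$. Combining everything produces the stated bound $f(a)(1 + \tau(a)) + \int_a^b \tau(x) f'(x)\, \d x$. There is no real obstacle here; the only minor care needed is to handle the case $Z \le a$ cleanly (so that the tail representation is valid) and to ensure $f$ being merely nondecreasing and differentiable is enough to apply the fundamental theorem of calculus — which it is, since $f' \ge 0$ makes the integral well-defined.
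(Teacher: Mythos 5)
Your proof is correct, and it takes a genuinely different route from the paper's. The paper works directly with the Riemann--Stieltjes integral against the CDF $g$ of $Z$: it writes $\E{f(Z)} = \int_0^b f \,\d g$, splits off the contribution of $[0,a]$, applies Stieltjes integration by parts on $[a,b]$, and then massages the boundary terms using $g(b)=1$ and $1-g(a)\le \tau(a)$. You instead use a layer-cake (Fubini/Tonelli) representation: the pointwise identity $f(Z)\bone{Z>a} = f(a)\bone{Z>a} + \int_a^b f'(x)\bone{Z>x}\,\d x$ followed by swapping expectation and integral, which converts the problem to a one-line application of the hypothesis $\Pr{Z>x}\le\tau(x)$. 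Both arrive at the same bound, but your approach sidesteps the boundary-term bookkeeping entirely and never needs to invoke properties of the Stieltjes integral; it also makes it visibly clear that $f$ need not be differentiable everywhere but only absolutely continuous with nonnegative derivative, which is exactly the regularity actually used. The paper's version keeps the CDF explicit, which is closer in spirit to a textbook ``tail bound implies moment bound'' argument, but it offers no advantage here. Either is a complete and correct proof.
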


\begin{proof}
	Let $g: [0,b]\to [0,1]$ denote the CDF of $Z$, so that for $x\ge a$, $1 - g(x) \le \tau(x)$. Then 
		\begin{align}
			\E{Z^n} &= \int^b_0 f(Z) \ \d\, g \le f(a) + \int^b_a f(Z) \ \d\, g \\
			&= f(a) + f(b) g(b) - f(a) g(a) - \int^b_a g(x) f'(x) \ \d\, x \\
			&= f(a) (1 - g(a)) + f(b) - (f(b) - f(a)) + \int^b_a (1 - g(x))f'(x)\ \d\, x \\
			&\le f(a)(1 + \tau(a)) + \int^b_a \tau(x) f'(x) \ \d \, x,
		\end{align} where the first integral is the Riemann-Stieltjes integral, the third step is integration by parts, the fourth step follows because $g(b) = 1$, and the last follows because $1 - g(x) \le \tau(x) \le 1$ for $x \ge a$.
\end{proof}

\end{document}